\newcommand{\Oh}{{O}}
\newcommand{\cost}{\mathsf{cost}}
\newtheorem{theorem}{Theorem}[section]
\newtheorem{property}[theorem]{Property}
\newtheorem{lemma}[theorem]{Lemma}
\theoremstyle{definition}   
\newtheorem{definition}[theorem]{Definition}
\newtheorem{oq}{Open Question}
\newtheorem{challenge}{Challenge}
\def\eps{\varepsilon}
\begin{document}

	\title{Near-Optimal Compression for the Planar Graph Metric\thanks{Supported in part by Israel Science Foundation grant 794/13.}}

\author{Amir Abboud	\thanks{Stanford University, Department of Computer Science, \texttt{abboud@cs.stanford.edu}.} 
		\and Pawe{\l} Gawrychowski \thanks{University of Haifa, Department of Computer Science, \texttt{gawry@mimuw.edu.pl}.} 
		\and Shay Mozes\thanks{Interdisciplinary Center Herzliya, Efi Arazi School of Computer Science, \texttt{smozes@idc.ac.il}.} 
		\and Oren Weimann\thanks{University of Haifa, Department of Computer Science, \texttt{oren@cs.haifa.ac.il}.}} 
		
		\date{}
		
	\maketitle

	\begin{abstract}
The Planar Graph Metric Compression Problem is to compactly encode the distances among $k$ nodes in a planar graph of size $n$. 
Two na\"ive solutions are to store the graph using $O(n)$ bits, or to explicitly store the distance matrix with $O(k^2 \log{n})$ bits.
The only lower bounds are from the seminal work of Gavoille, Peleg, Prennes, and Raz [SODA'01], who rule out compressions into a polynomially smaller number of bits, for \emph{weighted} planar graphs, but leave a large gap for unweighted planar graphs.
For example, when $k=\sqrt{n}$, the upper bound is $O(n)$ and their constructions imply an $\Omega(n^{3/4})$ lower bound.
This gap is directly related to other major open questions in labelling schemes, dynamic algorithms, and compact routing.
 
Our main result is a new compression of the planar graph metric into $\tilde{O}(\min (k^2 , \sqrt{k\cdot n}))$ bits, which is optimal up to log factors.
Our data structure breaks an $\Omega(k^2)$ lower bound of Krauthgamer, Nguyen, and Zondiner [SICOMP'14] for compression using minors, and the lower bound of Gavoille et al. for compression of weighted planar graphs.
This is an unexpected and decisive proof that weights can make planar graphs inherently more complex. 
Moreover, we design a new \emph{Subset Distance Oracle} for planar graphs with $\tilde O(\sqrt{k\cdot n})$ space, and $\tilde O(n^{3/4})$ query time.

Our work carries strong messages to related fields. In particular, the famous $O(n^{1/2})$ vs. $\Omega(n^{1/3})$ gap for distance labelling schemes in planar graphs \emph{cannot} be resolved with the current lower bound techniques.

	\end{abstract}

\newpage

\section{Introduction}

The shortest path metric of planar graphs is one of the most popular and well-studied metrics in Computer Science. 
Countless papers, surveys, and textbooks address the computational challenges that arise when dealing with it.
In this paper, we address a core problem about this metric that has remained poorly understood. We ask: \emph{How compressible is it?}  That is, how many bits do we need, information theoretically, in order to describe a set of distances in a planar graph?

As we discuss shortly, a better understanding of this core question is crucial to making progress on some of the biggest open problems in other well-studied subjects such as Sparsification, Labeling Schemes, Dynamic Algorithms, and Compact Routing Schemes.

First, let us define our problem more formally.
In the \emph{Metric Compression} problem, we are given a set $S$ of $k$ points in some metric space with distance function $d$, such as the metric of distances in an $n$ node planar graph, and the goal is to find an encoding $\mathcal{C}$ that is as short as possible, yet still allows us to compute $d(v_i,v_j)$ for any two points $v_i,v_j \in S$.

\begin{definition}[The Planar Graph Metric Compression Problem]
Given an unweighted, undirected planar graph $G$ on $n$ nodes, and a subset $S$ of $k$ distinguished nodes in $G$, compute a bit string $\mathcal{C}$ that encodes the distances between all pairs of nodes in $S$. That is, there is a decoding function $f$ that given the encoding $\mathcal{C}$ and any two nodes $v_i,v_j \in S$ returns the $v_i$-to-$v_j$ distance in $G$  (i.e., $f(v_i,v_j,\mathcal{C}) = d_G(v_i,v_j)$).
\end{definition}

There are two na\"ive ways to solve this problem. First, we can store all the distances explicitly as a $k \times k$ matrix in the encoding $\mathcal{C}$. The distance in a graph on $n$ nodes is some number in $\{0,1,\ldots,n\}$, and so this matrix can be encoded using $O(k^2 \log{n})$ bits.
The second option, which is better whenever $k^2 > n$, is to let the encoding be the graph $G$ itself. Na\"ively, this is $O(n \log n)$ bits, and more sophisticated encodings give $O(n)$ \cite{Turan84,MunroR97,ChiangLL01,BF10}. 
Using the $\tilde{O}(\cdot)$ notation to hide polylog factors, we get a na\"ive upper bound of $\tilde{O}(\min\{k^2,n\})$ for our problem.
Is this the best possible, or could there be a much better compression into $\tilde{O}(k \cdot n^{\eps})$ or even $\tilde{O}(k)$ bits?

For context, let us look at other metrics. 
One example of a metric that admits an ultra-efficient compression into $\tilde{O}(k)$ bits is the metric of trees or bounded treewidth graphs \cite{CZ00,GPPR04}. 
For most metrics of interest, however, the exact or lossless version of the compression problem is too difficult and no non-trivial upper bounds, beyond log-factor improvements, are possible.
For example, in general (non-planar) graphs there is a simple $\Omega(k^2)$ lower bound: in any compression, each of the $2^{{k \choose 2}}$ possible graphs on $k$ nodes must be encoded differently.
Instead, it is popular to seek the optimal \emph{lossy} compression from which the metric can be recovered \emph{approximately}, e.g. up to a multiplicative $(1+\eps)$ error.
For example, the classical Johnson-Lindenstrauss \cite{JL84,Ach03} embedding allows one to compress a set of $k$ points in Euclidean $d$-dimensional space into roughly $O(k /\eps^2 \cdot \log^2{k})$ bits, so that the distances between the points can be recovered up to a $(1+\eps)$ factor, and a recent breakthrough of Indyk and Wagner \cite{IW17} reduced the bound to roughly $O(k/\eps^2 \cdot \log{k} \cdot \log{(1/\eps}))$ which is tight up to a $\log{(1/\eps)}$ factor.

Indeed, if we are willing to pay a $(1+\eps)$ error, there are ingenious compressions of the planar graph metric into $\tilde{O}(k)$ bits \cite{Thorup04,Klein05,KKS11}.
But do we have to pay this error, or are planar graphs restricted enough to allow for non-trivial compression?

\begin{oq}
\label{oq1}
Can we beat $\tilde{O}( \min\{k^2,n\} )$ bits for planar graph metric compression?
\end{oq}

There are some lower bounds in our way. 
From the seminal work of Gavoille, Peleg, P{\'e}rennes, and Raz \cite{GPPR04} we know that the metric of \emph{weighted} planar graphs, where the edge weights are polynomially bounded, does not admit any non-trivial compression. 
The authors show that any Boolean $k \times k$ matrix can be ``encoded'' using the distances among a set of $2k$ nodes in a weighted planar graph on $n=O(k^2)$ nodes, where the edge weights are in $[k]$. Since we cannot compress an arbitrary $k \times k$ matrix into $o(k^2)$ bits, we get a nearly-tight lower bound of $\Omega(\min\{ k^2,n\})$ for \emph{weighted} planar graphs.
For unweighted planar graphs, Gavoille et al. simply subdivide the edges in their construction and the number of nodes in the encoding grows to $n=\Theta(k^3)$, which leads to a much weaker lower bound of $\Omega(\sqrt{k\cdot n})$ (see Section~\ref{sec:lowerbound} for more details). 
For example, when $k=\sqrt{n}$, the upper bound is $\tilde{O}(n)$ and the lower bound is $\Omega(n^{3/4})$.
This subdivision of edges is rather na\"ive, and the overall lower bound construction does not seem to capture the full power of the planar graph metric. In fact, it can be simulated by a grid graph \cite{AD16}. 
This naturally suggests the following intriguing challenge of finding a more clever encoding of matrices into planar graphs, which would lead to a negative resolution to Open Question~\ref{oq1}.

\begin{challenge}
\label{ch1}
Can we encode an arbitrary $k \times k$ Boolean matrix $M$ using the distances among a subset of $2k$ nodes $\{v_1,\ldots,v_{2k}\}$ in an \emph{unweighted planar graph} with $O(k^2)$ vertices, so that we can determine $M[i,j]$ by only looking at the distance between $v_i$ and $v_{k+j}$ in our graph?
\end{challenge}

Before presenting our results, let us discuss the state of the art on questions that are closely related to ours, in which we are interested in data structures that are not only as succinct as possible, but also have other desirable features.
Along the way, we give further reasons to be pessimistic about the possibility of a non-trivial compression. 

\paragraph{Sparsification.}
A natural way to compress a graph is by deleting or contracting some of its edges and nodes. 
Finding small subgraphs or minors that preserve or approximate the distances among a given subset of $k$ nodes have been studied for planar graphs \cite{Gupta01,CXKR06,EEST08,BG08,EGK+14,KNZ14,CGH16,GR16,GHP17,KR17arxiv} and for general graphs \cite{BCE03,CE06,Woo06,Pettie09,CGK13,KV13,Parter14,KKN15,Kavitha15,BV16,AB16so,AB16st,Bodwin17}.
Such compressions are appealing algorithmically, since we can readily feed them into our usual graph algorithms, and recent research suggests that, in many settings, near-optimal compression bounds can be achieved using such sparsifiers (e.g. when compressing general graphs with additive error \cite{AB16st,ABP17}).
A discouraging lower bound of Krauthgamer, Nguyen, and Zondiner \cite{KNZ14} shows that even in the case of unweighted grid graphs, it is impossible to beat the na\"ive bound using a (possibly weighted) minor.
Thus, a positive answer to Open Question~\ref{oq1} will have to involve a more complicated data structure.

\paragraph{Labelling Schemes.}
An appealing way to represent graphs is to assign a label $\ell_v$ to each node $v$, so that by looking at the labels of two nodes $\ell_s,\ell_t$ we can infer certain properties such as the distance between them $d(s,t)$.
Finding so-called \emph{distance labelling schemes} in which the labels are as short as possible is a classical subject of study \cite{GrahamPollak,Kannan,GPPR04,Peleg2000proximity}.
Such labels are used for efficient algorithms both in theory \cite{thorup2001compact,AbrahamCG12} and practice \cite{hublabelling}.
An inspiring lecture by Stephen Alstrup at HALG 2016 surveys breakthroughs \cite{alstrup2015optimal,alstrup2015adjacency,alstrup2015distance} achieved in this field in the last few years, all of which involve shaving constants or logarithmic factors.
A famous open question is to close the rare \emph{polynomial} gap in the bounds for planar graphs that has been embarrassingly open since the work of Gavoile et al. \cite{GPPR04}: the upper bound is $O(n^{1/2})$ bits per label (due to \cite{GawrychowskiU16} who shaved a log factor over \cite{GPPR04}), and the lower bound is $\Omega(n^{1/3})$.
The only known technique to prove polynomial lower bounds\footnote{The only result that somewhat deviate from this technique are $1.008\log n$ lower bound for nearest common ancestors in trees~\cite{alstrup2014near} and $1/8\log^2n$ lower bound for distance in trees~\cite{GPPR04,alstrup2015distance}. The gist of both of them is being able to argue about how much information can be shared by labels of two nodes. If the graph is not a tree, this seems very challenging.}
is to argue that labelling schemes are one way to compress graphs, and then use facts about the limits of graph compression.
For example, the lower bound for distance labelling of planar graphs \cite{GPPR04} follows because labels of size $O(n^{1/3-\eps})$ can be used to solve the metric compression problem using $O(k \cdot n^{1/3-\eps})$ bits, which contradicts the lower bounds above. 
In fact, the tight lower bound for metric compression of {\em weighted} planar graphs leads to a tight lower bound for labelling schemes \cite{GPPR04,AD16}.
Thus, to prove a tight lower bound of $\Omega(n^{1/2})$ for labelling schemes in unweighted planar graphs, the \emph{only} approach we have with current techniques is to negatively resolve Open Question~\ref{oq1}, e.g. by accomplishing Challenge~\ref{ch1}.

\paragraph{Routing and Dynamic Algorithms.} 
A compact routing scheme assigns names and tables to the nodes of a graph, so that each node $s$ can find out the first edge on the shortest path (or some approximate path) to any target node $t$ only using the name of $t$ and the local table stored at $s$.
There is a vast literature on the topic, seeking the best possible tradeoff between sizes of the tables and the stretch in many different graph families (we refer the reader to Peleg's book \cite{Peleg00book} and the extensive surveys \cite{Gav01survey,GP03survey}). 
For planar graphs, Abraham, Gavoille, and Malkhi \cite{AGM05} write: ``\emph{Surprisingly, for stretch 1, the complexity of the size of the routing tables is not known.}''
A simple upper bound is $\tilde{O}(n \cdot \sqrt{n})$ total table size, and an adaptation of the same Gavoille et al. construction gives a lower bound of $\Omega(n \cdot n^{1/3})$ \cite{AGM05}.
It is likely that accomplishing Challenge~\ref{ch1} would resolve this gap as well.
Yet another problem with similar state-of-the-art is the All Pairs Shortest Paths problem in \emph{dynamic} planar graphs.
Here, the goal is to have a data structure that supports efficient updates to the graph (edge additions or removals), and can answer shortest path queries efficiently.
The breakthrough algorithm of Fakcharoenphol and Rao \cite{FR06}, and the later optimizations \cite{Klein05,ItalianoNSW11,KaplanMNS12,GawK16}, achieve $\tilde{O}(n^{2/3})$ time for updates and queries.
The only framework for showing polynomial lower bound was recently proposed by Abboud and Dahlgaard \cite{AD16} who proved a lower bound of $n^{1/3-o(1)}$ under the popular APSP Conjecture \cite{RZ11,VassW10,AV14,AGV15,AVY15,Saha15,Dahlgaard16}.
Using their framework, accomplishing Challenge~\ref{ch1} directly leads to a higher lower bound of $n^{1/2-o(1)}$, as is known in the weighted case.

\medskip

History suggests that weighted planar graph metrics might be harder to work with, but they are never \emph{truly} harder.
In so many cases, a new algorithm for the unweighted case is followed by an almost-as-good algorithm for the weighted case, a few years later.
For example, a PTAS for the Travelling Salesman Problem in the unweighted planar metric was found in 1995 \cite{GKP95}, and then for the weighted case in 1998 \cite{AGK+98}.
Perhaps it is only a matter of time until our lower bounds for the unweighted metric match the weighted.

\subsection{Our Results}
Our first result is a new compression scheme for the planar graph metric, which achieves the information theoretically best possible bit complexity, up to log-factors. 
We give a \emph{positive} resolution to Open Question~\ref{oq1}, deem Challenge~\ref{ch1} to be infeasible, and show that unweighted planar graphs are inherently less complex than weighted ones; in fact, they admit a polynomially more efficient metric compression.

\begin{theorem}
Given an unweighted undirected planar graph on $n$ nodes and a subset $S$ of $k$ nodes, we can return a binary encoding of length $\tilde{O}(\sqrt{k\cdot n})$ from which all pairwise distances in $S$ can be recovered exactly. 
\end{theorem}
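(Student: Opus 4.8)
The plan is to exploit the standard recursive $r$-division / cycle-separator machinery for planar graphs, together with the key structural fact that shortest-path distances in an unweighted graph, when restricted to a separator, behave like a Monge-type (or at least very structured) matrix. First I would compute a balanced cycle separator (or a simple-cycle separator of Miller) of size $O(\sqrt{n})$, and more generally an $r$-division so that the graph is partitioned into $O(n/r)$ pieces, each with $O(r)$ vertices and $O(\sqrt{r})$ boundary vertices, with the total number of boundary vertices being $O(n/\sqrt{r})$. The parameter $r$ will be chosen as a function of $k$ and $n$ to balance two costs: the cost of encoding, for each terminal, the vector of its distances to the boundary of the piece containing it, and the cost of encoding the ``dense distance graph'' information on the boundaries that lets one stitch partial distances together.

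The second ingredient, and the source of the polynomial saving over the weighted case, is that in an unweighted graph the distance from a fixed vertex $u$ to the consecutive vertices $w_1,\dots,w_t$ along a shortest path or along a separator cycle changes by only $\pm 1$ between consecutive vertices; hence such a vector of $t$ distances is encodable in $O(t)$ bits rather than $O(t\log n)$ bits, \emph{once a single anchor distance is written down explicitly with $\log n$ bits}. I would formalize this as a lemma: the distances from any vertex to the $O(\sqrt{n})$ vertices of a separator cycle, listed in cyclic order, form a sequence with unit Lipschitz steps, so it is described by one base value plus an $O(\sqrt{n})$-length string over $\{-1,0,+1\}$, i.e.\ $O(\sqrt{n})$ bits. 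Recursing, a vertex on level $i$ of the separator hierarchy needs $O(\sqrt{n/2^i})$ bits to record its distances to the level-$i$ separator that owns it, and summing the geometric series the per-vertex bound is $O(\sqrt{n})$ bits charged at the top and geometrically less below; but we only pay this for the $k$ terminals, giving a leading term of roughly $\tilde O(k\sqrt{n})$ if done naively — so the real work is to not pay it at full rate.

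To get down to $\tilde O(\sqrt{kn})$ I would stop the recursion at pieces of size $r$ chosen so that $k\cdot\sqrt{r}\approx n/\sqrt{r}$, i.e.\ $r\approx (n/k)^{2/3}\cdot$(something) — more precisely, choose $r$ so that the term ``($k$ terminals) $\times$ ($\sqrt r$ bits each for distances to the local boundary)'' matches the term ``($n/\sqrt r$ global boundary vertices) $\times$ ($\sqrt r$ bits each, compressed via the Lipschitz/Monge structure of the dense distance graph on the separators)''; both terms become $\tilde O(\sqrt{kn})$ at the balancing point. For the boundary-to-boundary distances I would store, for each piece, a compressed representation of its boundary distance matrix: by planarity the boundary lies on $O(1)$ faces, so this matrix is a constant number of Monge blocks, and each Monge block of dimension $m\times m$ with entries in $\{0,\dots,n\}$ is encodable in $\tilde O(m)$ bits (store the first row/column explicitly, then the staircase of breakpoints). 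Decoding a query $d_G(v_i,v_j)$ then proceeds by the usual FR-style approach: either $v_i,v_j$ are in the same leaf piece (recurse / store that small piece outright), or the shortest path crosses some separators, and we reconstruct it by taking a min over boundary vertices of (distance from $v_i$ to the boundary) $+$ (boundary-to-boundary distance via the compressed DDGs) $+$ (distance from boundary to $v_j$) — all three pieces are available from what we stored.

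The main obstacle I expect is \textbf{making the Lipschitz compression survive composition across levels without a $\log n$ blow-up per level and without losing exactness}. Writing the distance from a terminal to an $O(\sqrt{n})$-vertex separator costs $O(\sqrt n)$ bits, but a terminal sits below $\Theta(\log n)$ separators, so a careless accounting gives $O(\sqrt n\log n)$ per terminal and, worse, the "anchor" $\log n$-bit base value might have to be repeated at every level. The fix is to anchor hierarchically: record a terminal's distance only to the \emph{smallest} separating cycle that contains it explicitly (cost $O(\sqrt r)$), and recover distances to ancestor separators by composing with the already-stored compressed dense distance graphs rather than storing them again; one must check that this composition is exact (it is, since min-plus composition of exact distance matrices is exact) and that no extra bits are needed. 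A secondary obstacle is handling terminals that fall \emph{on} separator cycles at various levels (as opposed to strictly inside a piece) — these should simply be treated as boundary vertices at that level, which is consistent with the charging scheme. With these points nailed down, the total is $\tilde O(\min(k^2,\sqrt{kn}))$: the $k^2$ branch is the trivial matrix bound used when $k^2<\sqrt{kn}$, i.e.\ $k<n^{1/3}$, and the $\sqrt{kn}$ branch is the construction above.
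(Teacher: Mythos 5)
There is a genuine gap, and it sits exactly where the paper's technical contribution lies. Your plan hinges on two claims, both of which are false as stated.

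First, you assert that the distances from a vertex to the consecutive vertices of a cycle separator form a unit-Lipschitz ($\pm 1$) sequence. This is true only if consecutive vertices of the separator are adjacent in the \emph{original} unweighted graph. But small cycle separators (Miller's theorem, and the ones used in $r$-divisions) require the graph to be triangulated; for a general planar graph with large faces one must add artificial triangulation edges, and the resulting separator cycle uses those edges. Two consecutive separator vertices connected only by a triangulation edge can be far apart in $G$, so the $\pm 1$ step property fails. You cannot set the triangulation edges to $+\infty$ either, since the graph is unweighted. The paper explicitly flags this as the central obstacle (``This is clearly not possible in unweighted planar graphs, and is one of the obstacles we will need to overcome'') and resolves it with a non-standard decomposition: BFS in the face-vertex incidence graph yields ``slice'' boundaries that are genuine simple cycles of $G$, so the unit-Lipschitz property holds on those; the short separators used inside a slice may contain triangulation edges, but the paper only ever needs the unit increments on \emph{one} side of the Monge matrix (the slice-boundary side), and it handles the separator side by storing $\Oh(w)$ distances per terminal explicitly.

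Second, you claim that an $m\times m$ Monge matrix with entries in $\{0,\dots,n\}$ ``is encodable in $\tilde\Oh(m)$ bits (store the first row/column explicitly, then the staircase of breakpoints).'' This is precisely the step that fails without the \emph{unit}-Monge property. Monge gives you that each row of differences is nondecreasing, but the values can range over $\{-n,\dots,n\}$, so a row may have $\Theta(m)$ breakpoints each needing $\Theta(\log n)$ bits, for $\tilde\Oh(m^2)$ total. Indeed, the Gavoille--Peleg--P\'erennes--Raz construction realizes an arbitrary Boolean $m\times m$ matrix inside a (non-unit) Monge matrix of planar boundary-to-boundary distances, so $\Omega(m^2)$ bits are information-theoretically necessary there. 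The $\tilde\Oh(m)$ encoding is only available when the difference sequence takes values in $\{-1,0,1\}$, so that ``the staircase'' consists of just two breakpoints per row. Your balancing argument and the FR-style query decoding are fine in spirit, and the concern you raise about the $\log n$ blow-up across levels is a real but minor bookkeeping issue; the missing idea is the slice decomposition (or some other way to produce boundary cycles that are real cycles of $G$) together with the one-sided unit-Monge observation, without which the stated bit bounds for both the terminal-to-boundary vectors and the boundary-to-boundary matrices do not hold.
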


This shrinks the gap in our understanding of the planar metric compression problem from polynomial to polylogarithmic (removing this polylogarithmic gap remains an open question).
For comparison, when $k = \sqrt{n}$, we show that $\tilde{\Theta}(n^{3/4})$ bits are necessary and sufficient, while in the weighted case the bound is $\tilde{\Theta}(n)$.
Our encoding breaks the lower bound of Krauthgamer et al. \cite{KNZ14} for compressions using minors, and raises the question whether it can be matched via other forms of sparsification or \emph{graphical} compressions.

It is unclear whether our new compression scheme will lead to improved upper bounds for labeling, routing, or dynamic algorithms. In Section~\ref{sec:unit}, we discuss the difficulty in turning it into a labelling scheme. 
Still, it certainly shakes our beliefs about the right bounds for those problems. 
Even if better upper bounds are not possible, it is no longer a mere puzzle as in Challenge~\ref{ch1} that is standing in the way of higher lower bounds -- substantially new techniques and frameworks must be developed. 

\paragraph{Distance Oracles.}
Our first result was a mathematical advance in the understanding of the planar graph metric. 
Next, we use it algorithmically to achieve a new \emph{Subset Distance Oracle} that could be an appealing choice in many applications. 

A distance oracle is an encoding of a graph from which a pairwise distance can be queried efficiently.
Since the seminal paper of Thorup and Zwick \cite{TZ05}, a central subject of study in Graph Algorithms has been to understand the inherent tradeoff between the parameters of these distance oracles (see the survey by Sommer \cite{Sommer14survey}): The \emph{size} of the compression, the \emph{query time} for returning a distance, the \emph{error} in the answers, the \emph{preprocessing time} to construct the compression, and so on.

Many \emph{exact} distance oracles for planar graphs have been proposed \cite{Dji96,A+96,Cabello12,FR06,WN10,Nuss11}, mostly focusing on the tradeoff between space and query time, and giving $O(S)$ space and $O(n^2 / S)$ query time \cite{MS12}. 
Cohen-Addad, Dahlgaard, and Wulff-Nilsen \cite{CDW17} show that the technique of abstract Voronoi diagrams recently introduced into the field of planar graphs by Cabello \cite{Cabello17} leads to an oracle with $O(n^{5/3})$ space and $\tilde{O}(1)$ query time, suggesting that a better tradeoff is possible.

To get even better tradeoffs we might allow $(1+\eps)$ error \cite{Thorup04,Klein05,KKS11}: we can achieve very small $(1+\delta) n$ space and fast $\tilde{O}(1)$ query time.
Note that $o(n)$ space is impossible in this setting, no matter what query time we allow.
However, another natural way to get better tradeoffs is to restrict our attention to a subset of the nodes.
A \emph{Subset Distance Oracle} is a small space data structure that can efficiently return the distance between any pair of nodes from a set $S$ of $k$ nodes.
Here, for any $k=o(n)$, e.g. $k =\sqrt{n}$, our new compression scheme suggests that a distance oracle might have $o(n)$ space.

Subset distance oracles arise naturally.
In typical applications of distance oracles, one can predict that all queries will be among a subset of $k=o(n)$ nodes.
Space efficiency is often a high priority.
For example, if our graph is the national road network, one might be interested in a mobile app that can return the distance between any pair of bus stops.

Our second result is the first subset distance oracle with non-trivial space bounds. 
Notably, all previous distance oracles in the literature work equally well for weighted and unweighted graphs, while ours uses new techniques that are provably impossible for weighted graphs.

\begin{theorem}
Given an unweighted undirected planar graph on $n$ nodes and a subset $S$ of $k$ nodes, there is a polynomial-time algorithm that returns a data structure $X$ of size $\tilde{O}(\sqrt{k\cdot n})$ such that given $X$ and any pair of nodes in $S$ we can return the exact distance in $\tilde{O}(\min\{n^{3/4},\sqrt{k\cdot n}\})$ time.
\end{theorem}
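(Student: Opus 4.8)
The plan is to build on Theorem~1, but instead of treating its output as an opaque string, I would make the encoding \emph{structured} — a recursive decomposition of $G$ with small per-piece tables, still totalling $\tilde O(\sqrt{k\cdot n})$ bits — and then answer a query by a Fakcharoenphol--Rao~\cite{FR06}-style shortest-path search over the dense distance graph induced by the decomposition. In other words, the size bound is (essentially) Theorem~1 redone piece-by-piece, and the new content is navigational: exposing enough structure in the compressed distances to run a fast $\min$-plus combination across separators.

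Concretely, first compute an $r$-division of $G$ with few holes in which, on top of the usual guarantees (each piece has $O(r)$ vertices and $O(\sqrt r)$ boundary vertices lying on $O(1)$ holes), the vertices of $S$ are among those separated out; with $r=\max\{n/k,\sqrt n\}$ this gives $O(n/r)$ pieces, a total of $O(n/\sqrt r)=O(\min\{\sqrt{k\cdot n},\,n^{3/4}\})$ boundary vertices, and $O(\max\{1,k/\sqrt n\})$ vertices of $S$ per piece. For each piece $P_i$ with terminal set $T_i=(S\cap P_i)\cup\partial P_i$, store a compressed representation of the $G$-distances among $T_i$ using the compression of Theorem~1 applied inside $P_i$ (which exploits that $G$ is unweighted, so that distance vectors along a shortest-path separator change by $\pm1$ between consecutive vertices, together with the planar Monge structure of the source--separator matrices); within-piece $S\times S$ distances can be handled recursively. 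All these distances are computable in polynomial time by standard planar shortest-path algorithms, and running the accounting of Theorem~1's proof piece-by-piece (a Cauchy--Schwarz estimate against $\sum_i|T_i|=O(k+n/\sqrt r)=O(\sqrt{k\cdot n})$) bounds the total size by $\tilde O(\sqrt{k\cdot n})$.

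To answer $d_G(u,v)$ for $u\in P_i$, $v\in P_j$: if $i=j$ the value is read from the table of $P_i$; otherwise every $u$--$v$ path that leaves $P_i$ crosses $\partial P_i$ and enters $\partial P_j$, so
\[
d_G(u,v)=\min_{a\in\partial P_i,\ b\in\partial P_j}\bigl(d_G(u,a)+d_{\mathrm{DDG}}(a,b)+d_G(b,v)\bigr),
\]
where $d_{\mathrm{DDG}}$ is shortest-path distance in the dense distance graph on $\bigcup_\ell\partial P_\ell$ whose edge weights are supplied piecewise by the stored boundary-to-boundary matrices. One evaluates this by a single Dijkstra run from $u$ over the dense distance graph; since each per-piece boundary matrix is Monge, the priority-queue work is batched with the standard Monge/SMAWK-heap implementation, so the run costs $\tilde O(n/\sqrt r)=\tilde O(\min\{\sqrt{k\cdot n},\,n^{3/4}\})$ rather than time proportional to the number of edges, and the $u$-to-$\partial P_i$ and $\partial P_j$-to-$v$ distances needed to start and finish the search come from the tables of $P_i$ and $P_j$.

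The main obstacle is the query, not the space. Theorem~1's encoding keeps the source--separator and boundary--boundary distances only \emph{implicitly} (as increment strings and Monge encodings), so the Dijkstra/$\min$-plus combination has to be performed on this implicit representation without decompressing it — which forces the Monge-heap machinery to operate directly on it and requires showing that the number of decomposition levels a query touches, times the per-level cost, multiplies out to $\tilde O(n/\sqrt r)$. A secondary difficulty is reconciling the two parameter regimes ($k\le\sqrt n$, where $r=n/k$ gives query $\tilde O(\sqrt{k\cdot n})$ with $O(1)$ terminals of $S$ per piece, versus $k>\sqrt n$, where $r=\sqrt n$ gives query $\tilde O(n^{3/4})$ but pieces now carry $\Theta(k/\sqrt n)$ terminals of $S$) under a single $\tilde O(\sqrt{k\cdot n})$ space bound; this again reduces to a Cauchy--Schwarz computation together with the recursive treatment of the within-piece $S\times S$ distances. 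Crucially, every step that beats the weighted bound — the $\pm1$ increments, the near-linear-size boundary encodings — relies on $G$ being unweighted, matching the paper's message that this is provably impossible for weighted graphs.
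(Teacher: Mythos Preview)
Your high-level plan---build a structured dense distance graph out of the Theorem~1 encoding and query it with FR-Dijkstra---is exactly what the paper does, and the $\tilde O(\sqrt{k\cdot n})$ query bound falls out immediately just by observing that the Section~3 encoding \emph{is} a dense distance graph of that size (no further decomposition needed). The gap is in how you obtain the $\tilde O(n^{3/4})$ query bound while keeping space $\tilde O(\sqrt{k\cdot n})$.

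Your flat $r$-division with $r=\sqrt n$ and black-box application of Theorem~1 to the terminal sets $T_i=(S\cap P_i)\cup\partial P_i$ does \emph{not} give $\tilde O(\sqrt{k\cdot n})$ space. Each piece has $|\partial P_i|=\Theta(\sqrt r)$ boundary vertices, so even when $s_i=|S\cap P_i|$ is tiny, Theorem~1 on $P_i$ costs $\tilde\Theta(\sqrt{\sqrt r\cdot r})=\tilde\Theta(r^{3/4})$ per piece; over $n/r$ pieces this is $\tilde\Theta(n/r^{1/4})=\tilde\Theta(n^{7/8})$, which exceeds $\sqrt{k\cdot n}$ for all $k<n^{3/4}$. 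Your ``Cauchy--Schwarz against $\sum_i|T_i|$'' actually yields $\sqrt r\cdot\sqrt{(n/r)\sum_i|T_i|}=\sqrt{n\cdot\sqrt{kn}}=n^{3/4}k^{1/4}$, not $\sqrt{kn}$. The culprit is precisely the pieces where $s_i\ll\sqrt r$: there the boundary terminals dominate $T_i$ and you pay for compressing $\sqrt r$ terminals in a size-$r$ graph with no matching savings elsewhere.

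The paper avoids this with a \emph{hierarchical} separator decomposition rather than a flat $r$-division, with the adaptive stopping rule ``recurse on $G(u)$ only while $s(u)>\sqrt{n(u)}$''. At each internal node $u$ the distinguished set $D(u)$ consists of the $O(\sqrt{n(u)})$ separator vertices of $u$ and of its children (or the leaf's $S$-vertices), so one always has $|D(u)|=O(\min\{\sqrt{n(u)},\,s(u)\})$. The $O(\sqrt{n(u)})$ bound makes every per-node dense distance graph have size $\tilde O(n(u)^{3/4})\le\tilde O(n^{3/4})$, which controls the query (a query touches $O(\log n)$ nodes on two root-to-leaf paths). The $O(s(u))$ bound---valid at internal nodes precisely because the stopping rule guarantees $\sqrt{n(u)}<s(u)$ there---makes the total space $\sum_u\sqrt{n(u)\,s(u)}$; grouping nodes by the scale of $n(u)$ and applying concavity of $\sqrt{\cdot}$ level by level gives $\tilde O(\sqrt{k\cdot n})$. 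In short, the missing idea is that the separator/boundary cost must be \emph{absorbed} by $s(u)$, and the way to enforce that is to stop decomposing as soon as $s(u)\le\sqrt{n(u)}$---something a fixed-$r$ division cannot do. (A minor point: you write ``$G$-distances among $T_i$'' but what the DDG needs, and what Theorem~1 applied inside $P_i$ gives, are $P_i$-distances; the global distances are recovered by FR-Dijkstra across pieces.)
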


The main open question left by our work is whether our query time can be improved, perhaps all the way down to $\tilde{O}(1)$.
This would be an essentially optimal distance oracle.
But even as it is, our query time is sublinear, and our space is sublinear for any $k=o(n)$, making it an appealing choice in applications with strict space constraints. 

\medskip
Finally, an intriguing and wide open question is to extend (any of) our upper bounds to \emph{directed} planar graphs.  Can we accomplish Challenge~\ref{ch1} if we allow directed edges?
Our tools heavily rely on the graph being undirected, yet it remains unclear if a higher lower bound can be proven for directed graphs. 

\subsection{Technical Overview}

We exhibit the first use of the  \emph{Unit-Monge} property to the algorithmic study of planar graphs.
It is well known that distances in a planar graph enjoy this property, due to the non-crossing nature of shortest paths in the plane, but prior to our work, only the \emph{(non-unit) Monge} property, was known to be algorithmically exploitable for planar graphs.
For the past few decades, it has been heavily utilized in numerous algorithms for problems related to shortest paths or minimum cuts in planar graphs (e.g.~\cite{FR06,Cabello12,Cabello17,MS12,KaplanMNS12,ItalianoNSW11,SubmatrixMonge,TALG10SP,MSMS,SPNegLengths2,GomoryHu,NussbaumOracle,LS11}), and beyond, in dozens of papers on computational geometry (e.g. \cite{KaplanMNS12,SubmatrixMonge,Kaplan1,Kaplan2,Kaplan3,AggarwalKlaweMoranShorWilber1987,SMAWK,KK89}) and pattern matching (e.g. \cite{Schmidt1998,CrochemoreLandauZiv,Algorithmica13,TiskinSODA}).
Meanwhile, the stronger \emph{Unit-Monge} property has only been exploited for algorithms on sequences where it has already led to several breakthroughs. We refer the reader to the 159-page monograph of Tiskin \cite{Tiskin} for an exposition of these applications.

Recall that we want to encode the distances among $k$ nodes in a planar graph.
Let us assume that we are lucky and all the nodes lie on a single face of the graph. 
Denote the nodes appearing on the face in order $s_1,\ldots,s_{k/2},t_1,\ldots,t_{k/2}$, and for simplicity assume that we only want to encode $s_i$-to-$t_j$ distances. Let $M$ be the $k/2 \times k/2$ matrix of distances so that $M[i,j]=d(s_i,t_j)$.
This matrix has the Monge property: For any $i,j$ we have that 
$M[i+1,j] - M[i,j] \le M[i+1,j+1] - M[i,j+1]$. This is because the $s_i$-to-$t_j$ shortest path and the $s_{i+1}$-to-$t_{j+1}$ shortest path must cross.
Moreover, it is \emph{Unit-Monge}, that is, $M[i+1,j]-M[i,j] \in \{-1,0,1\}$. This is because there is an edge between $s_i$ and $s_{i+1}$ and so the distances involving these nodes are always at most $1$ apart.

Our gains come from the fact that Unit-Monge matrices are compressible into $O(k \log k)$ bits!
For non-unit Monge matrices, the construction of Gavoille et al. implies an $\Omega(k^2)$ lower bound.
Another striking example for the extra power of the Unit-Monge property is the fact that (a compact representation of) the distance product of two $n \times n$ such matrices can be computed in $O(n \log{n})$ time \cite{TiskinSODA}, while for non-unit Monge matrices only $O(n^{2})$ algorithms are possible. 

The main issue for us, and in general when exploiting Monge properties, is that the nodes we care about do not necessarily lie on a cycle.
The simple solution is to \emph{add} a cycle connecting our $k$ nodes and assign weight $+\infty$ to the new edges so that they do not change the distances, or more formally, to \emph{triangulate} the graph.
After we do this, we have the Monge property, but because of the infinite weight edges, we do not have the unit-Monge property.
This solution is common to all the algorithms cited above that use the Monge property, and is quite reasonable when the graph is weighted to begin with.
For unweighted graphs, on the other hand, our work proves that it is too lossy and a more involved solution  
leads to much better results.

At a very high-level, our approach is to use a Baker-like decomposition into slices (vertices at consecutive levels of some specific BFS tree) whose boundaries are cycles, and to store distances to the slice boundaries. 
Observe that when we argued above that the unit Monge property holds because there is an edge between $s_i$ and $s_{i+1}$, we did not require that there is also an edge between $t_j$ and $t_{j+1}$. In our solution there is an edge between consecutive vertices on the boundary cycle of each slice. Therefore, even if we triangulate each slice using infinite weight edges, we can still exploit the unit Monge property when storing distances between certain vertices in a slice and the slice boundary.

The decomposition into slices is such that, after triangulation, the slices have small cycle separators. We recursively separate the vertices of the set $S$ within each slice  using small cycle separators. We store distances between separators and the slice boundary (using the unit Monge property) and between vertices of $S$ and separators (using the fact that separators are small). Significant technical issues arise with the nesting structure of slices. This gives rise to so-called {\em holes} in a slice. Dealing with multiple holes requires a detailed study of additional structural properties, and a more complicated recursive solution based on these properties. In essence, we show that whenever a na\"ive solution does not work in the presence of multiple holes, there is one hole that can be handled efficiently using a different approach. 

We believe it is very likely that other problems in unweighted planar graphs can be solved by exploiting the Unit-Monge property. Our near-optimal metric compression serves as a proof of concept that this is possible. However, technical challenges might have to be overcome in each specific application. In particular, the fast distance product algorithm for unit-Monge matrices~\cite{TiskinSODA} appears to be a strong and relevant technique that we are so far unable to exploit for solving problems in planar graphs.

\section{Preliminaries}

We assume basic familiarity with planar graphs and planar graph duality. We denote the primal graph by $G$ and the dual graph by $G^*$. For a spanning tree $T$ of $G$, we use $T^*$ to denote the spanning tree of $G^*$. It is well known~\cite{vonStaudt} that the set of edges of $G$ not in $T$ form a spanning tree $T^*$ of $G^*$. We often refer to $T^*$ as the {\em cotree} of $T$~\cite{Eppstein03}. 
For a spanning tree $T$ of $G$ and an edge $e$ of $G$ not in~$T$, 
the {\em fundamental cycle} of $e$ with respect to $T$ in $G$ is 
the simple cycle consisting of $e$ and 
the unique simple path in $T$ between the endpoints of $e$.

Given an assignment of nonnegative weights to the faces of $G$, 
we say that a simple cycle $C$ is a {\em balanced separator}
if the total weight of faces strictly enclosed by~$C$ 
and the total weight of faces not enclosed by~$C$ are each 
at most $5/6$ of the total weight.\footnote{It is more usual to require that the total weight is at most either
$2/3$ or $3/4$. However, in our particular application $5/6$ turns out to be more convenient.}
We often assign weights to vertices rather than to faces. Finding a balanced separator with respect to vertex weights reduces to the case of face weights (for each vertex, simply remove its weight and add it to an incident face).
It is well known (see, e.g.,~\cite{KleinMS13})  that in 
triangulated planar graphs there exists a balanced separator that is a fundamental cycle assuming
that no face has more than $1/2$ of the total weight
(in fact, this is true for any planar graph such that $T^*$ has maximum degree~3). 
For vertex-weights, if no vertex has more than 1/2 of the total weight and the graph is triangulated and there are no self loops then by evenly transferring the weights to faces we obtain that no face receives more than 1/2 of the total weight (because every node is incident to at least two faces) and we can invoke the face-weights version of the balanced separator.
Many planar graph algorithms triangulate the graph by adding edges to ensure that short balanced cycle separators exists. The lengths of the added edges is set to be sufficiently large so as not to change distances in the graph. This is clearly not possible in unweighted planar graphs, and is one of the obstacles we will need to overcome.

\subsection{The Monge and Unit-Monge properties}

One of the main tools we use for succinct representation of distances in
unweighted undirected planar graphs is the {\em unit Monge property}. We next
prove a sequence of lemmas that utilize this property to efficiently store
distances between vertices on cycles. We begin with encoding distances between
disjoint sets of vertices on a single face (Lemma~\ref{lem:unit}), then
encoding all-pair distances between the vertices on a single face
(Lemma~\ref{lem:unit2}), then encoding all-pair distances between
the vertices on a single simple  cycle (Lemma~\ref{lem:cycle}), and finally,
encoding the distances between the vertices of two faces
(Lemma~\ref{lem:hole}).

\begin{lemma}
\label{lem:unit}
Let $C=(v_1,v_2,\ldots,v_{|C|})$ be the cyclic walk of a face of a planar graph
partitioned into two parts
$C_1=(v_1,v_2,\ldots,v_{\ell})$ and $C_2=(v_{\ell+1},v_{\ell+2},\ldots,v_{|C|})$.
Then,
for any subset $C'_2$ of $C_2$, all distances between vertices of $C_1$ and vertices of
$C'_2$ can be encoded
in $\tilde\Oh(|C_1|+|C'_2|)$ bits.
\end{lemma}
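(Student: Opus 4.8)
Index the vertices of $C_1$ as $v_1,\dots,v_\ell$ and those of $C_2'$ as $c_1,\dots,c_m$ (with $m=|C_2'|$), where $c_1,\dots,c_m$ are listed in the cyclic order in which they appear inside $C_2$, and let $M$ be the $\ell\times m$ matrix $M[i,j]=d_G(v_i,c_j)$. The plan is to prove two structural facts about $M$ --- that it is \emph{Monge} and that any two consecutive rows of $M$ differ entrywise by at most one (the ``unit'' part) --- and then to use both together to compress $M$ into $\tilde{O}(\ell+m)$ bits.

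For the Monge inequality I would fix $i<\ell$ and $j<m$. Walking around the face whose cyclic walk is $C$, the four vertices occur in the cyclic order $v_i,v_{i+1},c_j,c_{j+1}$, since all of $C_1$ precedes all of $C_2\supseteq C_2'$ and both index pairs are increasing; hence the endpoint pairs $\{v_i,c_j\}$ and $\{v_{i+1},c_{j+1}\}$ interleave along the face boundary. By the non-crossing property of shortest paths in a plane graph --- two shortest paths whose endpoints interleave on a common face must share a vertex --- a $v_i$-to-$c_j$ shortest path and a $v_{i+1}$-to-$c_{j+1}$ shortest path meet at some vertex $x$; re-routing through $x$ yields $d_G(v_i,c_{j+1})+d_G(v_{i+1},c_j)\le d_G(v_i,c_j)+d_G(v_{i+1},c_{j+1})$, that is, $M[i+1,j]-M[i,j]\le M[i+1,j+1]-M[i,j+1]$. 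For the unit part, $v_i$ and $v_{i+1}$ are consecutive on the cyclic walk of a face, hence joined by an edge of $G$, so $|d_G(v_i,c_j)-d_G(v_{i+1},c_j)|\le 1$ for every $j$.

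Given these two facts, I would observe that for each $i<\ell$ the row-difference vector $\delta_i(j):=M[i+1,j]-M[i,j]$ is nondecreasing in $j$ and confined to $\{-1,0,1\}$, so it is a step function of the shape $-1,\dots,-1,0,\dots,0,1,\dots,1$, completely described by two thresholds $1\le p_i\le q_i\le m+1$. The encoding would then store (i) the entire first row $M[1,1],\dots,M[1,m]$ verbatim, using $O(m\log n)$ bits as each entry lies in $\{0,\dots,n\}$; (ii) the pairs $(p_i,q_i)$ for $i=1,\dots,\ell-1$, using $O(\ell\log m)$ bits; and (iii) the identity of each $c_j$ within $C_2$ so a query can be routed to a column, using $O(m\log|C|)$ bits --- a total of $\tilde{O}(|C_1|+|C_2'|)$. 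Decoding $M[i,j]$ just reads off each $\delta_{i'}(j)$ from $(p_{i'},q_{i'})$ by comparing $j$ against the two thresholds and returns $M[1,j]+\sum_{i'=1}^{i-1}\delta_{i'}(j)$; only the encoding length is asserted, so query time is not at stake here.

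The hard part will be pinning down the non-crossing step rigorously: one must verify that the four endpoints genuinely occur in the claimed interleaving order on the face, which calls for some care when the face's cyclic walk is not a simple cycle (if $G$ is not $2$-connected a vertex may appear several times on the walk, possibly in both $C_1$ and $C_2$) and when the chosen shortest paths are not internally disjoint from that face; the cleanest route is probably to pick, among all shortest paths, ones that pairwise cross as little as possible and then argue topologically. Everything downstream --- the step-function structure and the bit accounting --- is then routine. One conceptual point to flag: we obtain the unit property only along rows, because consecutive elements of $C_1$ are adjacent whereas $C_2'$ is an arbitrary subset; this is exactly why the first row is stored verbatim rather than differentially, and why the bound depends on $|C_2'|$ rather than on $|C_2|$.
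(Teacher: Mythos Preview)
Your proposal is correct and follows essentially the same approach as the paper: establish that $M$ is unit-Monge via the crossing-paths argument and the edge between consecutive face vertices, then encode each row-difference vector by its two thresholds and store one row explicitly. The paper's proof is terser and does not address the subtleties you flag about non-simple face walks or shortest-path selection; it simply asserts the crossing property and proceeds.
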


\begin{proof}
Let $C'_2=\{v_{p_1},v_{p_2},\ldots,v_{p_s}\}$. We define an $\ell \times s$ matrix $M$ such that
$M[i,j]$ equals the distance in $G$ between $v_i$ and $v_{p_j}$. The matrix $M$ is Monge, that is
\[
M[i+1,j] - M[i,j] \le M[i+1,j+1] - M[i,j+1]  
\]
for any $i\in [1,\ell-1]$ and $j\in [1,s-1]$. This is because the shortest $v_i$-to-$
v_{p_j}$ and $v_{i+1}$-to-$v_{p_{j+1}}$ paths must necessarily cross. Furthermore, the matrix $M$ is unit-Monge, that is

\[
M[i+1,j]-M[i,j] \in \{-1,0,1\}
\] 
for any $i\in [1,\ell-1]$ and $j\in [1,s]$, because there is an edge
$(v_i,v_{i+1})$.
Consequently, for any $i\in [1,\ell-1]$, the sequence of differences
$M[i+1,j]-M[i,j]$
is nondecreasing and contains only values from $\{-1,0,1\}$, so can be encoded
by storing the positions of the first 0 and the first 1. Storing these positions
for every
$i \in [1,\ell-1]$ takes $\tilde\Oh(\ell)$ bits. To encode the whole $M$, we
additionally store $M[0,j]$ for every $j\in [1,s]$ using $\tilde\Oh(s)$ bits.
\end{proof}

\begin{lemma}
\label{lem:unit2}
Let $C=(v_1,v_2,\ldots,v_{|C|})$ be the cyclic walk of a face of a planar graph. Then,
all distances between vertices of $C$ can be encoded in $\tilde\Oh(|C|)$
bits.
\end{lemma}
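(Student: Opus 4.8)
The plan is to reduce Lemma~\ref{lem:unit2} to repeated applications of Lemma~\ref{lem:unit}. The obstacle is that Lemma~\ref{lem:unit} only encodes distances between two \emph{disjoint} contiguous arcs of the face walk, whereas now we want \emph{all} pairwise distances among the vertices of $C$, including pairs that lie close together on the walk. So the main task is to cover all $\binom{|C|}{2}$ pairs by a small number of (arc, arc) pairs to which the previous lemma applies, with the total encoding size still $\tilde\Oh(|C|)$.

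The approach I would take is a divide-and-conquer / dyadic decomposition of the cyclic sequence $v_1,\ldots,v_{|C|}$. First split $C$ into two contiguous halves $A$ and $B$ of size $\approx |C|/2$ each. All distances between a vertex of $A$ and a vertex of $B$ are distances between two disjoint contiguous sub-walks, so by Lemma~\ref{lem:unit} (taking $C_1=A$, $C_2=B$, $C_2'=B$) they can be encoded in $\tilde\Oh(|A|+|B|)=\tilde\Oh(|C|)$ bits. It then remains to encode all distances \emph{within} $A$ and all distances \emph{within} $B$, which we handle recursively. Unrolling the recursion, at level $i$ we have $2^i$ blocks of size $\approx |C|/2^i$, and at each level the "cross" encodings cost a total of $\tilde\Oh(|C|)$ bits; since there are $O(\log |C|)$ levels the grand total is $\tilde\Oh(|C|)$ bits. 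The base case is blocks of constant size, whose $O(1)$ distances we store explicitly.

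One subtlety to address carefully: Lemma~\ref{lem:unit} is stated for a partition of the face walk into exactly two arcs $C_1$ and $C_2$. When we recurse into a sub-block, say $A=(v_1,\ldots,v_{|C|/2})$, and split it into its own two halves $A_1,A_2$, we need the distances between $A_1$ and $A_2$; here $A_1$ and $A_2$ are still two disjoint contiguous arcs of the original face walk $C$, so the Monge and unit-Monge arguments of Lemma~\ref{lem:unit} go through verbatim — the crucial edge $(v_i,v_{i+1})$ exists for all consecutive walk vertices, independently of how we carved the walk. Equivalently, one can simply restate Lemma~\ref{lem:unit} for "two disjoint contiguous arcs of a face walk" (which its proof already gives) and apply it as a black box. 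I would also remark that distances of a vertex to itself are $0$ and need not be stored, and that since $C$ is a closed walk a vertex may appear multiple times — but this is harmless, as we just treat the positions $1,\ldots,|C|$ along the walk as the index set.

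The main obstacle is thus purely bookkeeping: verifying that the cross-distance sets across all levels of the recursion genuinely cover every pair $\{v_a,v_b\}$ exactly once (for $a<b$, the pair is covered at the first level where $a$ and $b$ fall into different blocks) and that the per-level cost telescopes to $\tilde\Oh(|C|)$. No new structural property of planar graphs is needed beyond what Lemma~\ref{lem:unit} already supplies; the decoding function simply, given $v_a$ and $v_b$, locates the level at which they separated and reads off the corresponding entry from that level's Lemma~\ref{lem:unit} encoding.
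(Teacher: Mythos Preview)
Your proposal is correct and follows essentially the same approach as the paper: split the cyclic walk into two halves, apply Lemma~\ref{lem:unit} to encode the cross distances between the halves, and recurse on each half, yielding the recurrence $T(s)=\tilde\Oh(s)+2T(s/2)=\tilde\Oh(|C|)$. Your discussion of the ``subtlety'' is exactly how the paper uses Lemma~\ref{lem:unit}: the first half plays the role of $C_1$ and the second half is taken as the subset $C_2'$ of the complementary arc $C_2$.
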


\begin{proof}
We recursively encode all distances between vertices from a contiguous fragment
of
$C$ using Lemma~\ref{lem:unit}. We start with the whole $v_1,v_2,\ldots,v_{|C|}$.
To encode the distances between all vertices $v_i,v_{i+1},\ldots,v_{j}$,
where
$i<j$, we set $m=\lfloor (i+j)/2\rfloor$ and proceed as follows:
\begin{enumerate}
\item Recursively encode the distances between all vertices 
$v_i,v_{i+1},\ldots,v_m$.
\item Recursively encode the distances between all vertices 
$v_{m+1},v_{m+2},\ldots,v_{|C|}$.
\item Apply Lemma~\ref{lem:unit} with $C_1=(v_i,v_{i+1},\ldots,v_m)$ and
$C'_2=\{ v_{m+1},v_{m+2},\ldots,v_{j}\}$.
\end{enumerate}
The total size of the encoding is described by the recurrence 
$T(s)=\tilde\Oh(s)+2T(s/2)$, hence solves to $\tilde\Oh(|C|)$.
\end{proof}

\begin{lemma}
\label{lem:cycle}
Let $C=(v_1,v_2,\ldots,v_{|C|})$ be a simple cycle in a planar graph. Then, all
distances between
vertices of $C$ can be encoded in $\tilde\Oh(|C|)$ bits.
\end{lemma}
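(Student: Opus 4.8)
The plan is to cut $G$ along the cycle $C$ into an ``inside'' part and an ``outside'' part, apply Lemma~\ref{lem:unit2} to each part separately, and then argue that the $G$-distances among the vertices of $C$ are fully determined by the distances in the two parts.

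Formally, regard $C$ as a Jordan curve, so it partitions the plane into an open interior and an open exterior. Let $G_{\mathrm{in}}$ be the subgraph of $G$ consisting of the vertices of $C$, the vertices of $G$ lying strictly inside $C$, and all edges of $G$ drawn in the closed interior of $C$; here each edge with both endpoints on $C$ that is not an edge of the cycle is a chord drawn strictly inside or strictly outside and is assigned to the appropriate side, multi-edges are handled individually, and self-loops (irrelevant for distances) are discarded. Define $G_{\mathrm{out}}$ symmetrically. Since the strict exterior of $C$ contains no vertex or edge of $G_{\mathrm{in}}$, the cycle $C$ bounds a face of $G_{\mathrm{in}}$, and likewise of $G_{\mathrm{out}}$. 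Hence, by Lemma~\ref{lem:unit2}, all pairwise distances in $G_{\mathrm{in}}$ between the vertices of $C$ can be encoded in $\tilde{O}(|C|)$ bits, and similarly for $G_{\mathrm{out}}$; I store both encodings, for a total of $\tilde{O}(|C|)$ bits.

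It remains to recover $d_G$ on $V(C)$ from these two encodings. Let $H$ be the complete graph on vertex set $V(C)$ in which the weight of $\{u,w\}$ is $\min(d_{G_{\mathrm{in}}}(u,w),\, d_{G_{\mathrm{out}}}(u,w))$; the decoder reconstructs all these weights from the two stored encodings and outputs $d_H(v_i,v_j)$ (no bound on decoding time is required). I claim $d_G(u,w)=d_H(u,w)$ for all $u,w\in V(C)$. The inequality $d_H(u,w)\ge d_G(u,w)$ holds because $G_{\mathrm{in}}$ and $G_{\mathrm{out}}$ are subgraphs of $G$, so every edge of $H$ realizes a walk in $G$ of the same length, and a $u$-$w$ path in $H$ therefore yields a $u$-$w$ walk in $G$ of equal length. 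For the reverse inequality, take a shortest $u$-$w$ path $P$ in $G$ and let $u=x_0,x_1,\ldots,x_t=w$ be the vertices of $C$ met along $P$, in order. For each $a$, the subpath $P[x_{a-1},x_a]$ meets $C$ only at its endpoints, and since edges of $G$ do not cross the curve $C$, this subpath lies entirely in the closed interior or entirely in the closed exterior of $C$; thus it is a path of $G_{\mathrm{in}}$ or of $G_{\mathrm{out}}$, so its length is at least $d_{G_{\mathrm{in}}}(x_{a-1},x_a)$ or at least $d_{G_{\mathrm{out}}}(x_{a-1},x_a)$, hence at least the weight of $\{x_{a-1},x_a\}$ in $H$. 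Summing over $a$ gives $d_G(u,w)=\mathrm{len}(P)\ge d_H(u,w)$.

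The step that needs the most care is this last one: the decomposition of a shortest path of $G$ at the vertices it shares with $C$, together with the planarity argument (Jordan curve theorem) showing that each piece between consecutive shared vertices lies wholly on one side of $C$ and is therefore a path of $G_{\mathrm{in}}$ or of $G_{\mathrm{out}}$. It is genuinely necessary to pass through the auxiliary graph $H$ rather than simply taking $\min(d_{G_{\mathrm{in}}},d_{G_{\mathrm{out}}})$, because a shortest path in $G$ between two vertices of $C$ may alternate between the two sides, so $d_G$ restricted to $V(C)$ can be strictly smaller than $\min(d_{G_{\mathrm{in}}},d_{G_{\mathrm{out}}})$ and only the shortest-path metric of $H$ recovers it exactly. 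The remaining points — correct bookkeeping for chords and multi-edges when splitting $G$, and verifying that $C$ bounds a face in each of $G_{\mathrm{in}}$ and $G_{\mathrm{out}}$ so that Lemma~\ref{lem:unit2} applies — are routine.
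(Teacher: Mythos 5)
Your proposal is correct and follows essentially the same route as the paper's proof: split $G$ along $C$ into $G_{\mathrm{in}}$ and $G_{\mathrm{out}}$, invoke Lemma~\ref{lem:unit2} on each side, and recover $d_G$ on $V(C)$ by decomposing any shortest path at its intersections with $C$ into subpaths each lying wholly on one side. Your write-up just makes explicit the auxiliary shortest-path graph $H$ and the Jordan-curve reasoning that the paper leaves implicit.
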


\begin{proof}
Consider the two planar graphs $G_{out}$ ($G_{in}$) obtained by removing all
vertices enclosed (not enclosed) by $C$. $C$ is the cyclic walk of a face in
$G_{in}$ and $G_{out}$,
hence we can apply Lemma~\ref{lem:unit2} to store the distances in $G_{out}$ and in $G_{in}$ between
vertices of $C$. This is enough to encode the distances in $G$ between vertices of $C$,
 as any such shortest path can be partitioned into shortest paths between two
vertices
of $C$ such that each of these paths exists in either $G_{in}$ or $G_{out}$.
\end{proof}

\begin{lemma}
\label{lem:hole}
Let $C_{ext}=(v_1,v_2,\ldots,v_{|C_{ext}|})$ and $C_{int}=(u_1,u_2,\ldots,v_{|C_{int}|})$ be the cyclic walk of two faces of a planar graph. Then, all
distances between a prefix $C'_{ext}$ of $C_{ext}$ and  any subset $C'_{int}$ of $C_{int}$ can
be encoded in $\tilde\Oh(|C'_{ext}|+|C'_{int}|)$ bits.
\end{lemma}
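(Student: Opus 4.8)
The plan is to reduce to Lemma~\ref{lem:unit} by cutting $G$ open along a shortest path between the two faces so that both faces come to lie on the boundary of a single face, and then to separately account for the $O(1)$ essentially different ways a shortest path can wind around the hole by working with a constant number of copies of the resulting graph.

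First I would fix a shortest path $P$ in $G$ from $v_1$ (the first vertex of the prefix $C'_{ext}$) to some vertex of $C_{int}$, and cut $G$ open along $P$. In the resulting planar graph $\hat G_0$ the two faces $C_{ext}$ and $C_{int}$, together with the slit, merge into a single face; its boundary walk contains all of $C_{ext}$ (with the two copies of $P$ and all of $C_{int}$ appended after $v_1$), so the prefix $C'_{ext}$ is an initial, contiguous arc of this walk that still carries all of its edges $(v_i,v_{i+1})$, while $C'_{int}$ is a subset of the remainder of the walk. Moreover, distances in $\hat G_0$ coincide with distances in $G$ along any shortest path that does not cross the slit $P$.

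To capture the shortest paths that do cross $P$, I would additionally form two planar graphs $\hat G_{+1}$ and $\hat G_{-1}$, each obtained by taking two disjoint copies of $\hat G_0$ and identifying one copy of the slit $P$ in the first with the other copy of $P$ in the second (the two choices of orientation give $\hat G_{+1}$ and $\hat G_{-1}$). A shortest path of $G$ that crosses $P$ exactly once lifts to a path of the same length in $\hat G_{+1}$ or $\hat G_{-1}$, going from the first copy of its source to the second copy of its target, and in each $\hat G_{k}$ the sources $C'_{ext}$ and targets $C'_{int}$ again lie on a single face with $C'_{ext}$ a contiguous edge-carrying arc and $C'_{int}$ a subset of the rest of that face. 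I would therefore invoke Lemma~\ref{lem:unit} three times, once in each of $\hat G_{-1},\hat G_0,\hat G_{+1}$, to obtain three encodings of size $\tilde\Oh(|C'_{ext}|+|C'_{int}|)$ each; the final encoding stores all three and, on query $(v_i,u_{p_j})$, returns the minimum of the three stored distances. Since, conversely, any walk in an $\hat G_k$ projects to a walk of the same length in $G$, the returned value never underestimates $d_G(v_i,u_{p_j})$.

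What remains — and this is the crux — is the structural claim that every shortest path in $G$ between $C_{ext}$ and $C_{int}$ may be chosen to cross the slit $P$ at most once, so that one of $\hat G_{-1},\hat G_0,\hat G_{+1}$ realises it. I would prove this by uncrossing: if a shortest path $\pi$ meets $P$ transversally at two vertices $q_a,q_b$, then the subpath of $\pi$ between them and the subpath of $P$ between them are both shortest $q_a$-$q_b$ paths (the latter because $P$ itself is shortest), hence of equal length, so replacing the former by the latter gives another shortest path that runs along $P$ between $q_a$ and $q_b$; this path has strictly fewer transversal crossings with $P$, and it is still simple since otherwise it could be shortcut, contradicting that it is shortest. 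Iterating leaves at most one crossing. The remaining details to verify are routine: that cutting and gluing along $P$ keeps the graph planar with the claimed face structure, that $P$ really does emanate from an endpoint of the arc $C'_{ext}$ so that this arc stays contiguous, and the degenerate cases in which $G$ is disconnected or the two faces share vertices.
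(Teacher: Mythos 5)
Your proof is correct, and it shares the paper's opening move --- choose a shortest path $P$ linking the two faces, make an incision along it, and handle separately the shortest paths that do and do not cross $P$ --- but it realizes the ``crossing'' case by a genuinely different mechanism. The paper stays with the cut graph $\hat G_0$ and argues \emph{directly} that the matrix $M[i,j]=\min\{|P'|:P' \text{ a } v_i\text{-to-}u_j \text{ path crossing } P \text{ once}\}$ is Monge (two such paths for $(i,j)$ and $(i+1,j+1)$ must cross, via a parity-of-crossings argument) and unit-Monge (from the edge $(v_i,v_{i+1})$), and then re-runs the \emph{reasoning} of Lemma~\ref{lem:unit} on $M$, which is not itself a distance matrix of a planar graph. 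You instead build an explicit planar graph --- two copies of $\hat G_0$ glued along the slit in each of the two orientations --- in which the crossing-once distances \emph{are} realized as genuine shortest-path distances between $C'_{ext}$ in one copy and $C'_{int}$ in the other, both lying on a single face with $C'_{ext}$ edge-carrying, so Lemma~\ref{lem:unit} applies as a black box. That is an unfolding/double-cover trick that the paper does not use, and it is arguably cleaner because it avoids re-deriving Monge for an abstract matrix; the mild price is that you must check planarity and the face structure of the glued graph, which you correctly flag. Two further remarks: (i) you make explicit, via the uncrossing argument, the claim ``$P'$ can be assumed to cross $P$ at most once,'' which the paper only asserts; (ii) your choice to anchor $P$ at $v_1$, the first vertex of the prefix, is a nice touch --- it guarantees that after the incision $C'_{ext}$ remains a single contiguous edge-carrying arc of the merged face, sidestepping a small issue that the paper glosses over with its ``without loss of generality $P$ connects $v_1$ and $u_1$.''
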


\begin{proof}
We first
choose a shortest path $P$ between $C'_{ext}$ and $C_{int}$ and let $v_i$ and
$u_j$
be its endpoints. We make an incision along $P$ and apply Lemma~\ref{lem:unit}
to
encode the distances between $C'_{ext}$ and $C'_{int}$ corresponding to
shortest
paths that do not cross $P$ using $\tilde\Oh(|C'_{ext}|+|C'_{int}|)$ bits of
space. It remains to encode
distances corresponding to shortest paths that do cross $P$. Without loss of generality $P$ connects $v_1$ and $u_1$. We
orient
$C_{ext}$ and $C_{int}$ so that after making an incision along $P$ the vertices $v_2$ and $u_2$ are adjacent to the endpoints of $P$. 

\begin{figure}[h]
\begin{center}
\includegraphics[width=0.4\textwidth]{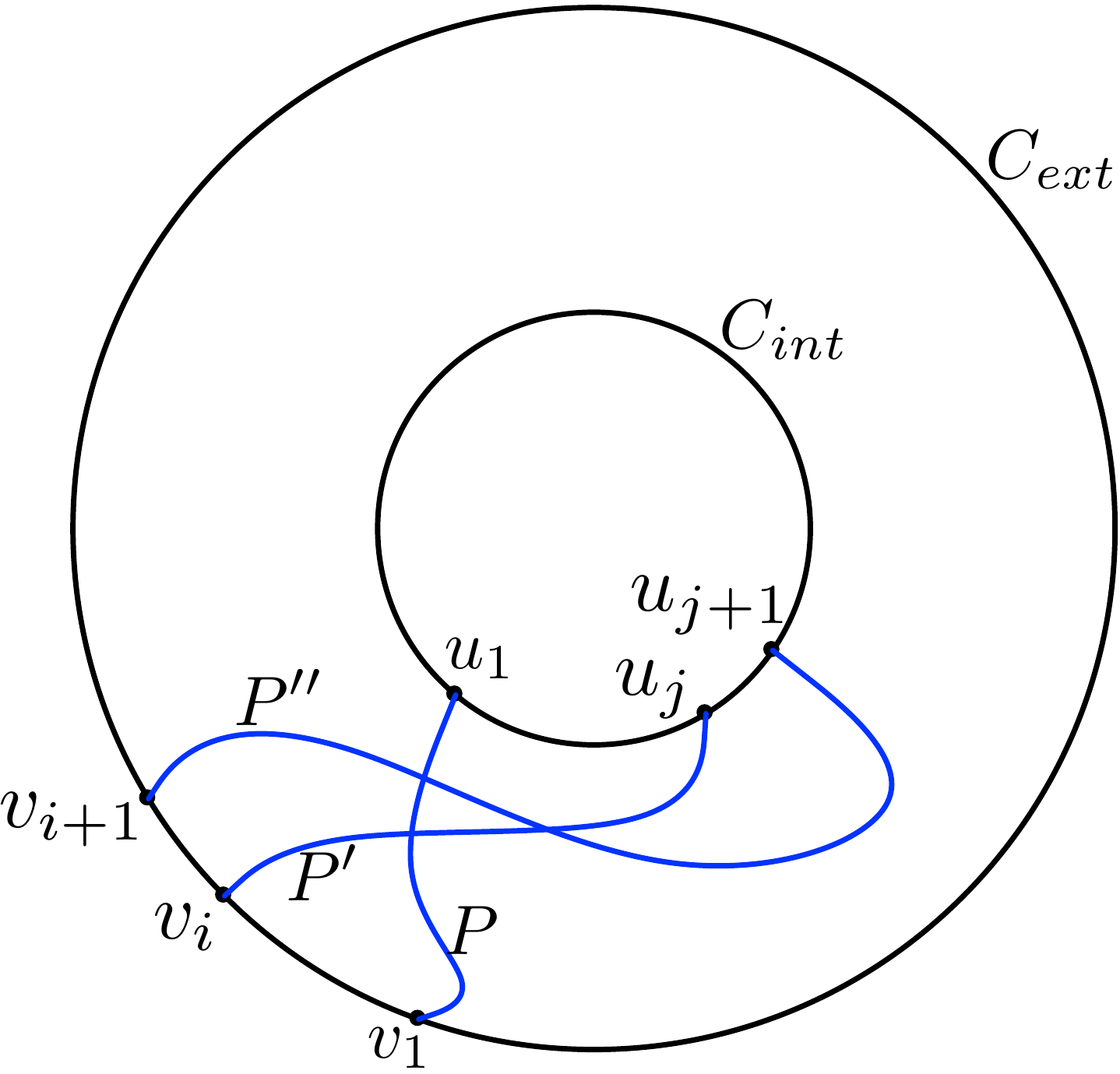}
\end{center}
\caption{The Monge property in Lemma~\ref{lem:hole}.\label{fig:cross}}
\end{figure}

Consider a shortest path $P'$ from $v_i$ to $u_j$ crossing $P$, see
Figure~\ref{fig:cross}. Because
both $P$ and $P'$ are shortest paths, $P'$ can be assumed to cross $P$ exactly once. Similarly, a shortest path $P''$
from
$v_{i+1}$ to $u_{j+1}$ crossing $P$ can be assumed to cross $P$ exactly once. 
We claim that $P'$ must cross $P''$. Otherwise, by considering an incision along $P'$ we can conclude that $P''$ crosses $P$ an even number of times but this is a contradiction. Therefore, any such $P'$ and $P''$ must cross. This means that the matrix $M$,
where $M[i,j]$ is set to be the length of a shortest path between $v_i$
and $u_j$ crossing $P$ once, is Monge. That is, 
\[ M[i+1,j+1] + M[i,j] \geq M[i+1,j]+M[i,j+1]. \]
Additionally, $M[i,j]-M[i+1,j] \in \{-1,0,1\}$ because
$(v_i,v_{i+1})$
is an edge. We can hence apply the reasoning from Lemma~\ref{lem:unit} to
encode $M$ using $\tilde\Oh(|C'_{ext}|+|C'_{int}|)$ bits.
\end{proof}

\section{The Encoding}
\label{sec:upper}

Our encoding is based on decomposing the input graph $G$ into
\emph{slices}. To define the slices, recall
the {\em face-vertex incidence graph} $FV(G)$ of a planar graph $G$: It has a vertex for every vertex $v$ of $G$ and a vertex for every face $f$ of $G$, and if a vertex $v$ of $G$ is incident to a face $f$ of $G$ then there is an edge between their corresponding vertices in $FV(G)$.

We run a breadth-first search in $FV(G)$, starting from
the node representing the infinite face of $G$.
After every even number of steps, the yet unexplored part of the graph
can be decomposed into a number of connected components, the boundary of each
being
a simple cycle.
More formally, we assume that the infinite face of $G$ is a triangle by
enclosing the whole graph in a triangle, which is connected to one of the
original vertices with a single edge. 
We define the level of a face $f$ or a vertex $v$ of $G$ to be its
depth in the BFS tree of $FV(G)$. Thus, e.g., the level of the infinite
face of $G$ is zero, and the level of the vertices incident to the infinite
face of $G$ is 1. For each even integer $i \geq 2$, we define
$\mathcal K_{\geq i}$ to be the set of connected components of the
 subgraph of $G$
induced by the faces with level at least $i$. We call each component
$K \in \mathcal K_{\geq i}$ a level-$i$ component.
We use a tree $\mathcal K$ called the {\em component tree} of $G$ to capture the nesting of level
components.  
The nodes of $\mathcal K$ are the level components of
$G$. A level component $K$ is an ancestor of a level component $K'$ in
$\mathcal K$ if the set of faces in $K$ contains the set of faces in
$K'$.
Since we assume that the infinite face of $G$ is a simple triangle,  $\mathcal
K$ is indeed a tree whose root is the component corresponding to the set of all
faces of $G$ except
for the infinite face. 

The boundary of a component $K$ is the set of edges that are incident
to a face in $K$ and to a face not in $K$. It is not difficult to see
that the boundary of each component $K$ is a simple cycle in $G$, and that the boundaries of different components are edge-disjoint. See~\cite{KleinMS13, PlanarBook} for these and other properties of components and the component tree.
For a node $k \in \mathcal K$, we associate $k$ with the boundary
cycle $C_k$ of the level component
represented by $k$, and define the cost of $k$ denoted $\cost(k)$ to be
the length of $C_k$.
For
example, for the root $r$ of $\mathcal K$ we have that $C_r$ is a
triangle and that $\cost(r)=3$.

\begin{lemma}
\label{lem:slices}
For any $w\geq 1$, there exists  $\delta\in [0,w)$ such that the total cost of
all nodes
of $\mathcal K$ at depth $\delta,\delta+w,\delta+2w,\ldots$ is $\Oh(n/w)$.
\end{lemma}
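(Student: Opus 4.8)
The plan is to first bound the total cost summed over \emph{all} nodes of $\mathcal K$, regardless of depth, and then extract a good residue by an averaging argument over the $w$ residue classes modulo $w$.

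For the first step I would use the structural facts recalled above: the boundary of each level component $K$ is a \emph{simple cycle} $C_k$ in $G$, the boundaries of distinct level components are pairwise \emph{edge-disjoint}, and $\cost(k)$ is exactly the number of edges on $C_k$. Consequently, $\sum_{k\in\mathcal K}\cost(k)$ counts every edge of $G$ at most once, so it is at most $|E(G)|$. Since $G$ is planar we have $|E(G)|=O(n)$, and therefore $\sum_{k\in\mathcal K}\cost(k)=O(n)$. (The enclosing triangle we added to make the infinite face a triangle contributes only $O(1)$ edges and $O(1)$ to this sum, so it does not affect the asymptotics.)

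For the second step, partition the nodes of $\mathcal K$ by the residue of their BFS depth modulo $w$: for each integer $r\in\{0,1,\ldots,w-1\}$, let $A_r$ be the set of nodes of $\mathcal K$ whose depth lies in $\{r,\,r+w,\,r+2w,\ldots\}$. Every node of $\mathcal K$ has a unique depth, so $\{A_r\}_{r=0}^{w-1}$ is a partition of $\mathcal K$ and
\[
\sum_{r=0}^{w-1}\ \sum_{k\in A_r}\cost(k)\ =\ \sum_{k\in\mathcal K}\cost(k)\ =\ O(n).
\]
By the pigeonhole principle there is some $\delta\in\{0,1,\ldots,w-1\}\subseteq[0,w)$ with $\sum_{k\in A_\delta}\cost(k)\le \tfrac1w\sum_{k\in\mathcal K}\cost(k)=O(n/w)$, which is exactly the desired $\delta$.

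I do not expect a genuine obstacle here; the only points that warrant care are (i) confirming that $\cost(k)$ equals the edge-length of $C_k$ and that the boundary cycles of different components are truly edge-disjoint — both of which are among the component-tree properties cited from \cite{KleinMS13, PlanarBook} — and (ii) checking the trivial accounting of the extra enclosing triangle so that the $O(n)$ bound on the total cost is legitimate.
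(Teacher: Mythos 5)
Your proof is correct and follows essentially the same route as the paper's: bound the total cost over all of $\mathcal K$ by $O(n)$ via edge-disjointness of the boundary cycles, then apply an averaging/pigeonhole argument over the $w$ residue classes of depth modulo $w$. The paper's proof is just a terser version of what you wrote.
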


\begin{proof}
$\sum_{v\in\mathcal{K}} \cost(v)=\Oh(n)$ because cycles corresponding to the
nodes of $\mathcal{K}$ are  pairwise edge disjoint. Let $S(\delta)$ consist
of all nodes of $\mathcal{K}$ at depth $\delta,\delta+w,\delta+2w,\ldots$.
Then $S(\delta)\cap S(\delta') = \emptyset$ for $\delta\neq\delta'$ and
$\sum_{\delta\in[0,w)} \sum_{v\in S(\delta)} \cost(v)=\Oh(n)$,
so there exists $\delta\in [0,w)$ such that $\sum_{v\in S(\delta)}\cost(v) =
\Oh(n/w)$
as claimed.
\end{proof}

To define the slices we apply Lemma~\ref{lem:slices} and call the nodes of
$\mathcal{K}$ at
depth $\delta,\delta+w,\delta+2w,\ldots$ \emph{marked}. The root of
$\mathcal{K}$
is also marked. Then, for every marked node $v\in \mathcal{K}$, 
the slice of $v$ is the subgraph of $G$ enclosed by $C_v$ and not
strictly enclosed by $C_u$ for any marked descendent $u$ of $v$.
The embedding of slices is inherited from the embedding of $G$. 
Thus, the boundary of the infinite face of the slice $s$ of $v$ is $C_v$.
The cycle $C_v$ is also
called \emph{the boundary of the slice $s$}. 
Each cycle $C_u$ corresponding to a marked descendant $u$ of $v$ such that
there are no other marked nodes on the $v$-to-$u$ path becomes a face
in the slice $s$. Such a face is called a \emph{hole} of $s$,
and $C_u$ is called the {\em boundary of the hole}. Note that, by definition, $C_u$ is the boundary of the level component that is embedded in the hole $u$.
 Because the total
cost of all marked nodes is $\Oh(n/w)$ and the cost of the root is 3, the total
size of all boundaries in all slices is $\Oh(n/w)$. Additionally, by
construction, for any slice $s$, 
a breadth-first search of $FV(s)$, the face-vertex incidence graph of $s$,
starting at the infinite face of $s$, terminates after
$\Oh(w)$ iterations and every hole is a leaf in the obtained breadth-first
search tree.

By definition of slices, each slice contains faces and vertices at $\Oh(w)$
consecutive levels. 
We would like to use in our solution short (i.e., $\Oh(w)$) fundamental cycle
separators within each slice. 
However, the diameter of a slice is not necessarily $\Oh(w)$ because face sizes
may be large. To deal with this issue we triangulate the faces so that a BFS
tree of a slice will have depth $\Oh(w)$, and will be consistent with the BFS tree
of $FV(s)$. 

\begin{lemma}
\label{lem:triangulation}
We can triangulate all  
faces of a slice $s$ so that a BFS starting from the external
face produces a spanning tree $T_s$ with the property that vertex $v$ is the
parent of vertex $u$ in $T_s$ if and only if $v$ is the grandparent of $u$ in
the BFS tree of $FV(s)$.
\end{lemma}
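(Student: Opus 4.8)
The plan is to triangulate each face from a carefully chosen apex --- the vertex that is its parent in the breadth-first search (BFS) tree of $FV(s)$ --- and then to observe that this choice \emph{forces} the tree $T_s$ in the statement. Throughout, for a vertex or face $x$ of $s$ let $\ell(x)$ denote its depth in the BFS tree of $FV(s)$ rooted at the infinite face of $s$; faces receive even depths, vertices receive odd depths, and the vertices on the boundary of the infinite face have $\ell=1$. Write $L_j=\{v:\ell(v)=2j-1\}$ for the $j$-th layer of vertices, so $L_1$ is the set of vertices on the boundary of the infinite face. Recall from the discussion preceding the lemma that every hole is a leaf of this BFS tree, so no hole is the BFS parent of any vertex.

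First I would fix the triangulation. For each face $f$ of $s$ other than the infinite face and the holes, let $v_f$ be the parent of $f$ in the BFS tree of $FV(s)$; then $v_f$ lies on $f$ and $\ell(v_f)=\ell(f)-1$. Triangulate $f$ by adding, inside $f$, a chord from $v_f$ to every other vertex of the facial walk of $f$; this subdivides $f$ into triangles while leaving the infinite face and the holes untouched. The one structural fact I need about the resulting graph $G_s'$ is that \emph{every} edge $xy$, original or added, has its endpoints in the same layer or in consecutive layers, i.e.\ $|\ell(x)-\ell(y)|\le 2$: for an original edge this holds because $x$ and $y$ lie on a common face $f'$ of $s$, so in $FV(s)$ they are joined by a walk of length two through $f'$; for an added chord $u v_f$ it holds because $u$ and $v_f$ both lie on $f$, whose depth is within one of each of theirs. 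From this I would deduce that $\mathrm{dist}_{G_s'}(L_1,u)=j-1$ for every $u\in L_j$: the lower bound is immediate, since every edge changes the layer index by at most one and $L_1$ sits in layer~$1$; for the upper bound, note that when $u\notin L_1$ the BFS parent of $u$ in $FV(s)$ is a face $f$ with $\ell(f)\ge2$, and since $u$ is a child of $f$ this $f$ is not a leaf, hence (every hole being a leaf) it is neither a hole nor the infinite face, so $f$ was triangulated as a fan from its own BFS parent $v_f$ --- which is exactly the grandparent of $u$ in the BFS tree of $FV(s)$, lies in $L_{j-1}$, and is joined to $u$ by an added chord. Iterating ``replace $u$ by its grandparent'' then yields a path from $u$ to $L_1$ of length $j-1$ in $G_s'$.

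Finally I would read off $T_s$: declare the parent of each $u\notin L_1$ to be its grandparent in the BFS tree of $FV(s)$. By the distance computation this grandparent is a neighbor of $u$ in $G_s'$ lying one layer --- hence one unit of breadth-first distance --- closer to $L_1$, so $T_s$ is a breadth-first search tree of $G_s'$ rooted at $L_1$, namely the one produced by the BFS that breaks ties in favor of the grandparent, and it is spanning because following parent pointers strictly decreases the layer index and therefore reaches $L_1$. The claimed equivalence ``$v$ is the parent of $u$ in $T_s$ iff $v$ is the grandparent of $u$ in the BFS tree of $FV(s)$'' then holds by construction. I expect the only delicate point to be routine bookkeeping rather than the geometry above: a facial walk need not be a simple cycle (a slice need not be $2$-connected, and the triangle enclosing $G$ contributes a bridge), so ``fan from $v_f$'' should be read with the usual convention for triangulating a closed walk, or applied after a standard $2$-connectivity augmentation of $s$; and one must fix once and for all that ``a BFS starting from the external face'' means the multi-source BFS whose sources are the vertices of $L_1$ --- which are then the roots of $T_s$ and have no grandparent, consistently with the statement.
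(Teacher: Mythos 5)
Your construction is essentially the paper's: you triangulate each face by a fan from the vertex that is the face's BFS parent in $FV(s)$, so that every grandparent--grandchild pair sharing a face becomes an edge, and then argue that the resulting BFS from the external side produces the desired tree. Your distance argument (every edge changes the layer index by at most one, plus an explicit grandparent chain of the matching length) is a welcome elaboration of the paper's one-line assertion that ``there exists a BFS tree $T_s$ \dots that satisfies the statement.''

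There is, however, a real omission. The lemma says ``triangulate \emph{all} faces of a slice $s$,'' but you explicitly leave the infinite face and the holes untouched, and you substitute a multi-source BFS from $L_1$ for the BFS ``starting from the external face.'' The paper handles these faces by adding an artificial vertex $v_s$ inside the infinite face (which becomes the BFS root) and an artificial vertex $v_h$ inside each hole $h$, with fans from $v_s$ and $v_h$ to the respective boundaries. This is not just cosmetic: later in the paper $v_h$ is the carrier of the weight of all distinguished vertices enclosed by $h$ and is used throughout the disposable-hole machinery, and the fact that each $v_h$ is a \emph{leaf} of $T_s$ (so the grandparent property holds for it vacuously) is part of the lemma's content and is noted explicitly in the paper's proof. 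You should add these artificial vertices and fans; they coexist harmlessly with your distance argument (each $v_h$ lies at distance one more than its boundary, $v_s$ becomes the single BFS root replacing your multi-source set $L_1$), and only then is ``triangulate all faces'' actually satisfied. Your caveats about non-simple facial walks and about fixing the tie-breaking rule of the BFS are appropriate and match what the paper implicitly assumes.
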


\begin{proof}
Let $T_{FV}$ be the BFS tree of $FV(s)$. If $v$ and $u$ are incident to the same face in $s$, and $v$ is a grandparent of $u$, and $vu$ is not an edge in $s$, we add $vu$ as an artificial triangulation edge to $s$. Adding these edges can be done consistently with the embedding of $s$ because the path in $T_{FV}$ can be embedded on the same plane as $s$ such that $s$ and $T_{FV}$ only intersect at vertices of $s$. See Figure~\ref{fig:slices}.
We introduce an artificial vertex $v_s$ embedded in the infinite face of $s$ and
triangulate the infinite face of $s$ by adding edges between $v_s$ and every
vertex of the infinite face of $s$.  
Similarly, we triangulate each hole $h$ of $s$ by introducing an artificial vertex $v_h$, embedded in $h$, and adding edges between $v_h$ and every vertex on the boundary of $h$. 
Any remaining non-triangulated faces can be triangulated arbitrarily.
Since for every grandparent to grandchild path in $T_{FV}$ there is a corresponding edge in the triangulation of $s$, there exists a BFS tree $T_s$ rooted at the artificial vertex $v_s$ that satisfies the statement
of the lemma. Note that all the artificial vertices embedded in holes of $s$
are leaves of $T_s$, and hence satisfy the statement of the lemma
vacuously. 
\end{proof}

\begin{figure}
\centering
\begin{minipage}{0.5\textwidth}
  \centering \vspace{-0.4in}
  \includegraphics[width=.8\textwidth]{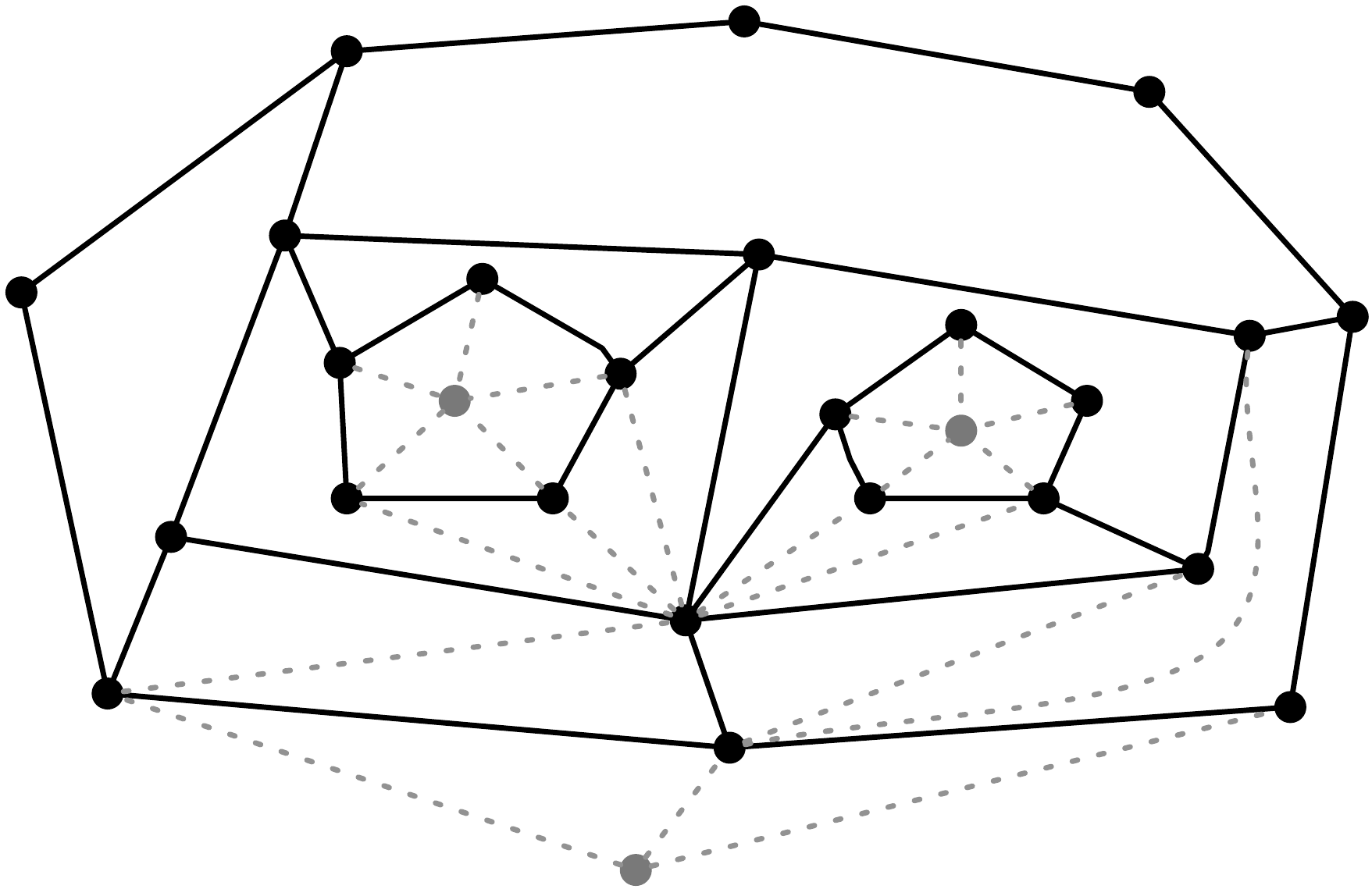}
\end{minipage}%
\begin{minipage}{0.5\textwidth}
  \centering
  \includegraphics[width=.8\textwidth]{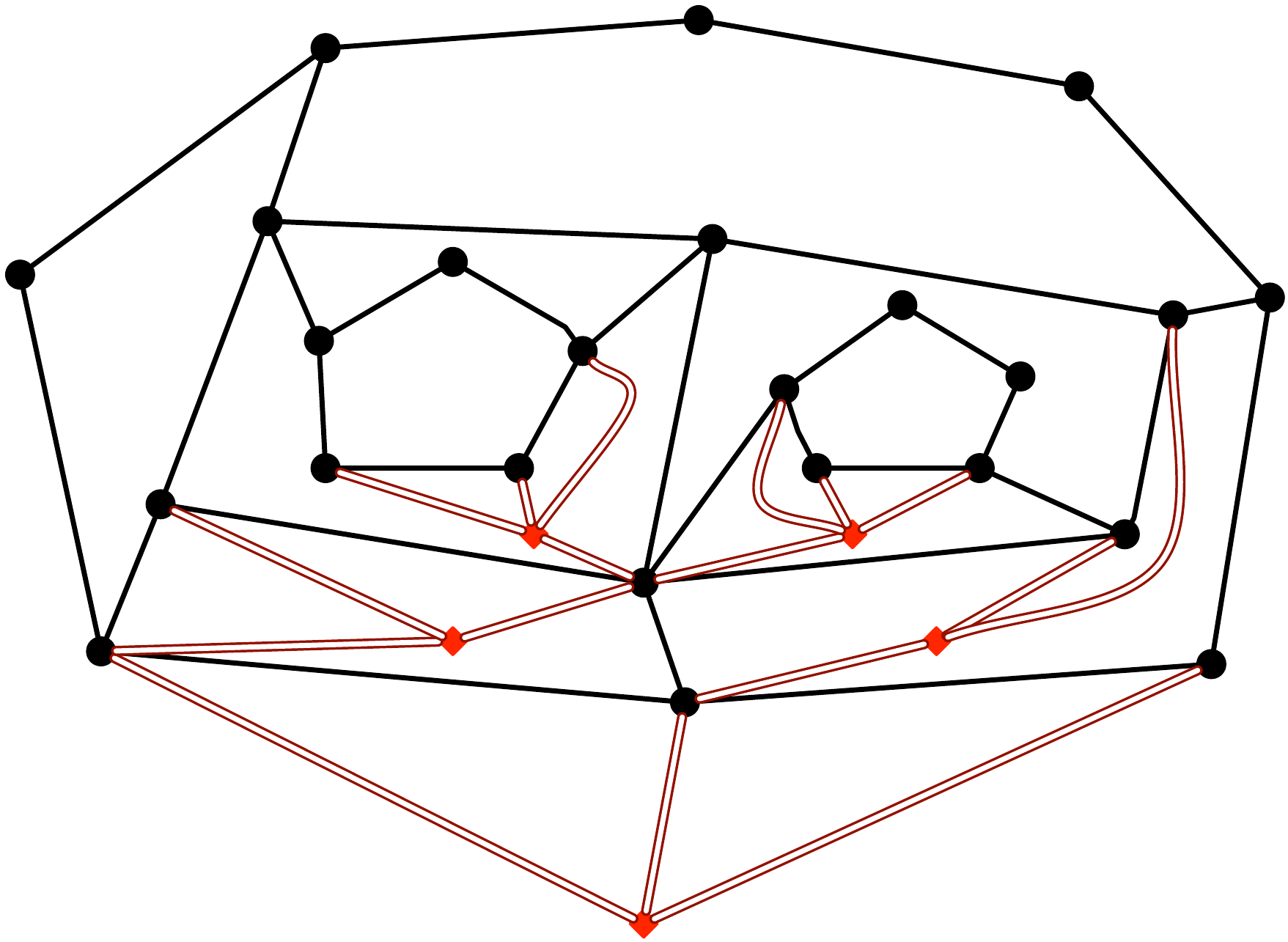}
\end{minipage}
\begin{minipage}{0.5\textwidth}
  \centering \vspace{0.05in}
\includegraphics[width=0.8\textwidth]{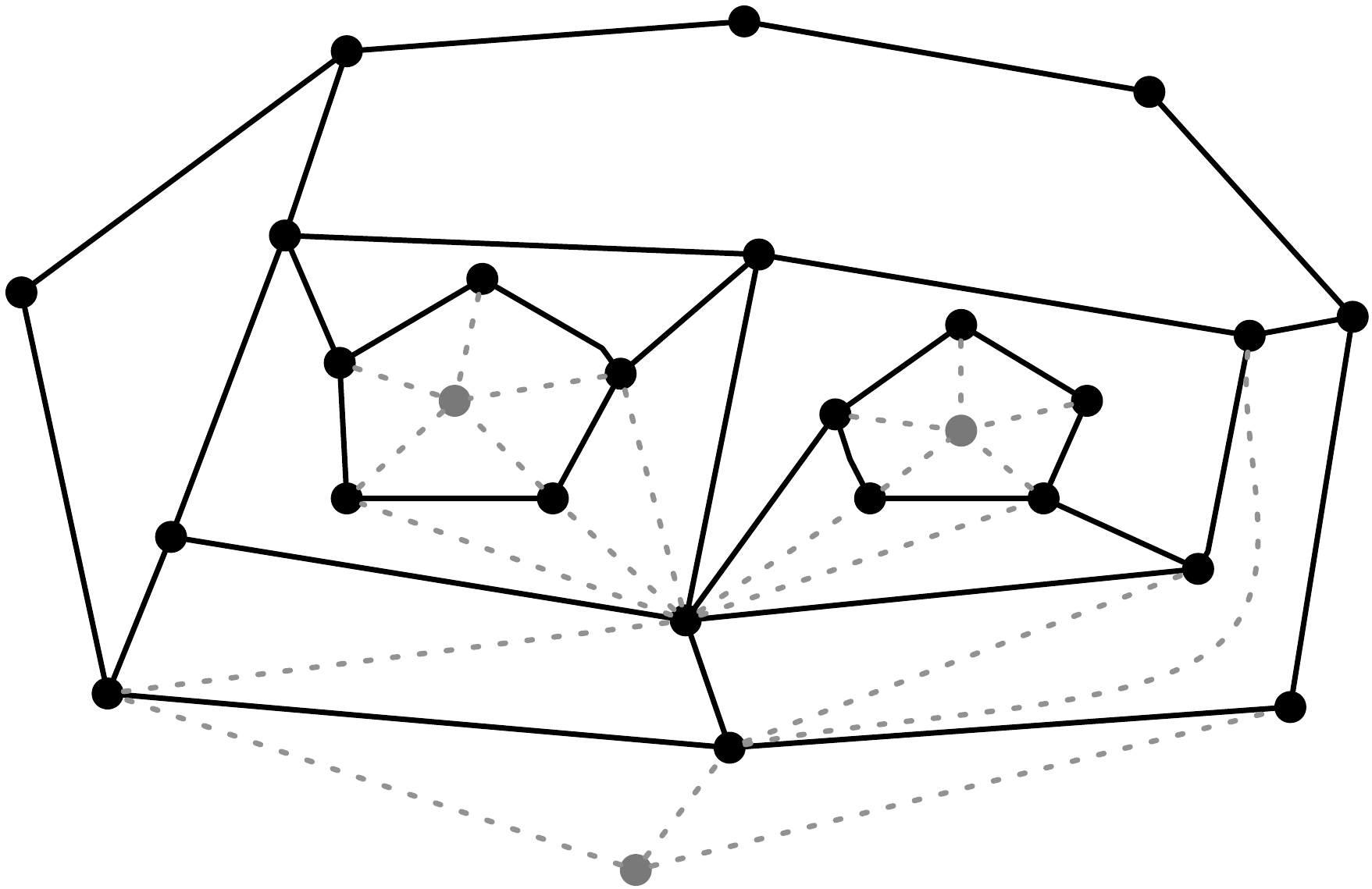}
\end{minipage}%
\begin{minipage}{0.5\textwidth}
  \centering \vspace{0.051in}
\includegraphics[width=0.8\textwidth]{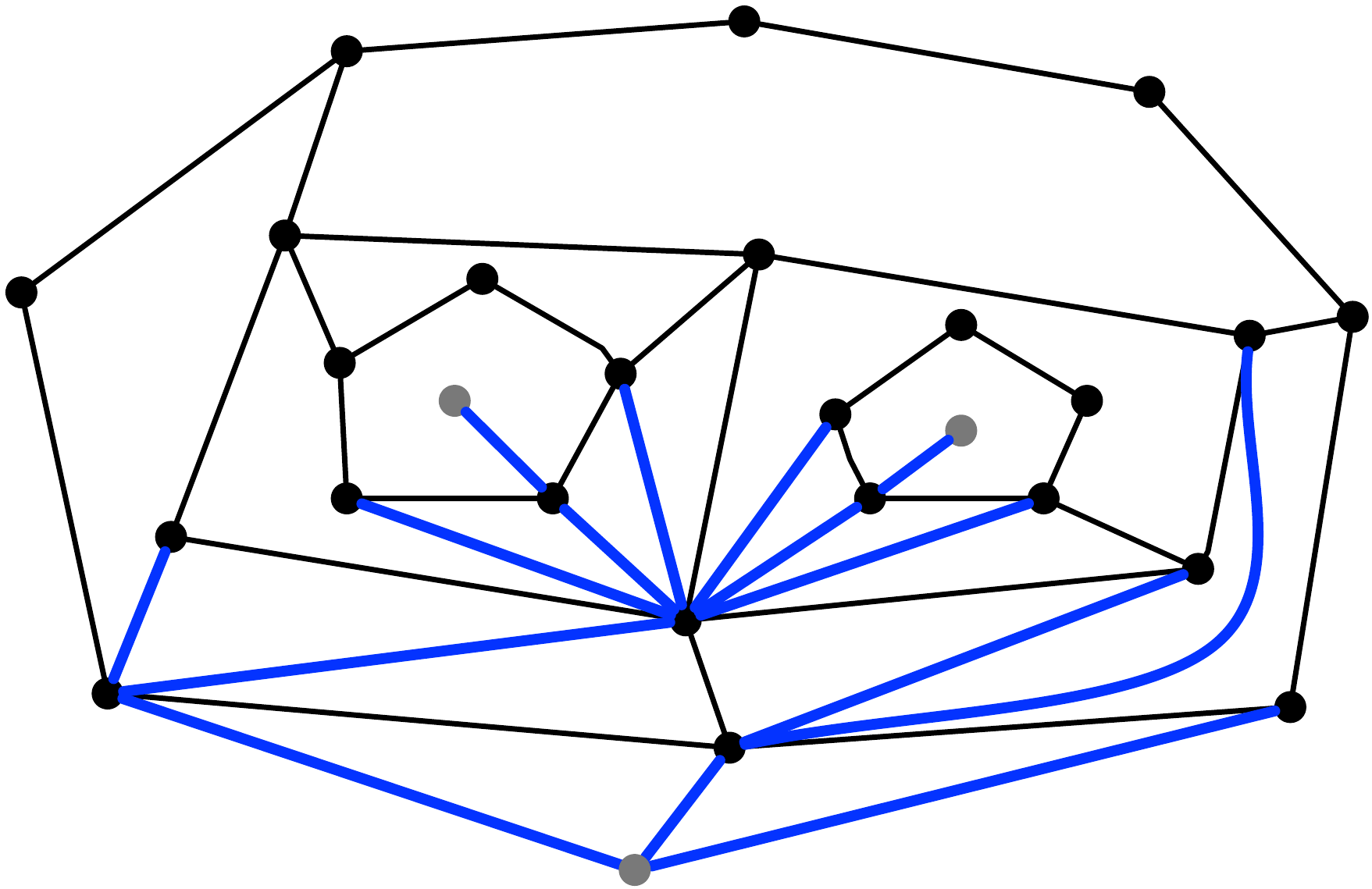}
\end{minipage}
\caption{Triangulating a slice. The vertices and edges of a slice $s$ with two holes are shown in solid black. In this example $w=2$. Edges of $T_{FV}$, the BFS tree of $FV(s$) are shown in double red lines. Only some of the edges of $T_{FV}$ are shown to avoid clutter. Artificial triangulation edges are shown in dashed gray. The BFS tree $T_s$ of the triangulation of $s$ is shown in blue. \label{fig:slices}}
\end{figure}

Let $s'$ be the graph obtained from the slice $s$ after applying Lemma~\ref{lem:triangulation}.
Let $T_s$ be the BFS tree of $s'$. Note that, any fundamental cycle $C$ w.r.t. \!$T_s$ consists of two paths in $T_s$, each consisting of $\Oh(w)$ vertices due to the triangulation. However, $C$ may use edges that are not original edges of $s$ (i.e., artificial triangulation edges). We do not want to consider such edges when dealing with distances, because distances in $s'$ differ from distances in $s$. To this end we use the notion of a Jordan curve. A Jordan curve in $s$ is an embedded curve that intersects the embedding of $s$ only
at vertices of $s$. Since the embedding of the triangulation $s'$ is consistent with that of $s$, each path in $T_s$ is a Jordan curve in $s$. We say that $T_s$ is a {\em Jordan tree} in $s$. In particular, any fundamental cycle w.r.t. \!$T_s$ is a Jordan cycle (closed Jordan curve) in $s$. 
We next describe how the tree $T_s$ can be used to recursively decompose $s$ into  subgraphs called {\em regions}.   

A region $R$ is a subgraph of $s$. The boundary of $R$ is defined as the set of vertices of $R$ that are incident (in $G$) to both an edge in $R$ and to an edge not in $R$. Thus, for example, the boundary of the region consisting of the entire slice $s$ consists of the external boundary $C_{ext}$ of $s$ and of the boundaries of all the holes of $s$. 
Let $R$ be a region. Let $C$ be a fundamental cycle w.r.t. \!$T_s$. The tree $T_s$ may contain edges that are not edges of $R$ (either because they are triangulation edges, or because they are edges of $s$ that do not belong to the region $R$). Since the embedding of $T_s$ is consistent with the embedding of any subgraph of $s$, $C$ is a Jordan cycle in $R$. The operation of separating $R$ using $C$ yields two subgraphs. One is the subgraph induced by the faces of $R$ strictly enclosed by $C$ and the other is the subgraph induced by the faces of $R$ not strictly enclosed by $C$. This view of $T_s$ as a Jordan tree in any region allows us to reuse the same tree $T_s$ throughout the recursive decomposition.

This recursive process can be described by a binary tree $\mathcal T_s$. Each node $v$ of
$\mathcal T_s$ corresponds to a region (subgraph) $R_v$ of $s$. The root of
$\mathcal T_s$ is the entire slice $s$. Each non-leaf node $v$ of $\mathcal T_s$ is associated with a (Jordan) fundamental cycle separator of $T_s$, which we denote $Sep_v$. The regions of the two children of $v$ are the regions obtained by separating $R_v$ with the Jordan cycle $Sep_v$.

\subsection{The simplified case of a single hole}\label{singleholesection}
We begin with the simplified case, in which we assume
that each slice has a single hole. This is the case, for example, when the
input planar graph is a grid (with possibly subdivided edges).

First we use Lemmas~\ref{lem:cycle} and~\ref{lem:hole} to store, for each slice
$s$ with external boundary $C_{ext}$, and a single hole $h$ with external boundary $C_h$ the following distances. The
{\bf boundary-to-boundary} distances: the distances (in $s$) among the vertices of $C_{ext}$, and the {\bf hole-to-boundary} distances: the
distances (in $s$) between the vertices of $C_{ext}$ and the vertices of $C_h$.
 
Boundary-to-boundary and hole-to-boundary distances encode distances ``between slices''.
We also need to encode distances ``within slices''. 
We will use the fact that $s$ has a spanning tree of depth $\Oh(w)$ to decompose $s$ into regions, each containing a single distinguished node (i.e., node of $S$), and having a boundary that consists of $\Oh(w)$ vertices. Then we can afford to store, for each distinguished node, its distance to the boundary of its region, and, using the unit-Monge property, to also store the distances between the $\Oh(w)$ vertices on the boundary of each region $R$ to the vertices of $C_{ext}$ and $C_h$ (i.e., the boundary of $s$) that belong to $R$. These distances will suffice for reconstructing the distance between any pair of distinguished nodes.
 
Let $S_s$ denote the set of distinguished vertices in slice $s$. We use
fundamental (Jordan) cycle separators w.r.t. \!the tree $T_s$ to recursively
divide $s$ into regions, until each region contains a single distinguished vertex. At each recursive step we separate a region $R$ into two subregions by choosing a
fundamental cycle separator w.r.t. \!$T_s$ that balances the number of
distinguished vertices in $R$ (i.e., assigning unit weight to each distinguished vertex in $R$ and
zero weight to all other vertices).  
Note that, since we use balanced separators, the depth of the recursion tree $\mathcal T_s$ is $\Oh(\log |S_s|) = \Oh(\log n)$. Recall that the fundamental cycle separators w.r.t. \!$T_s$ do not cross each other, and, by construction of
$T_s$ in Lemma~\ref{lem:triangulation}, each fundamental cycle separator crosses each of the external boundary of $s$ and the hole of $s$ at most twice. Therefore, the
boundary of each region $R_v$ corresponding to a node $v$ in the recursive decomposition tree $\mathcal T_s$ contains  $\Oh(\log n)$ vertex disjoint maximal subpaths of
$C_{ext}$, and $\Oh(\log n)$ vertex disjoint maximal subpaths of $C_h$.

At the
step of the recursive decomposition corresponding to node $v \in
\mathcal T_s$ with separator $Sep_v$ and two children $u,w$, we store {\bf $\boldmath S$-to-separator} distances: explicitly store the distances
(in $R_v$) between every vertex of $S$ in $R_v$ and every vertex of $Sep_v$, {\bf separator-to-boundary} distances and {\bf separator-to-hole} distances:
for $i \in \{u,w\}$, the distances (in $R_i$) between every vertex of $Sep_v$ and every maximal
subpath of $C_{ext}$ or $C_h$ on the boundary of $R_i$, using Lemma~\ref{lem:unit} or Lemma~\ref{lem:hole} (depending whether they lie on a single or two faces of $R_i$).  
Finally, for every leaf $v \in \mathcal T_s$, we store {\bf $S$-to-boundary} distances and {\bf $S$-to-hole} distances: the distance between the unique distinguished vertex in $R_v$ to
every vertex of $C_{ext}$ or $C_h$ on the boundary of $R_v$.

\paragraph{Analysis.}
We first show that the total space is $\tilde \Oh(\sqrt{k\cdot n})$, and then show
that the distances between any pair of vertices in $S$ can be recovered using just
the information we stored.
Since the total size of all slice boundaries is $\Oh(n/w)$, storing the boundary-to-boundary distances and the hole-to-boundary distances takes $\tilde \Oh(n/w)$ using Lemmas~\ref{lem:cycle}
and~\ref{lem:hole}. 
Since the depth of $\mathcal T_s$ is $\Oh(\log n)$, each vertex of $S_s$
belongs to $\Oh(\log n)$ regions in the decomposition of $s$.
Since, in addition, $|Sep_v| = \Oh(w)$ for every $v \in \mathcal T_s$, the total space required for storing the $S$-to-separator distances is $\tilde \Oh(k\cdot w)$.
Consider a region $R$ of a slice $s$. Recall that the vertices of $C_{ext}$
($C_h$) that belong to $R$ lie on $O(\log n)$ vertex disjoint maximal subpaths
of $C_{ext}$ ($C_h$). The endpoints of each such maximal subpath may belong to another region $R'$ at the same depth in $\mathcal T_s$. Therefore, $R$ shares $O(\log n)$ vertices of $C_{ext}$ ($C_h$) with other regions at the same depth in $\mathcal T_s$.
Finally, recall that the number of regions of $s$ is $\tilde \Oh(|S_s|)$.
Therefore, using Lemma~\ref{lem:unit} or Lemma~\ref{lem:hole}, the total space for storing the separator-to-boundary and separator-to-hole distances is $\tilde \Oh(n/w+k+k\cdot w)$.
In more detail, let $c_i$ be the total number of slice/hole boundary vertices in the $i$-th slice.
Then, in every slice every boundary/hole vertex that is not an endpoint of a maximal subpath
contributes at most once at each level of recursion. At each level, we have at most $k$ recursive calls,
so at most $\Oh(k\log n)$ maximal subpaths and at most $k$ fundamental cycle separators.
Therefore, the total space is $\Oh((\sum_{i} c_{i}+k\log n+k\cdot w)\log n)=\tilde\Oh(n/w+k+k\cdot w)$.
Storing $S$-to-boundary distances and $S$-to-hole distances at the leaves of the recursion tree requires total $\tilde \Oh(k + n/w)$ bits since each boundary or hole vertex belongs to exactly one leaf region, except for $O(k\log n)$ vertices (endpoints of maximal subpaths).
Choosing $w=\sqrt{n/k}$ proves the space bound.

Finally, we prove that the distances between any pair of vertices in $S$ can be recovered using just the information we stored.
For any $x,y \in S$, if a shortest $x$-to-$y$ path does not leave $s$ then $x,y \in S_s$, and the distance can be obtained using the
$S$-to-separator distances stored in the lowest common ancestor of the regions of $x$ and $y$ in $\mathcal T_s$. 
Otherwise, let $P$ be a shortest path between vertices $x \in S_{s'}$ and $y \in S_{s}$ (where $s'$ is either $s$ or, wlog, enclosed by the hole of $s$). 
Let $P[i,j]$ denote the subpath of $P$ between vertices $i$ and $j$. 
Let $v$ be the first vertex
of $P$ that belongs to the boundary of $s'$ or to the boundary of a hole of $s'$. 
If $P[x,v]$ contains some vertex of a fundamental cycle separator used in processing $s'$, let $u$ be
the last vertex of $P$ that belongs to the earliest such separator.
If $u$ does not exist, then the length of $P[x,v]$ is stored as an $S$-to-boundary or an $S$-to-hole distance. 
If $u$ exists then the length of $P[x,u]$ is stored as a $S$-to-separator distance,
and the length of $P[u,v]$ is stored as a separator-to-boundary or separator-to-hole distance. Let
$w$ be the last vertex of $P$ that belongs to the boundary of $s$. The length
of $P[v,w]$ can be computed from boundary-to-boundary and hole-to-boundary distances since $P[v,w]$
can be decomposed into subpaths between boundary vertices of slices. The length
of the suffix $P[w,y]$ can be computed in a similar manner to that of the
prefix $P[x,v]$. 

\subsection{The general case}

A difficulty that arises in the presence of multiple holes is that since the number of holes is not bounded by a constant, we cannot afford to store distances involving holes. For example, storing hole-to-boundary  distances between the external boundary $C_{ext}$ of a slice $s$ and the boundary of each hole of $s$ requires $\Omega(|C_{ext}|)=\Omega(n/w)$ bits per hole. Since the number of holes can be $\Omega(n)$, the total space could be $\Omega(n^2/w)$.

The role of storing distances involving boundaries of holes was to allow the recovery of distances to distinguished vertices  enclosed in these holes. We modify our approach for processing a slice $s$ to take into account the distinguished vertices enclosed in holes of $s$ as well as the distinguished vertices in $s$ itself. As in the single hole case, the slice $s$ will be recursively divided using fundamental cycle separators.
For any region $R$ encountered along the recursive process, let $S^R$ denote the subset of the distinguished vertices in $R$, as well as those enclosed by any hole in $R$.  Thus, for example, $S^s$ is the set of all vertices in $S$ that are enclosed (in $G$) by the external boundary of slice $s$.
We say that a Jordan cycle separator $C$ of a region $R$ is \emph{good} if it is balanced w.r.t. \!$S^R$ and does not go through any hole of $R$. The problem with Jordan separators that go through some hole $h$ is that they partition the distinguished vertices enclosed by $h$ in an unspecified way since these distinguished vertices are not represented in $R$. 
It is not hard to see that if a good separator always exists then we do not need to store any distances involving holes. 

In reality we cannot always find a good separator. 
Consider, for example, the case where some hole $h$ of a region $R$ encloses most of the vertices of $S^R$. Clearly, a separator that is balanced w.r.t. \!$S^R$ must go through $h$. Thus, there is no good separator in such a case.
We show, however, that we can always either find a good separator, or there exists some hole (which we call a \emph{disposable} hole) that can be dealt with in a special way.
This is reminiscent of recursive procedures based on heavy path decomposition, where heavy nodes (disposable holes in our case) are treated differently than light ones.  
We guarantee that, in either case, each resulting subregion contains only a constant fraction of $S^R$, so the depth of the recursion is $O(\log n)$. 
We next explain the details.

\paragraph{Good separators and disposable holes.}
Let $R$ be a region. We define the weight of each vertex $v$ of $R$ to be $1$ if $v$ is a distinguished vertex. For each hole $h$ of $R$, we define the weight of the artificial vertex $v_h$ embedded in $h$ to be the  number of distinguished vertices
strictly enclosed (in the whole graph $G$) by the boundary of $h$. All other vertices are assigned weight zero. 
 
Recall that a cycle separator is \emph{good} if it does not go through  any hole. 
We would like to separate $R$ using a good fundamental cycle separator $C_e$ of some edge $e$ w.r.t. \!$T_s$. If we can find such separator $C_e$ where $e$ is not incident to $v_h$ for some hole $h$, then $C_e$ is a good separator (since the vertices $v_h$ are leaves of the spanning tree $T_s$). Otherwise, we must separate $R$ with a fundamental cycle separator that goes through holes. We next define \emph{disposable} holes, and then show that we can allow the fundamental cycles to go through such holes.

Let $k$ be a node (level component) in $\mathcal K$. Let $C_k$ be the boundary cycle of $k$. Let $e$ be an edge of $C_k$. Note that $e \notin T_s$. This is because both endpoints $e$ have the same level, so, by Lemma~\ref{lem:triangulation}, neither can be the parent of the other in $T_s$.
Let $f,g$ be the endpoints of $e$ in the dual graph, such that $f$ is a face in $k$ and $g$ a face not in $k$.
Since $e \notin T_s$, $e$ is in the cotree $T_s^*$. Consider breaking $T_s^*$ into two subtrees by deleting $e$.
We say that the edge $e$ is {\em light} if the subtree of $T_s^*$ that contains $g$ has weight at most $W/2$ where $W$ is the total weight of the vertices of $R$. Note that we defined weights of primal vertices, whereas the vertices of $T_s^*$ are primal faces. To define face weights, evenly redistribute the weight of each vertex among all of its incident faces.
There is an equivalent, primal view of light edges:
The Jordan cycle $C_e$ partitions $R$ into two subgraphs, exactly one of which contains the faces of the level component corresponding to $k$. We say $e$ is \emph{light} if the weight of the subgraph that does not contain the level component $k$ is at most half the weight of $R$.  
We say that a level component $k$ is {\em disposable} in region $R$ if there are boundary edges of $k$ in $R$, and if every edge $e$ of the boundary of $k$ that is also in $R$ is light.
Note that, in particular, this definition applies to holes (since holes are level components). See Figure~\ref{fig:disposable-hole}.

\begin{figure}[h]
\begin{center}
\includegraphics[width=0.4\textwidth]{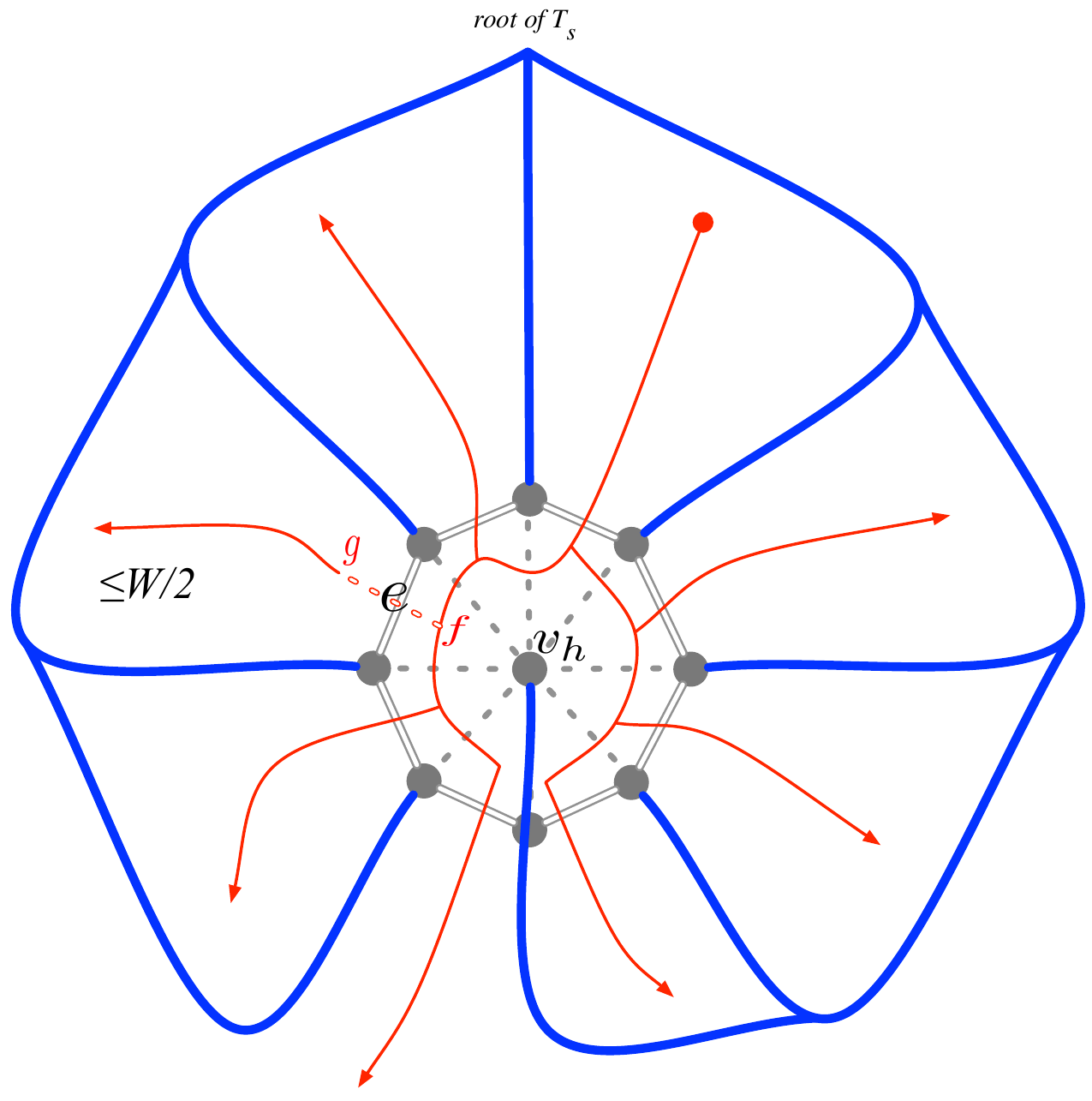}
\end{center}
\caption{An illustration of a region with a disposable hole. The edges of a boundary of a hole $h$ are shown in double-line grey. Since the boundary of a hole is a level cycle, none of the edges of the boundary of $h$ belongs to the spanning tree $T_s$ (blue). The artificial triangulation vertex $v_h$ of $h$ and the triangulation edges (grey dashed) are shown. The cotree $T^*_s$ is shown in thin red. Suppose that the number of distinguished nodes enclosed by $h$ is at least $W/2$ (so the weight of $v_h$ is at least $W/2$). Then, for any edge $e$ of the boundary of $h$, the part of $T^*_s \setminus e$ that does not contain faces of $h$ weighs at most $W/2$, so $e$ is a light edge and $h$ is  a disposable hole.  
\label{fig:disposable-hole}}
\end{figure}

Before showing why disposable holes exist and that they are useful, we first mention a simple property of $T^*_s$ and then use it to prove the existence of disposable holes.

\begin{property}
\label{dualtree} The cotree $T^*_s$ enters each level component exactly once. 
\end{property}
\begin{proof}
The spanning tree $T_s$ is monotone with respect to node levels. Thus, if $e$ is an edge of the boundary of a level component $k$, then one of the components of $T_s^* \setminus e$ contains no other faces, vertices or edges of $k$. See Figure~\ref{fig:tree-cotree} for an illustration.
\end{proof}

\begin{figure}[h]
\begin{center}
\includegraphics[width=0.5\textwidth]{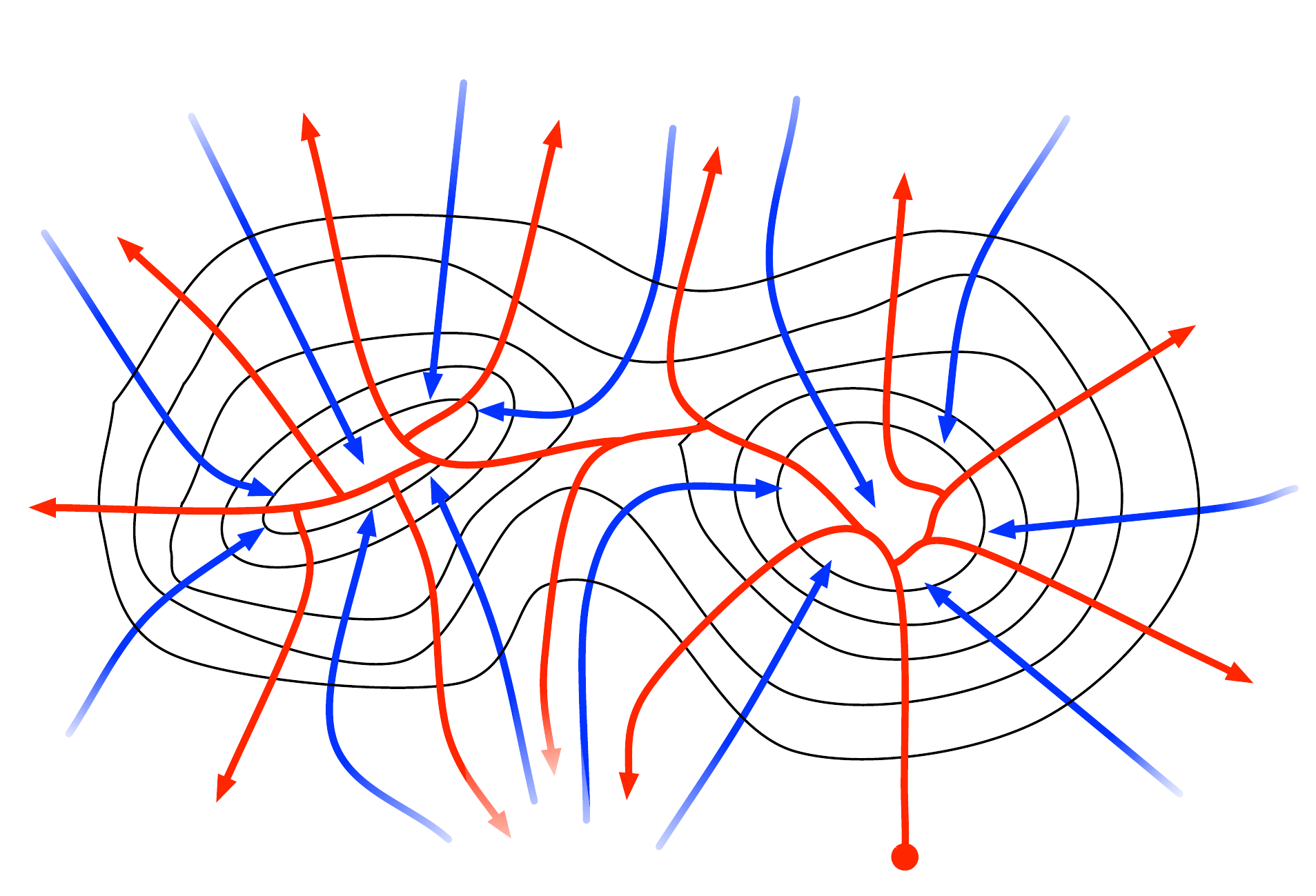}
\end{center}
\caption{An example of the interaction between the spanning tree $T_s$ (blue), the cotree $T_s^*$ (red), and boundary of level components (black cycles). Since $T_s$ is monotone with respect to levels, $T_s^*$ enters each level component exactly once. \label{fig:tree-cotree}}
\end{figure}

\begin{lemma}
\label{lem:recursion}
If a region $R$ contains more than one vertex with non-zero weight, then there exists
either a good balanced fundamental cycle separator or a disposable hole in $R$.
\end{lemma}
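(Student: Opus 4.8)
The plan is to analyze the structure of fundamental cycle separators of $T_s$ restricted to the region $R$, and to show that either one of them is good, or else the "obstruction" to goodness can be localized to a single hole that turns out to be disposable. First I would invoke the standard fact (recalled in the Preliminaries, via \cite{KleinMS13}) that since $s'$ is triangulated and $T_s^*$ has maximum degree $3$, there is a fundamental cycle $C_e$ that is balanced with respect to the weight function we defined on $R$ (weight $1$ on distinguished vertices, weight equal to the number of enclosed distinguished vertices on each artificial hole-vertex $v_h$, zero elsewhere), provided no single vertex carries more than half the total weight $W$. If no vertex carries more than $W/2$, I get a balanced fundamental separator $C_e$; if additionally $e$ is not incident to any $v_h$, then since every $v_h$ is a leaf of $T_s$, the cycle $C_e$ cannot pass through the hole $h$, so $C_e$ is good and we are done.

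So the remaining cases are: (a) some vertex has weight $> W/2$, or (b) every balanced fundamental separator uses an edge incident to some $v_h$. In case (a), the heavy vertex cannot be a distinguished vertex (its weight would be $1$, and $W \ge 2$ by hypothesis), so it must be some $v_h$ with weight $> W/2$; I then claim $h$ is a disposable hole in $R$. Here I would use Property~\ref{dualtree}: the cotree $T_s^*$ enters the level component corresponding to $h$ exactly once, so for every boundary edge $e$ of $h$ lying in $R$, one of the two subtrees of $T_s^* \setminus e$ contains no faces of $h$; that subtree therefore excludes the entire weight of $v_h$, hence has weight at most $W - \weight(v_h) < W/2$. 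Thus every such $e$ is light, so $h$ is disposable. For case (b), I would argue that an edge $e$ incident to a hole-vertex $v_h$, when used as a fundamental cycle separator, effectively passes "through" $h$ in the Jordan-curve sense (the cycle $C_e$ consists of $e$ together with a tree path, and since $v_h$ is a leaf, the path goes from $v_h$ up to some ancestor and the cycle closes back, encircling exactly the faces of $h$ on one side). I would show that the fundamental cycle of such an edge $e$ partitions $R$ in a way equivalent to the fundamental cycle of some boundary edge of $h$, and that balancedness of $C_e$ forces each boundary edge of $h$ in $R$ to be light — giving again that $h$ is disposable.

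The main obstacle I expect is case (b): making precise the correspondence between "fundamental cycle separators incident to $v_h$" and "light boundary edges of $h$," and ruling out the scenario where the only balanced separators thread through several different holes in an incompatible way. The clean way through is to contract each hole to its artificial vertex and reason entirely in terms of the weighted triangulation, where the hole-vertices are genuine leaves of $T_s$; then a balanced fundamental cycle is just a balanced edge of the cotree, and "incident to $v_h$" becomes a purely combinatorial condition that, combined with Property~\ref{dualtree}, pins down $h$. I would also need to double-check the boundary conditions of the lemma: when $R$ has $\ge 2$ nonzero-weight vertices but some hole has weight exactly equal to the rest, the separator machinery still applies because no vertex exceeds $W/2$ strictly, or else we fall into case (a); either way the dichotomy in the statement holds.
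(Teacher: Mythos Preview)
Your case (a) is fine: if some $v_h$ carries weight strictly greater than $W/2$, then for every boundary edge $e$ of $h$ in $R$ the outside subtree of $T_s^*\setminus e$ misses all faces incident to $v_h$ and hence weighs at most $W-\weight(v_h)<W/2$, so $h$ is disposable.

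The gap is in case (b). You assume that every balanced fundamental cycle separator uses a triangulation edge incident to some $v_h$, pick such a separator through a particular hole $h$, and then assert that ``balancedness of $C_e$ forces each boundary edge of $h$ in $R$ to be light.'' This inference is false. Balancedness of the fundamental cycle of a triangulation edge through $v_h$ only tells you that the two sides of \emph{that} split each carry at most a constant fraction of $W$; it says nothing about the split induced by an arbitrary boundary edge of $h$. Concretely, if $\weight(v_h)=W/4$, the outside of the ``main'' boundary edge of $h$ (the one through which $T_s^*$ enters $h$, cf.\ Property~\ref{dualtree}) can carry weight $3W/4>W/2$, so $h$ is not disposable --- yet a balanced separator through $v_h$ may well exist. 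Your proposed fix (``contract each hole to its artificial vertex \ldots\ pins down $h$'') does not address this: after contraction you still only know that the centroid edge of the cotree is a hole-triangulation edge, and you have no mechanism to conclude that \emph{all} boundary edges of that hole are light, nor to single out a different hole that is.

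The paper's proof takes a genuinely different route that avoids this trap. Rather than starting from a balanced separator and trying to back out a disposable hole, it works top-down in the component tree $\mathcal K$: it locates the deepest disposable level component $u$ (disposability is defined for all level components, not just holes). If $u$ happens to be a hole, done. Otherwise, every child $u_i$ of $u$ in $\mathcal K$ is non-disposable, so each $u_i$ has exactly one ``heavy'' boundary edge $e_i$. The proof then prunes $T_s^*$ at each $e_i$ (and at the light boundary edges of $u$), collapsing each deleted subtree to a leaf of weight at most $W/2$. The resulting tree $T_t^*$ contains no triangulation edge of any hole, has total weight $W$, and every node has weight at most $W/2$ and degree at most $3$ --- so it has a balanced edge, whose fundamental cycle in $T_s$ is the desired good separator. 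This constructive argument is what is missing from your outline; to repair case~(b) you would need either to reproduce it or to find a different way of exhibiting a good separator when no hole carries more than half the weight.
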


\begin{proof}
Let $W$ be the total weight of vertices in $R$. 
Consider the component tree $\mathcal K$. Let $u$ be a deepest disposable component in $\mathcal K$ such that $C_u$ has an edge in $R$.  If $u$ is a hole of $R$ then we found a disposable hole, and we are done. Otherwise, we next show that there exists a good separator.

Let $u_1,u_2,\ldots,u_d$ be the children of $u$ in $\mathcal K$ (if there is no disposable component in $R$, then define $u$ to be $R$, $C_u$ to be the external boundary of $R$, and let $u_1, \ldots, u_d$ be the set of rootmost components in $\mathcal K$  such that $C_{u_i}$ has an edge in  $R$).
Since none of the $u_i$'s is disposable, for each $u_i$ there exists exactly one boundary edge $e_i=(f_i,g_i)$ (here we wrote $e_i$ as a dual edge, and $f_i$ is the endpoint of $e_i$ that belongs to $u_i$), such that the subtree of $T_s^*\setminus e_i$ that contains $g_i$ has weight at least $W/2$. 
Consider the following two phase process (see Figure~\ref{fig:structural-lemma} for an illustration):  Let $T^*_0 = T_s^*$. If $T^*_i$ contains more than a single face of some $u_j$ (in which case it must contain all faces of $u_j$ by Property~\ref{dualtree}), then $T^*_{i+1}$ is obtained from $T^*_i$ by rooting $T^*_i$ at $g_j$ and deleting all the strict descendants of $f_j$ in $T^*_i$, so that $f_j$ becomes a leaf.
The weight assigned to $f_j$ in $T^*_{i+1}$ is the total weight of all the vertices in the deleted subtree. Thus, the weight of $T^*_{i+1}$ remains $W$, and, by definition of $e_j$, the weight of $f_j$ is at most $W/2$.
The first phase terminates when $T^*_i$ contains at most one face ($f_j$) from each $u_j$. In the second phase, while $T^*_i$ contains an edge $e$ of $C_u$ that is not a leaf edge of $T^*_i$, then $T^*_{i+1}$ 
is obtained from $T^*_i$ by rooting $T^*_i$ at the endpoint $g$ of $e$ that belongs to $u$, and deleting all the strict descendants of the other endpoint $f$ of $e$ in $T^*_i$, so that $f$ becomes a leaf.
Similarly to the first phase, the weight of $f$ in $T^*_{i+1}$ is set to the total weight of all the vertices in the deleted subtree. Since $u$ is disposable, the weight of $f$ is at most $W/2$.

Let $T^*_t$ be the resulting tree. Since $T^*_t$ contains at most one face from each $u_i$, $T^*_t$ contains no triangulation edges of a hole (both endpoints of a triangulation edge of a hole belong to the hole). Furthermore, 
the total weight of $T^*_t$ is $W$, and every leaf of $T^*_t$ created during the two phase process has weight at most $W/2$ (by definition). For the remaining nodes of $T^*_t$, the degree is at most 3 and the weight is also at most $W/2$, because the original weights in $T^*_s$ are at most $W/2$ (otherwise, the node corresponds to
a hole of weight at least $W/2$ that is, by definition, disposable, and we are done).
Therefore, there exists an edge $e$ whose deletion from $T^*_t$ results in two trees, none of which weighs more than $5W/6$.
By construction of the weights of $T^*_t$, the balance of the fundamental cycle of $e$ w.r.t. \!$T_s$ is exactly the ratio of the weights of the subtrees obtained by deleting $e$ from $T^*_t$.
Therefore, the fundamental cycle $C_e$ of $e$ w.r.t. \!$T_s$ is a balanced Jordan cycle separator. Since no edge of $T^*_t$ is a triangulation edge of a hole, $C_e$ is a good separator.
\end{proof}

\begin{figure}
\begin{center}
\includegraphics[width=0.3\textwidth]{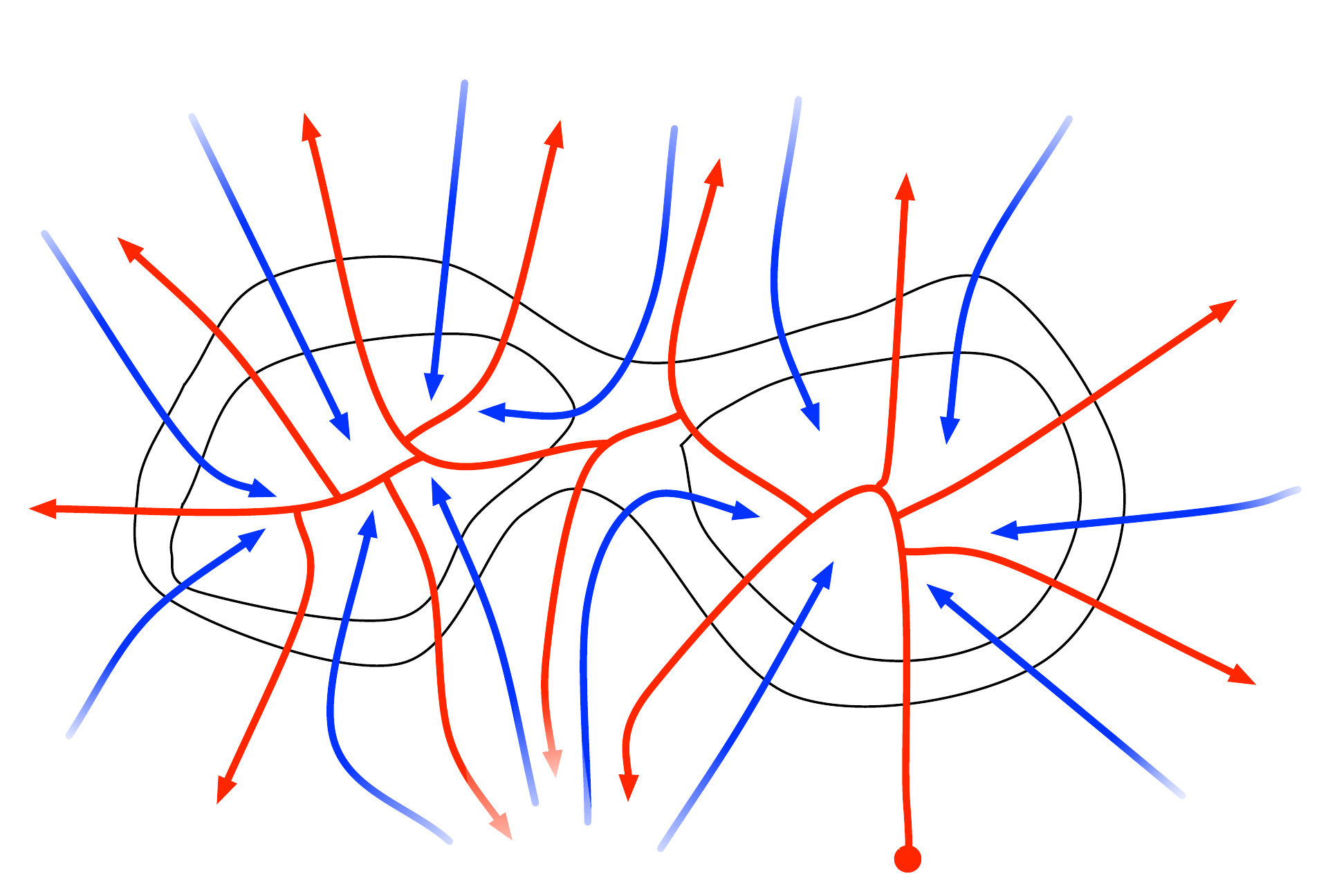}
\includegraphics[width=0.3\textwidth]{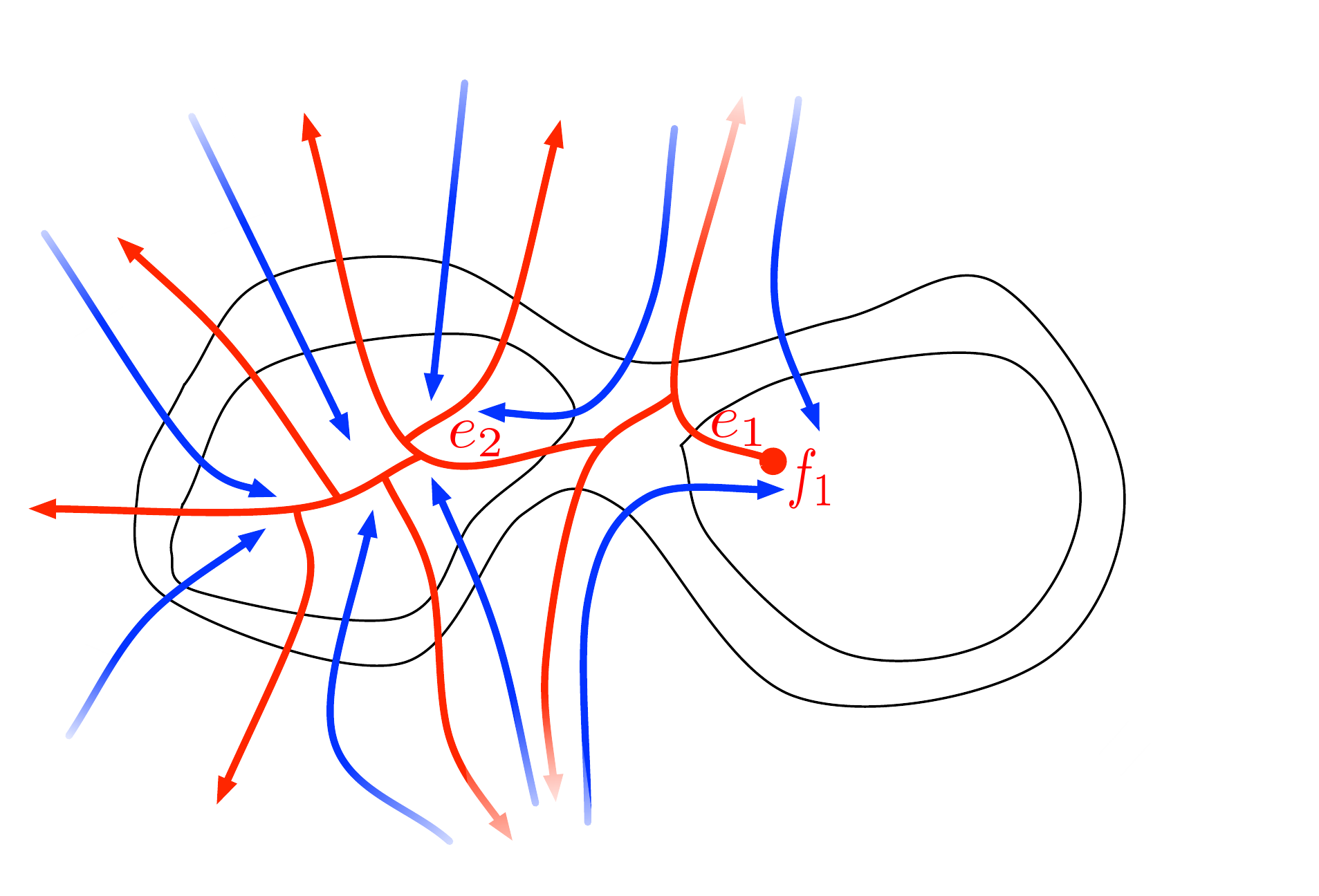}\\
\includegraphics[width=0.3\textwidth]{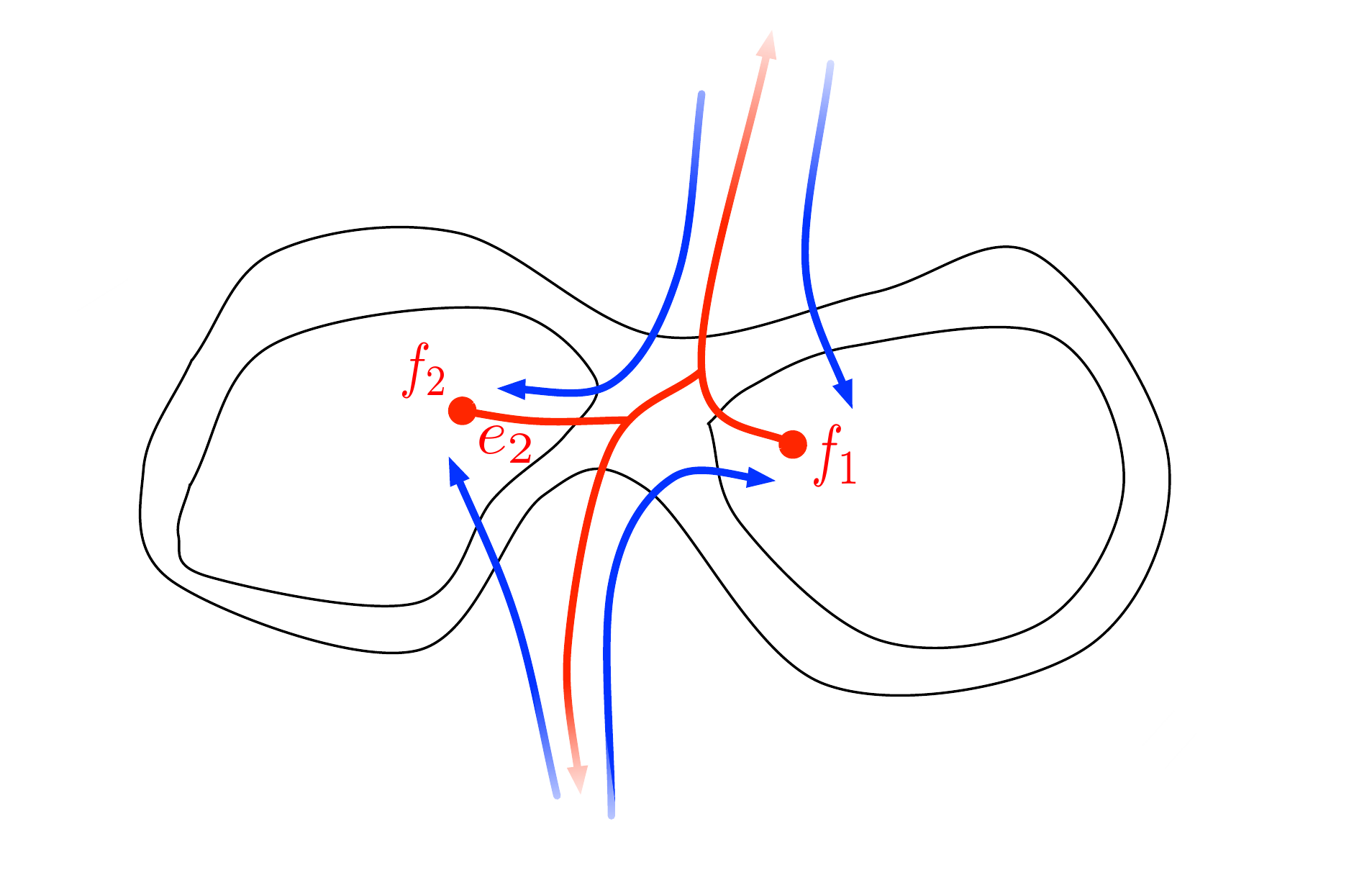}
\includegraphics[width=0.3\textwidth]{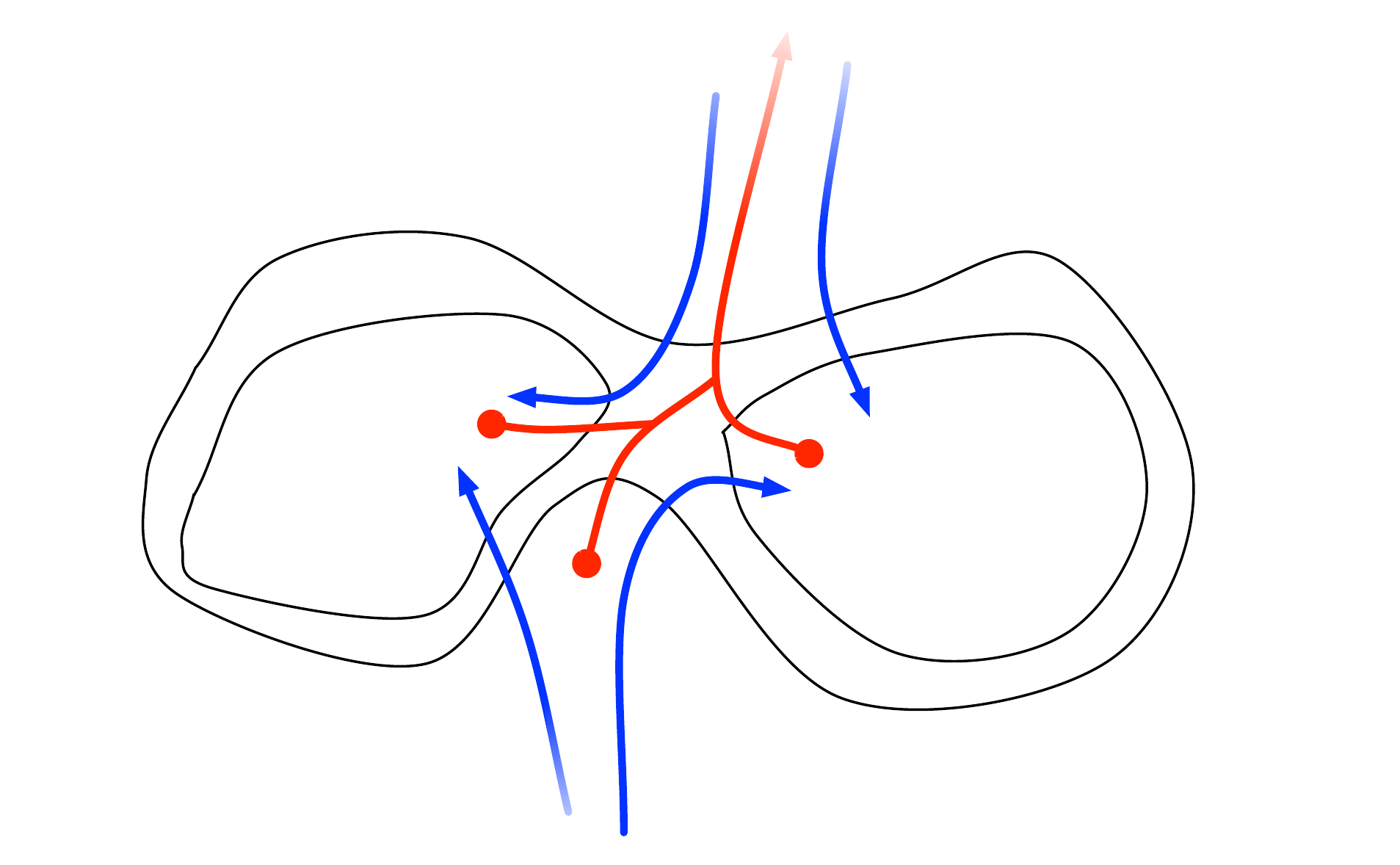}
\includegraphics[width=0.3\textwidth]{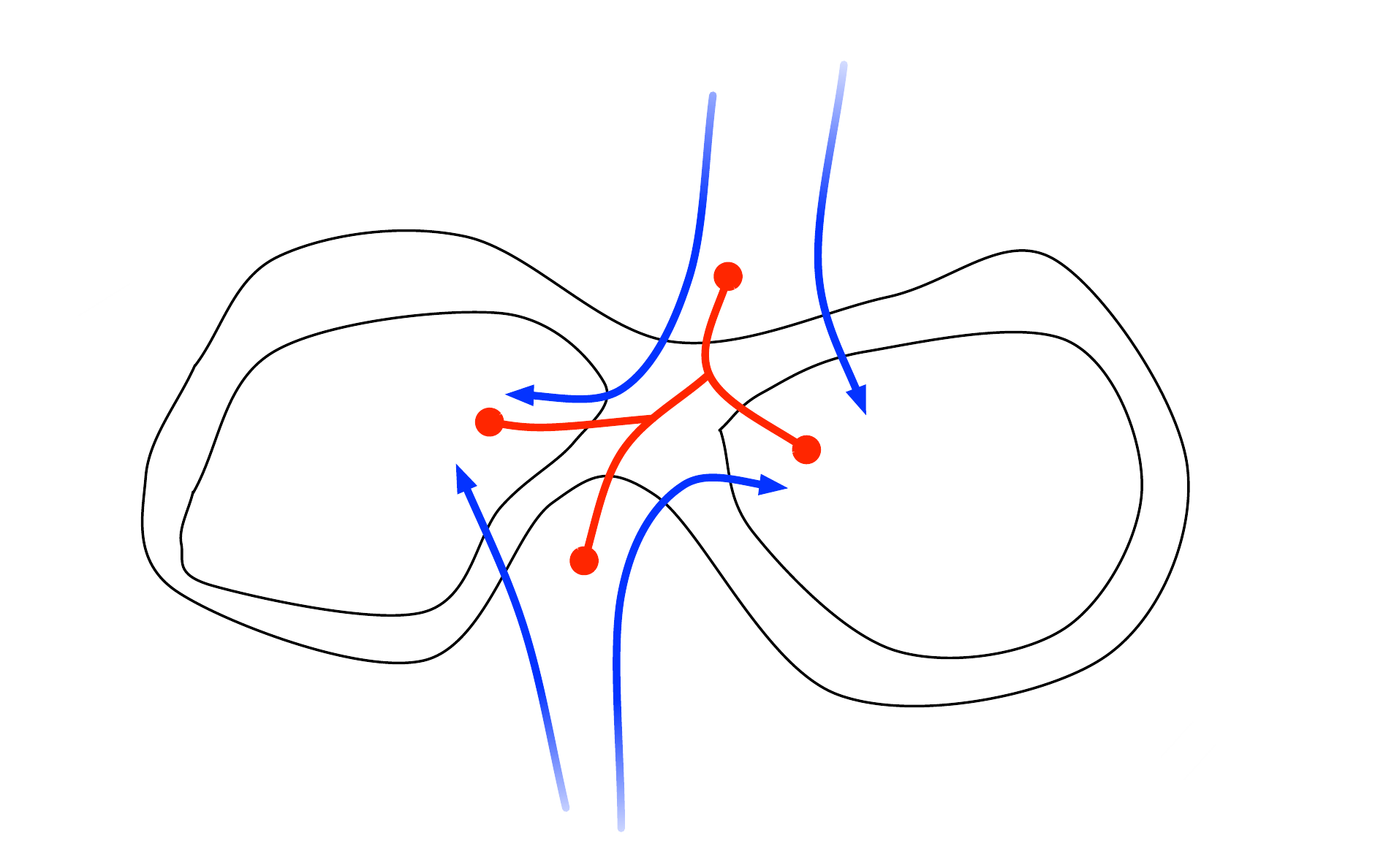}

\end{center}
\caption{Illustration of the process of constructing the cotree $T^*_t$ according to which a good separator is found. Three level boundaries are shown (black cycles). The external one is $C_u$, the two cycles enclosed by $C_u$ are $C_{u_1}$ and $C_{u_2}$. The subfigures show the process of splitting the cotree $T^*_s$, first at $e_1$, then at $e_2$. In the second phase the cotree is further split at the two remaining edges of $C_u$. The resulting tree $T^*_t$ is an induced subtree of $T^*_s$ in which a balanced edge-separator can be found. Since none of the edges of $T^*_t$ has both endpoints in any $u_i$, none of the edges of $T^*_t$ are triangulation edges of a hole. \label{fig:structural-lemma}}
\end{figure}

With this structural lemma we can now describe our oracle.
Consider a slice $s$ and
let $G_s$ be the subgraph of $G$ enclosed by the boundary of
$s$.
The goal of processing slice $s$ is to store information (distances) so that the following distances (in $G_s$) can be recovered from the information stored for all slices contained in $G_s$.
\begin{enumerate}
\item The distance between any two distinguished nodes in $G_s$,
\item The distance between any distinguished node in $G_s$ and any vertex on the
boundary of $s$.
\item The distance between any two vertices on the boundary of $s$.
\end{enumerate}
Encoding this information for all slices guarantees that distances between the
distinguished vertices in the whole graph are captured.

\paragraph{The encoding.}
To process a slice $s$, we first encode {\bf boundary-to-boundary} distances: the distances (in $G_s$) between
vertices on the boundary of $s$ using Lemma~\ref{lem:cycle}.
We then triangulate $s$ and define its spanning tree $T_s$ using Lemma~\ref{lem:triangulation}.

Next, we recursively separate $s$ using fundamental cycle separators.
The initial region $R$ is the entire slice $s$. Its boundary is the external boundary $C_{ext}$ of $s$. 
A region $R$ is separated into subregions obtained by cutting $R$ along some fundamental cycle separator $C$ w.r.t. \!$T_s$. 
Since we only use fundamental cycle separators w.r.t. \!the same tree $T_s$, the separators never cross. Hence, the boundary of each new region $R'$ consists of the contiguous portion of $C$ that belongs to $R$, and possibly portions of the boundary of $R$. 
Since $C$ crosses $C_{ext}$ at most twice (at most once for each of the two paths in the fundamental cycle $C$), the number of contiguous maximal fragments of $C_{ext}$ in the boundary of $R'$ is at most one plus the number of such fragments in the boundary of $R$. Consequently, the number of contiguous maximal fragments of $C_{ext}$ in the boundary of any region is bounded by the depth of the recursion, which we will show is $\tilde \Oh (1)$.

We now explain how to choose the fundamental cycle separator $C$ with which we separate $R$. This is achieved using two interleaving recursive processes. We refer to the first one as the outer recursion, and to the second one as the hole elimination recursion. 
In a step of the outer recursion we apply Lemma~\ref{lem:recursion}.
  
If we find a good balanced fundamental cycle separator $C$, then we use it to separate the  region $R$. Every vertex in $S^R$ explicitly stores {\bf $S$-to-separator} distances: its distance (in $G_s$) to every vertex of $C$. 
In addition, for each subregion $R'$, for each contiguous maximal fragment $b_i$ of $C_{ext}$ in $R'$, we encode {\bf separator-to-boundary} distances: the distances (in $R'$) between $b_i$ and $C$ using Lemma~\ref{lem:unit} or Lemma~\ref{lem:hole} (depending on whether the vertices of the separator $C$ and the vertices of $b_i$ lie on a single or two faces of $R'$). Then, we call the outer recursion recursively for each subregion $R'$.
The outer recursion terminates when there is at most one vertex with positive weight in the
current region $R$. 
If the only remaining object is an artificial vertex $v_h$, we apply Lemma~\ref{lem:hole} to
encode {\bf hole-to-boundary} distances: the distances (in $R$) between the boundary $C_h$ of $h$ and $b_i$, for each contiguous maximal fragment $b_i$ of $C_{ext}$ in $R$.
If the only remaining object is a distinguished vertex $u$,
we store {\bf $S$-to-boundary} distances: the distances (in $G_s$) from $u$ to every vertex of every $b_i$. If the current region $R$ contains no vertices with positive weight, the outer recursion terminates.

If, on the other hand, we found a disposable hole $h$, we store {\bf hole-to-boundary} distances: distances between the boundary $C_h$ of $h$ and every contiguous maximal fragment $b_i$ of $C_{ext}$ in $R$. 
The weight of the artificial vertex $v_h$ is set to zero. This reflects the fact that for the rest of the processing of $s$, distinguished vertices enclosed by the hole $h$ will not be treated individually and directly, but rather by encoding distances involving the vertices of $C_h$. From this point on, vertices of $S$ inside $h$ are no longer considered vertices of $S^R$. 
We then call the hole elimination process for the hole $h$ in region $R$ (see Figure~\ref{fig:disposable}).
In a single step of the hole elimination recursion, a region $R$ is separated using a fundamental cycle separator $C$ w.r.t. \!$T_s$ that is balanced w.r.t. \!the number of vertices of $C_h$ in $R$ (i.e., a weight 1 is assigned to each vertex of $C_h$ and 0 to all other vertices). Note that $C$ is necessarily a fundamental cycle w.r.t. \!$T_s$ of some triangulation edge that is incident to $v_h$. The boundary of each of the two resulting regions contains a single contiguous portion of $C_h$ consisting of roughly half the vertices of $C_h$ in $R$. Similarly to the single hole case, we store {\bf $S$-to-separator} distances: distances (in $G_s$) from every vertex of $S^R$ to every vertex of $C$. For each subregion $R'$ obtained by separating $R$ along $C$, for each contiguous fragment $b_i$ of $C_{ext}$ in $R'$, we encode {\bf separator-to-boundary} distances: the distances (in $R'$) between $b_i$ and $C$  using Lemma~\ref{lem:unit} or Lemma~\ref{lem:hole}, and {\bf separator-to-hole} distances: the distances (in $R'$) between $C$ and the single contiguous fragment of $C_h$ that belongs to $R'$ using Lemma~\ref{lem:unit}. 
We then apply the hole elimination process recursively to each subregion $R'$. It terminates when the current region $R$ contains at most two consecutive vertices of $C_h$, or when it contains at most one distinguished vertices. When this happens, we continue with the outer recursion on $R$.

\begin{figure}
\begin{center}
\includegraphics[width=0.25\textwidth]{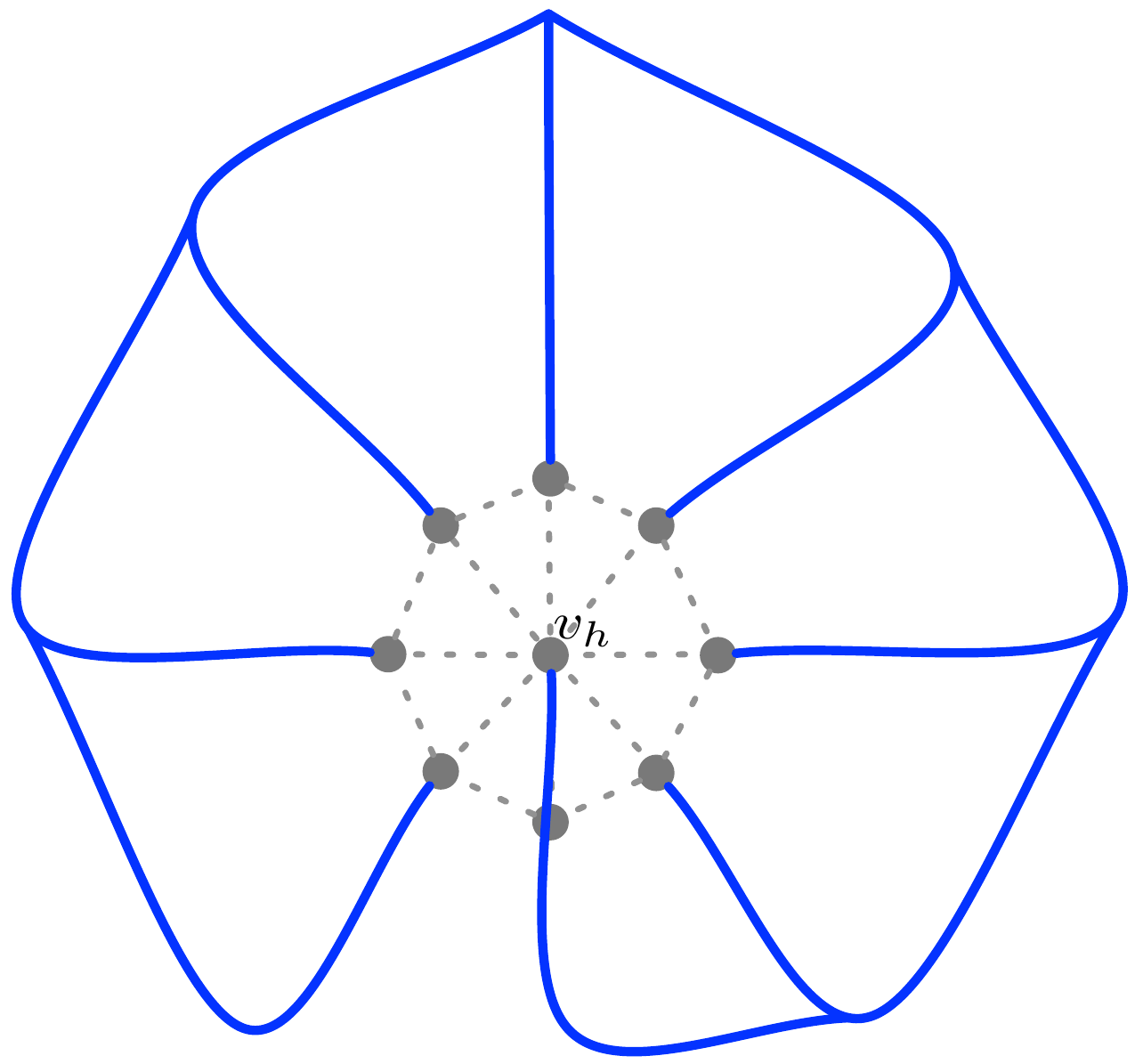}
\includegraphics[width=0.25\textwidth]{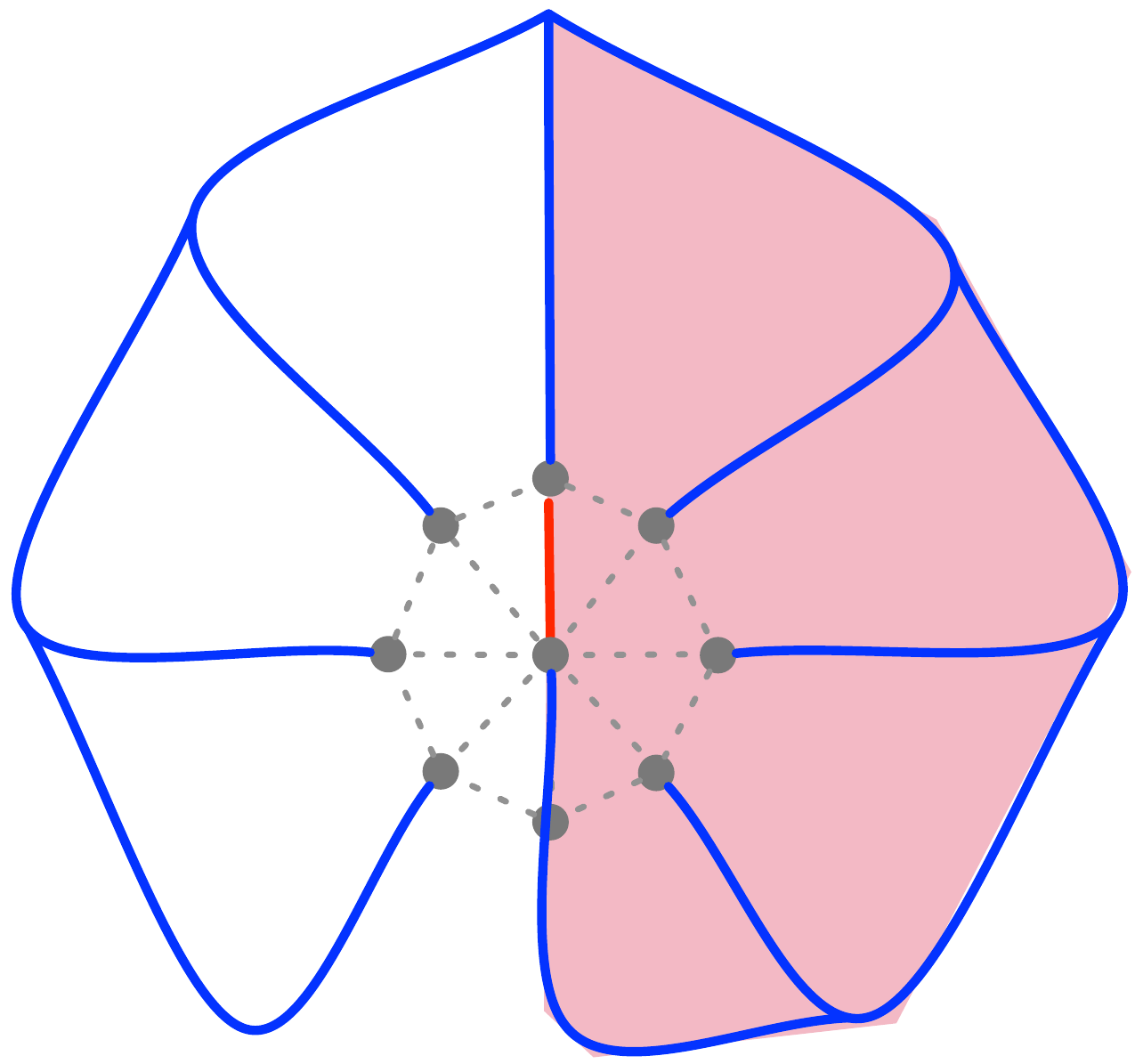}
\includegraphics[width=0.25\textwidth]{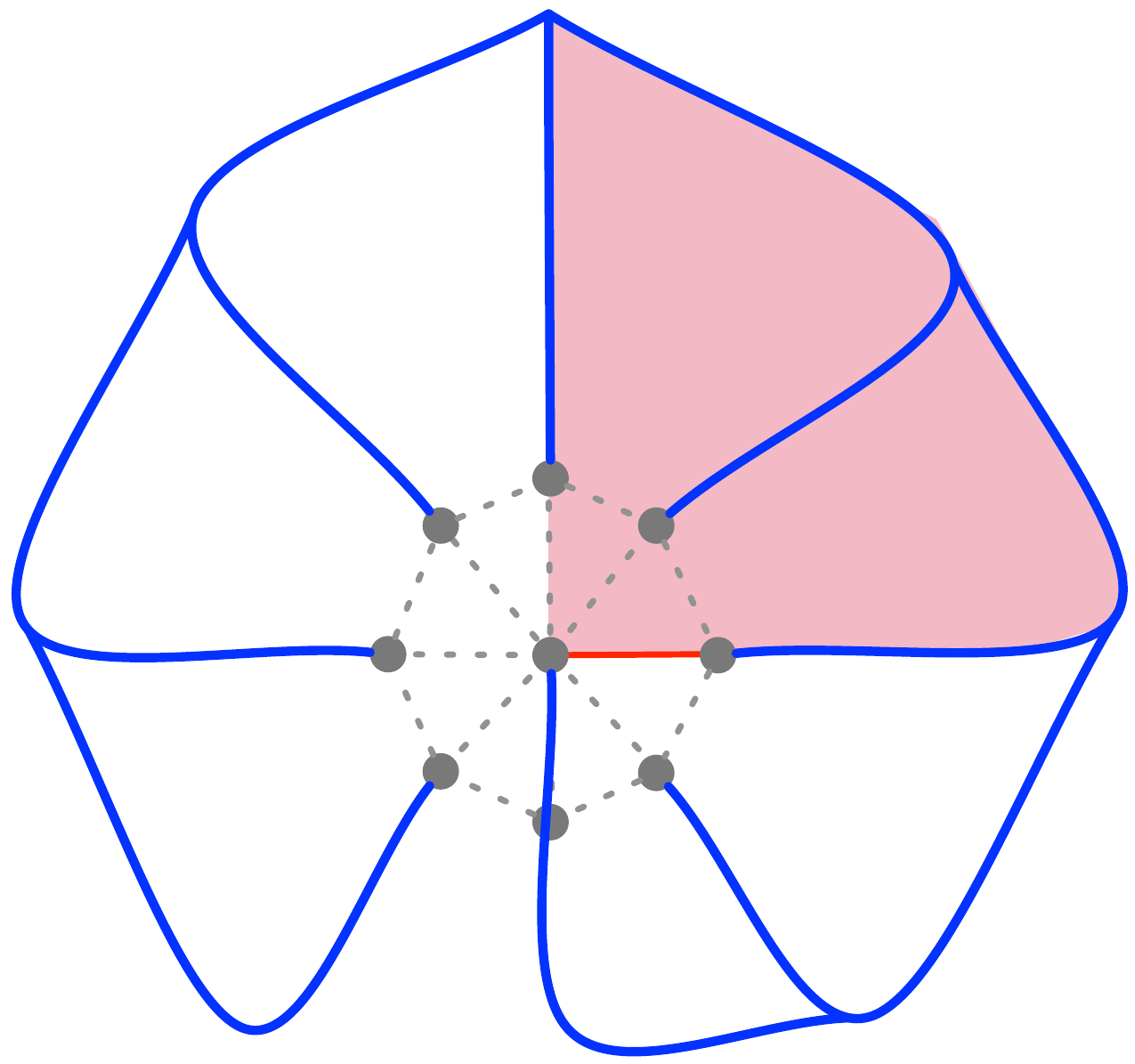}
\includegraphics[width=0.25\textwidth]{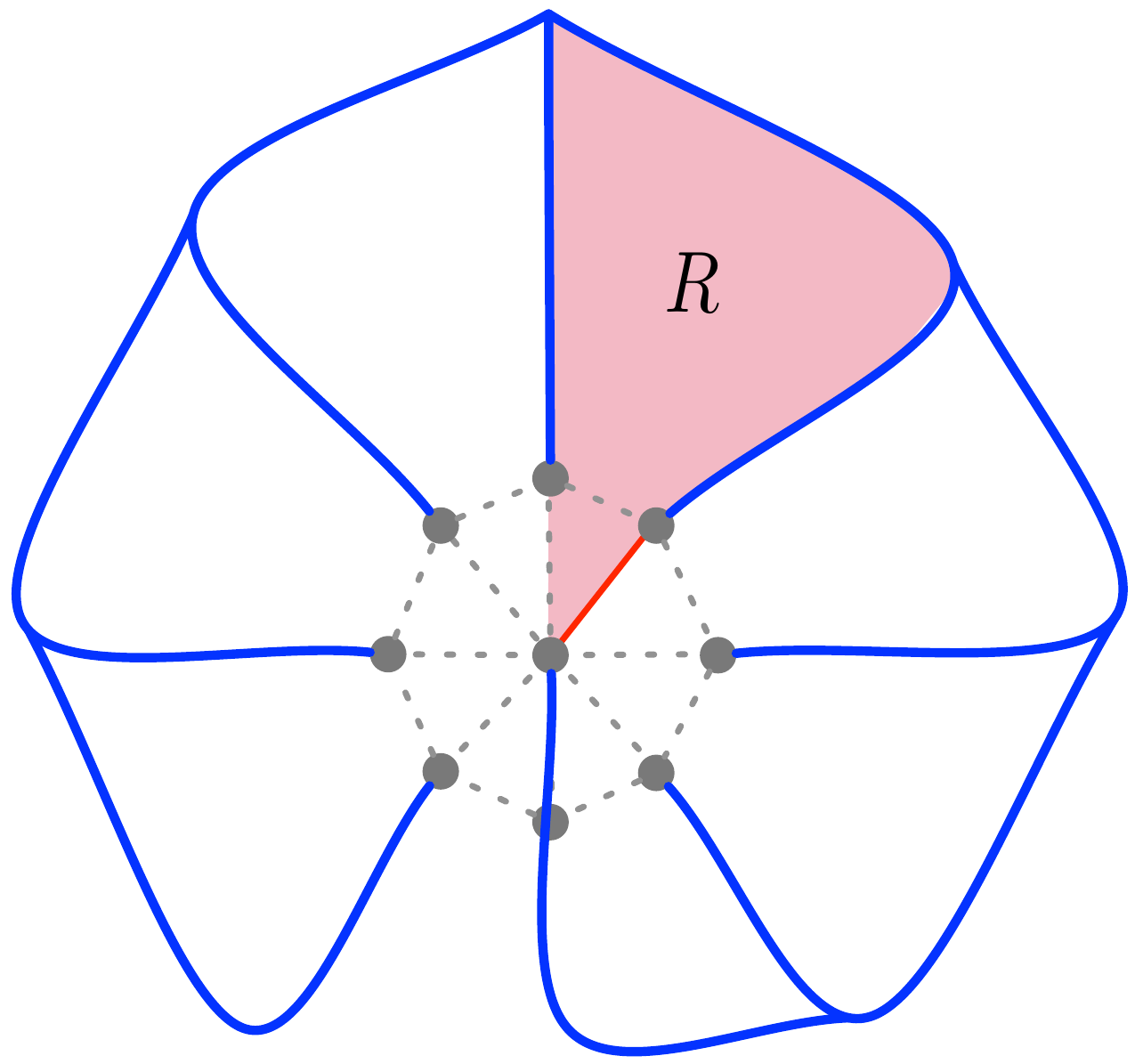}
\includegraphics[width=0.25\textwidth]{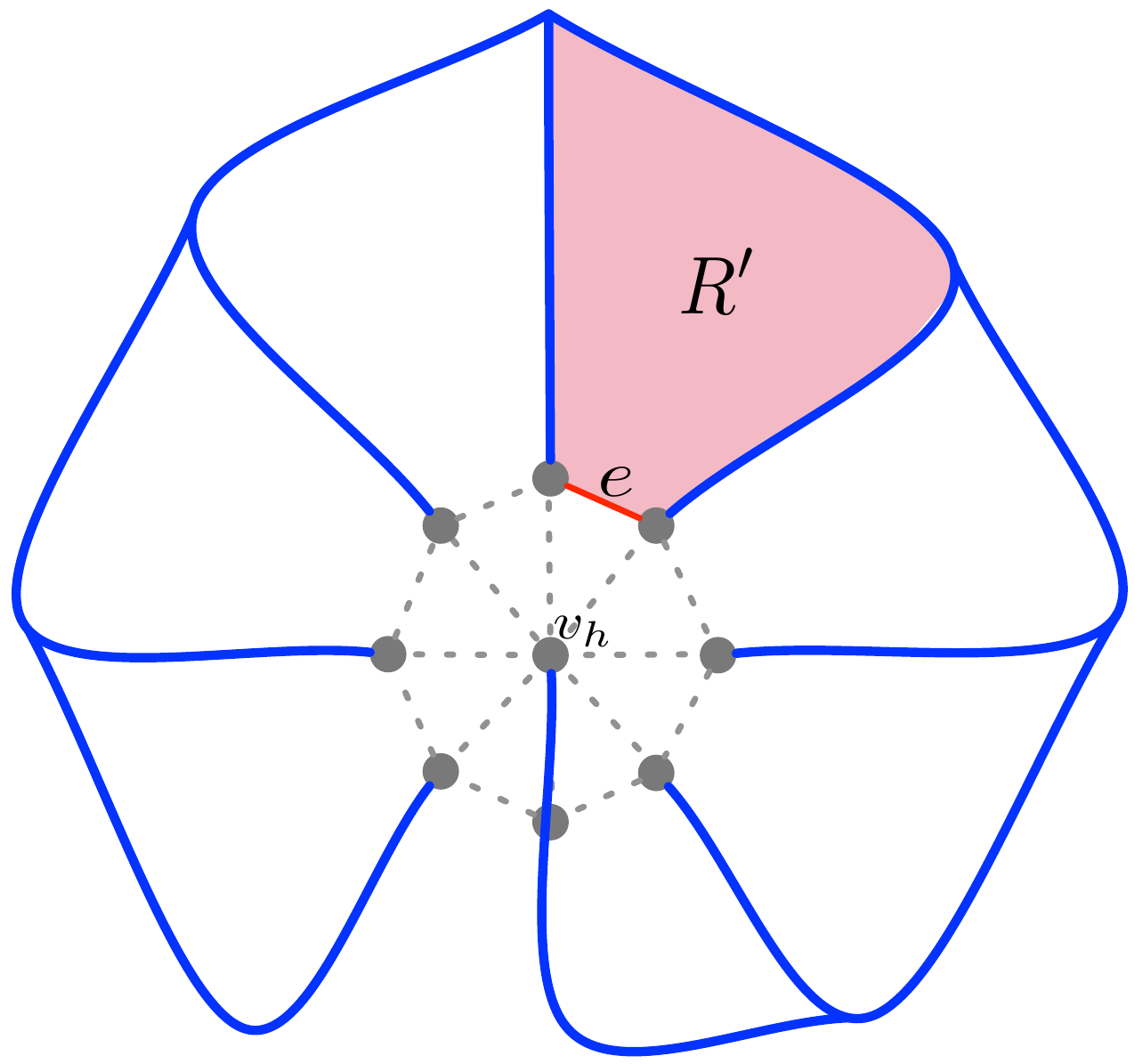}
\end{center}
\caption{Illustration of the process of eliminating a disposable hole. A disposable hole $h$ with artificial vertex $v_h$ in a region $R_0$ is shown in top-left. The spanning tree $T_s$ is indicated in blue. $R_0$ is recursively separated using fundamental cycle separators that are balanced w.r.t. \!the number of nodes of the boundary $C_h$ of $h$. The elimination process finishes (bottom-left) when the current region $R$ contains only two consecutive vertices of $C_h$, which are the endpoints of some edge $e$. This region $R$ differs from the the region $R'$ obtained by separating $R_0$ using the fundamental cycle of $e$ w.r.t. \!$T_s$ by a single vertex ($v_h$).\label{fig:disposable}}
\end{figure}

We next prove that the total depth of the entire recursive procedure is $\Oh(\log^2 n) = \tilde \Oh(1)$. 

\paragraph{Analyzing the recursion depth.} We begin with the initial region $R_0$ being the entire slice.  
In a single step of the outer recursion, if we find a good separator then we use it to separate the current region $R_0$ thus decreasing the weight of each resulting region $R$ by a constant factor. If however we do not find a good separator, then we apply the hole elimination process on a disposable hole $h$ of the current region $R_0$. Since $|C_h| = \Oh(n)$, and since every recursive call to the hole elimination process decreases the number of nodes of $C_h$ by half, we get that after $\Oh(\log n)$ recursive calls the hole elimination process terminates, with each resulting region $R$ containing only two nodes of $C_h$. Observe that these two nodes must be adjacent on $C_h$ (see Figure~\ref{fig:disposable}). Let $e$ be the edge between them and let $C_e$ be the fundamental cycle of $e$  w.r.t. \!$T_s$. Since $h$ is disposable, the weight of the region $R'$ obtained by separating $R_0$ using $C_e$ is at most half the weight of region $R_0$. Since $R = R' \cup \{v_h\}$ and since the weight of $v_h$ is zero this means that the weight of $R$ is at most half the weight of $R_0$. 
We conclude that every $\Oh(\log n)$ consecutive recursive calls the total weight of a region decreases by a constant factor. This shows that the depth of the recursion is $\Oh(\log^2 n)$.

\paragraph{Correctness.}
We next prove that the distance between any two distinguished vertices in $S$ can be recovered from our encoding.

\begin{lemma} \label{lem:x_s-to-B}
The length of a shortest path $P$ in $G_s$ from any  $x \in S_s$ to any  $y \in C_{ext}$ can be recovered from the encoding.
\end{lemma}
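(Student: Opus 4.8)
The plan is to trace the shortest path $P$ from $x$ to $y$ through the recursive decomposition $\mathcal{T}_s$ of the slice $s$, decomposing it into consecutive subpaths whose lengths are individually stored in the encoding, and then argue that summing them (and minimizing over the $\tilde O(1)$-many choices of breakpoints) recovers $d_{G_s}(x,y)$. First I would locate $x$ in the decomposition. If $x$ is a genuine distinguished vertex that survives to a leaf region of the outer recursion, the leaf stores the $S$-to-boundary distances from $x$ to every contiguous maximal fragment $b_i$ of $C_{ext}$ in that leaf region. If instead $x$ lies inside a hole $h$ that was declared disposable at some region $R$, then from the moment $h$ was eliminated the encoding only tracks distances involving $C_h$, so I would first peel off the prefix of $P$ from $x$ to the first vertex $z$ of $P$ on $C_h$: this prefix is an $x$-to-$C_{ext}'$ path in the strictly smaller graph enclosed by $C_h$, and its length is recovered recursively (by applying the lemma to the slice(s) nested inside $h$, whose boundary plays the role of ``$C_{ext}$'').

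The core of the argument is the following structural claim about the remaining portion of $P$ (the portion that lives in $G_s$ proper, from $x$ — or from $z$ on $C_h$ — to $y$ on $C_{ext}$). Walk along $P$ and consider the sequence of regions of $\mathcal{T}_s$ it passes through. Whenever $P$ crosses a good separator $C = Sep_v$ of a region $R_v$, the crossing vertex lies on $C$, and the encoding stores, at node $v$, the $S$-to-separator distance from the source to every vertex of $C$, and the separator-to-boundary distances (via Lemma~\ref{lem:unit} or Lemma~\ref{lem:hole}) between $C$ and every fragment $b_i$ of $C_{ext}$ in each child region. Similarly, each hole-elimination step stores $S$-to-separator, separator-to-boundary, and separator-to-hole distances. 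So I would let $u_1, u_2, \dots$ be the successive separator-crossing vertices of $P$ (one per level of the recursion that $P$ descends through, hence $\Oh(\log^2 n)$ of them), take $u_0 = x$ (or $z$), and let $u_{\text{last}}$ be the last separator vertex before $P$ reaches $C_{ext}$. Then $P[u_0,u_1]$ is an $S$-to-separator distance, each $P[u_i,u_{i+1}]$ is recoverable because both endpoints lie on separators stored with overlapping regions (the standard telescoping used already in Section~\ref{singleholesection}), and $P[u_{\text{last}}, y]$ is a separator-to-boundary distance since $y$ lies on some fragment $b_i$ of $C_{ext}$. If $P$ never crosses any separator before hitting $C_{ext}$, then $x$ (or $z$) and $y$ lie in a common leaf region and the single $S$-to-boundary (or hole-to-boundary) entry gives the answer directly.

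To turn this into a recovery algorithm I would not guess the breakpoints: instead, I would run a dynamic program over the $\tilde O(1)$-depth recursion tree that, for the region-source pair at hand, computes $\min$ over all vertices $u$ of the relevant separator of (stored distance from source to $u$) $+$ (recovered distance from $u$ to $b_i$), the latter obtained recursively in the appropriate child region; the decomposition in the previous paragraph certifies that the true shortest path is among the candidates considered, and every candidate is a genuine $x$-to-$b_i$ walk in $G_s$, so the minimum equals $d_{G_s}(x,y)$. The main obstacle, and the step requiring the most care, is the bookkeeping at the holes: I must check that once a hole $h$ is disposed of, (i) the prefix of $P$ inside $h$ really is a self-contained shortest-path problem on the nested slices (so the recursion bottoms out and its correctness may be assumed by induction on the number of slices / on the slice-nesting depth), and (ii) no shortest path is ever forced to re-enter $h$ in a way that the ``only $C_h$ is tracked'' encoding fails to capture — this is exactly where the goodness of separators (they never cut through a live hole) and the fact that a disposed hole's $S$-vertices are no longer in $S^R$ must be combined, mirroring the case analysis of the single-hole subsection but now with the extra layer that the source itself may sit inside a disposed hole.
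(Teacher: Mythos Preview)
Your proposal overcomplicates the argument and contains a genuine gap. Two issues:

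First, you misread the hypothesis. The lemma is about $x\in S_s$, the distinguished vertices \emph{in the slice $s$ itself}, not $x\in S^s$ (which includes vertices enclosed by holes). The case ``$x$ lies inside a hole $h$ that was declared disposable'' simply cannot occur here; that is precisely what the next lemma (Lemma~\ref{lem:x^s-to-B}) handles, by induction on slice nesting. So your recursion into nested slices and all the hole bookkeeping are irrelevant for this statement.

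Second, and more seriously, your multi-hop telescoping through successive separator crossings $u_1,u_2,\ldots$ does not work with what is actually stored. You claim ``each $P[u_i,u_{i+1}]$ is recoverable because both endpoints lie on separators stored with overlapping regions (the standard telescoping used already in Section~\ref{singleholesection})'', but no such telescoping appears there, and the encoding stores no separator-to-separator distances. The separator-to-boundary distances are only between a separator and contiguous fragments of $C_{ext}$, not fragments of a parent separator; and the $S$-to-separator distances are from \emph{distinguished} vertices, which $u_i$ is not. So the intermediate pieces $P[u_i,u_{i+1}]$ are not recoverable from anything stored.

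The paper's proof avoids this entirely with one idea you are missing: pick the \emph{earliest} separator $Sep$ that $P$ touches, and let $v$ be the \emph{last} vertex of $P$ on $Sep$. ``Earliest'' guarantees that at the moment $Sep$ is used, $x$ is still in $S^R$, so $d_{G_s}(x,v)$ is stored directly as an $S$-to-separator distance (in $G_s$, not in a subregion). ``Last'' guarantees $P[v,y]$ lies entirely in one child region $R'$, so $|P[v,y]|$ equals the stored separator-to-boundary distance in $R'$. Two pieces, no telescoping. If $P$ touches no separator, then $x$ and $y$ share a leaf region and the $S$-to-boundary entry gives the answer.
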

\begin{proof}
If $P$ contains some vertex of a fundamental cycle separator used in processing $s$, let $v$ be the last vertex of $P$ that belongs to the earliest such separator. By choice of the earliest separator, the $x$-to-$v$ distance (in $G_s$) is stored ($S$-to-separator distance). By choice of the last vertex on $P$ that belongs to that separator, the $v$-to-$y$ distance (in the region of $s$ that contains $P[v,y]$) is stored (separator-to-boundary distance). Thus, the length of $P$ can be recovered.

If $P$ contains no such vertex, then $x$ and $y$ are in the same region when the recursion terminates, so the $x$-to-$y$ distance (in $G_s$) is stored as a $S$-to-boundary distance. 
\end{proof}

\noindent We extend the previous lemma and show that it applies also to distinguished vertices enclosed by holes of $s$ (i.e., for $S^s$ instead of $S_s$).

\begin{lemma}\label{lem:x^s-to-B}
The length of a shortest path $P$ in $G_s$ from any  $x \in S^s$ to any  $y \in C_{ext}$ can be recovered from the encoding.	
\end{lemma}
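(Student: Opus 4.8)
The plan is to induct on how deeply $x$ is nested inside holes of $s$ --- equivalently, on the number of slices contained in $G_s$. The base case is $x\in S_s$, and then the statement is exactly Lemma~\ref{lem:x_s-to-B}. So assume $x$ is strictly enclosed by the (unique) hole $h$ of $s$ that contains it, and let $s_h$ be the slice whose external boundary is $C_h$; then $G_{s_h}$ is the subgraph of $G$ enclosed by $C_h$, it is a proper subgraph of $G_s$, and $x\in S^{s_h}$, so the inductive hypothesis applies to $s_h$.

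First I would split a shortest $x$-to-$y$ path $P$ in $G_s$ at the first vertex $z$ it shares with $C_h$; such a $z$ exists because $x$ lies strictly inside $h$ while $y\in C_{ext}$ lies strictly outside, and the simple cycle $C_h$ separates the two. By the minimality of $z$, every vertex of $P[x,z]$ other than $z$ lies strictly inside $h$, so $P[x,z]$ is a path of $G_{s_h}$; being a subpath of a shortest path it is a shortest $x$-to-$z$ path of $G_{s_h}$, whence $|P[x,z]|=d_{G_{s_h}}(x,z)$. Since $x\in S^{s_h}$ and $z$ lies on the external boundary $C_h$ of $s_h$, this length is recoverable from the encoding of the slices contained in $G_{s_h}$ by the inductive hypothesis, i.e.\ by Lemma~\ref{lem:x^s-to-B} applied to $s_h$.

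It then remains to recover $|P[z,y]|$, which equals $d_{G_s}(z,y)$ because $P[z,y]$ is a subpath of a shortest path, with $z\in C_h$ and $y\in C_{ext}$. I would recover this by mirroring the proof of Lemma~\ref{lem:x_s-to-B}, but now with the source $z$ sitting on the boundary of the hole $h$ rather than being a distinguished vertex. The key observation is that, before $h$ is finalized in the processing of $s$ (i.e.\ before it is disposed of as a disposable hole or bottoms out as a singleton leaf of the outer recursion), every fundamental cycle separator used to split a region that still contains $h$ is a separator to which $x$ stored its $G_s$-distance ($S$-to-separator distances, since $x\in S^R$ for every such region $R$), while the subregions of that split recorded separator-to-boundary distances; so if $P$ crosses any such separator the recovery of $|P[z,y]|$ is literally the argument of Lemma~\ref{lem:x_s-to-B}. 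Otherwise $P$ never leaves the region $R^*$ in which $h$ is finalized, and when processing that region we stored the hole-to-boundary distances between $C_h$ and every maximal fragment of $C_{ext}$ lying in $R^*$ (Lemma~\ref{lem:hole}); combining these with the separator-to-hole and separator-to-boundary distances produced by the hole-elimination recursion for $h$, the boundary-to-boundary distances on $C_{ext}$ (Lemma~\ref{lem:cycle}), and --- for the stretches of $P[z,y]$ that re-enter $h$ or dip into a sibling hole $h'$ of $R^*$ --- further applications of the inductive hypothesis and of the boundary-to-boundary distances of $s_h$ (resp.\ $s_{h'}$), one reconstructs $|P[z,y]|$ stretch by stretch.

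The hard part will be this last step. In contrast with Lemma~\ref{lem:x_s-to-B}, the suffix $P[z,y]$ can re-enter $h$ and can wander through other holes, and the hole-to-boundary distances are stored inside a region (a subgraph of $s$), not inside $G_s$. Turning the informal ``decompose $P[z,y]$ into recoverable stretches'' into a clean argument requires leaning on the structure we have established: good separators never enter a hole, each fundamental cycle separator meets $C_{ext}$ and any hole boundary $\Oh(1)$ times, a disposable hole is eliminated by $\Oh(\log n)$ balanced cuts, and the overall recursion depth is $\Oh(\log^2 n)$. These facts should force the decomposition of $P[z,y]$ induced by its crossings of the separators, of $C_{ext}$, and of the hole boundaries to consist of only $\tilde\Oh(1)$ stretches, each matching exactly one stored quantity or one distance already handled by the inductive hypothesis.
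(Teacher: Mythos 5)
Your induction is the same as the paper's, but your decomposition of $P$ is not, and the "hard part" you flag at the end is a genuine gap, not a routine detail.

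You universally split $P$ at the first vertex $z$ on $C_h$, handle $P[x,z]$ by the inductive hypothesis (correctly), and then try to recover $|P[z,y]|$. The trouble is that $z$ is not a distinguished vertex, so the encoding does not store any $G_s$-distance from $z$ to a separator or to $C_{ext}$. The hole-to-boundary and separator-to-hole distances are stored only inside a region $R$ (a subgraph of the slice $s$), so they do not account for the possibility that $P[z,y]$ re-enters $h$ or dips into another hole of $s$. Your remark that, when $P$ crosses a separator, "the recovery of $|P[z,y]|$ is literally the argument of Lemma~\ref{lem:x_s-to-B}" is off for exactly this reason: Lemma~\ref{lem:x_s-to-B} works by splitting at the last vertex $v$ on the earliest separator and invoking the stored $x$-to-$v$ distance; there is no stored $z$-to-$v$ distance. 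What actually works in that situation is to abandon the initial split at $z$ and apply Lemma~\ref{lem:x_s-to-B}'s argument directly to the original $x$-to-$y$ path.

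The paper's proof takes precisely that route, and it is what makes the wandering through holes tractable. It first asks whether $P$ hits any separator used before $h$ is eliminated; if so, it splits at the last vertex $v$ on the earliest such separator, and the $x$-to-$v$ distance is stored in $G_s$ as an $S$-to-separator distance (so the prefix can wander through $h$ and other holes freely), while the $v$-to-$y$ distance is a stored separator-to-boundary distance. Only when $P$ touches no such separator does it resort to a boundary-based decomposition, and there the structure is much cleaner than after your $z$-split: since $P$ misses all separators, $v_h$ and $y$ must lie in the same region $R$ when $h$ is eliminated or the recursion bottoms out, and the paper decomposes $P$ into a maximal prefix enclosed by $s'$ (handled by induction), an infix on $C_h$ (boundary-to-boundary distances for $s'$), and a maximal suffix enclosed by $R$ (hole-to-boundary distance). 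Your "reconstruct stretch by stretch" paragraph is gesturing at this last decomposition, but without the preceding case split it has to cope with a path that may also cross separators, which makes the accounting considerably harder than what the encoding was designed to support.

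So: same induction, genuinely different decomposition, and the hand-waved last step is exactly where the content of the lemma lies; the $z$-split forfeits the one piece of stored information ($x$'s $G_s$-distance to separators) that the paper uses to make the argument go through.
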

\begin{proof}
The proof is by induction on the nesting depth of slice $s$. The base case follows from Lemma~\ref{lem:x_s-to-B}. 	
For the inductive step, if $x \in S_s$ we are done by Lemma~\ref{lem:x_s-to-B}, so assume $x$ is enclosed by some hole $h$ of $s$.

If $P$ contains some vertex of a fundamental cycle separator used in processing $s$ before hole $h$ is eliminated, let $v$ be the last vertex of $P$ that belongs to the earliest such separator. By choice of the earliest separator, the $x$-to-$v$ distance (in $G_s$) is stored ($S$-to-separator distance), and by choice of the last vertex of that separator on $P$, the $v$-to-$y$ distance (in a region of $s$ that contains $P[v,y]$) is stored (separator-to-boundary distance). Thus, the length of $P$ can be recovered.

If $P$ contains no such vertex, then the artificial vertex $v_h$ and $y$ are in the same region $R$ when either the recursion terminates, or the hole $h$ is eliminated. In either case, the $C_h$-to-$y$ distances are stored (hole-to-boundary distance). Decompose $P$ into a maximal prefix $P[x,v]$ enclosed by the slice $s'$ whose boundary is $C_h$, a maximal suffix $P[w,y]$ enclosed by $R$, and an infix $P[v,w]$.
The length of the prefix is stored by the inductive hypothesis for $s'$. The length of the infix is represented by the boundary-to-boundary distances for $s'$. The length of the suffix is stored (hole-to-boundary distance).
\end{proof}

\noindent Finally, we extend the previous lemma and show that it applies to any two distinguished vertices.

\begin{figure}
\begin{center}
\includegraphics[width=0.5\textwidth]{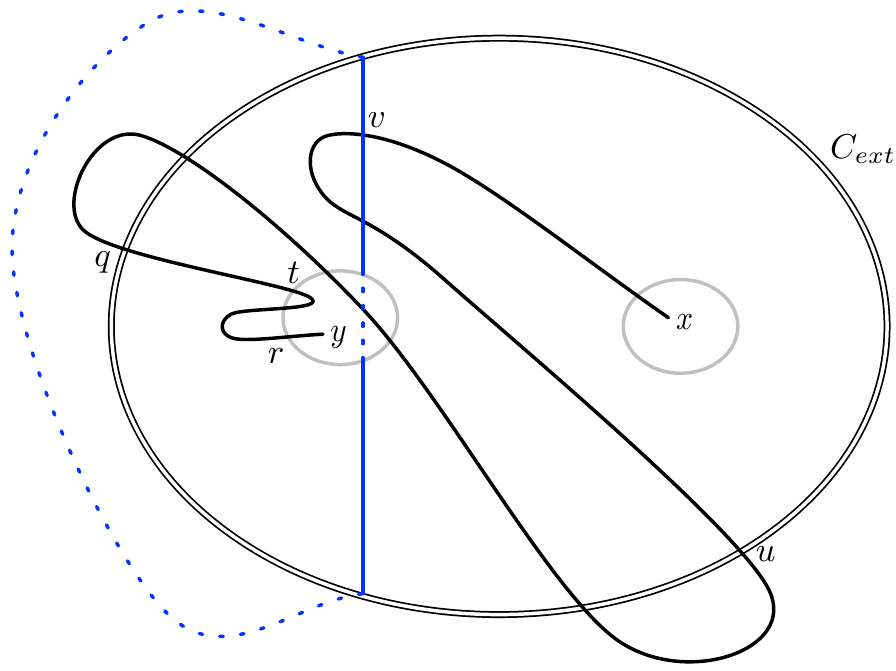}
\end{center}
\caption{An illustration of one of the cases in the proof of Lemma~\ref{lem:x-to-x}. A slice $s$ with two holes is shown. The boundary $C_{ext}$ is double-lined. A $x$-to-$y$ shortest path $P$ is shown in solid black. The vertices $x$ and $y$ belong to different holes of $s$ (black cycles). The path $P$ crosses a fundamental cycle separator (blue, parts that do not belong to $s$ are dashed) used in eliminating the hole to which $y$ belongs. \label{fig:correct}}
\end{figure}

\begin{lemma}\label{lem:x-to-x}
The length of a shortest path $P$ in $G_s$ from any  $x\in S^s$ to any $y\in S^s$ can be recovered from the encoding.	
\end{lemma}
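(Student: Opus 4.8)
The plan is to recover $d_{G_s}(x,y)$ by splitting $P$ at a single carefully chosen intermediate vertex $v$ for which both $d_{G_s}(x,v)$ and $d_{G_s}(v,y)$ can already be read off from the encoding, together with an induction on the nesting depth of the slice $s$ among the slices; the base case of a hole-free slice is subsumed by the main case below.

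First I would dispose of the case in which $P$ meets the external boundary $C_{ext}$ of $s$ at some vertex $a$: then $P[x,a]$ and $P[a,y]$ are both shortest paths in $G_s$ from a vertex of $S^s$ to a vertex of $C_{ext}$, so Lemma~\ref{lem:x^s-to-B} (whose statement is symmetric in which endpoint lies on $C_{ext}$) recovers both lengths and hence their sum. So assume $P$ avoids $C_{ext}$. I would then follow the recursive processing of $s$, tracking at each step the region containing $x$ --- or, if $x$ currently lies inside a hole of that region, the enclosing hole --- and similarly for $y$; both trackers start as the whole slice, and the corresponding active sets $S^R$ both start as $S^s$. Consider the first step of the recursion that ``separates'' $x$ from $y$ (puts the two trackers into different pieces, or drops one of them from the active set). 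The clean case is that this step applies a fundamental cycle separator $C$ --- a good separator of the outer recursion, or a separator used in some hole elimination --- to a region $R$ that still contains both trackers, with $x,y\in S^R$ at that moment. Since the $S$-to-separator distances recorded for $R$ store $d_{G_s}(z,v)$ for \emph{every} $z\in S^R$ and every $v\in C$, and $S^R$ includes all vertices enclosed by not-yet-eliminated holes of $R$ (in particular both $x$ and $y$), it suffices to show that $P$ crosses $C$; this I would argue as in Lemma~\ref{lem:x_s-to-B}, by taking $v$ to be the last vertex of $P$ on the earliest separator that $P$ meets and observing that, because $P$ avoids $C_{ext}$, the prefix of $P$ up to $v$ has not left $R$. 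Then $d_{G_s}(x,y)=d_{G_s}(x,v)+d_{G_s}(v,y)$.

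The complementary case is that the separating step instead declares a disposable hole $h$ enclosing exactly one of $x,y$, say $x$ (the case of $y$ is symmetric; if $x$ and $y$ lie in the same hole I simply recurse into that hole's slice via the induction hypothesis). Here I would take $v$ to be the first vertex of $P$ on the boundary $C_h$ of $h$; since a path cannot leave the region bounded by $C_h$ without touching $C_h$, the prefix $P[x,v]$ lies inside the slice $s'$ bounded by $C_h$, and because $P$ is a shortest path, $P[x,v]$ is in fact a shortest $x$-to-$v$ path in $G_{s'}$ with $d_{G_{s'}}(x,v)=d_{G_s}(x,v)$, which the induction hypothesis (Lemma~\ref{lem:x^s-to-B} applied to $s'$, of smaller nesting depth) recovers. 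To recover the distance from $v\in C_h$ to $y$, I would rerun the same dichotomy on the suffix $P[v,y]$ with $v$ playing the role of a distinguished vertex: this is legitimate because from the moment $h$ is declared disposable the encoding stores hole-to-boundary distances between $C_h$ and every $C_{ext}$-fragment of $R$, separator-to-hole distances between $C_h$ and every separator used while eliminating $h$, and it keeps storing $S$-to-separator and separator-to-boundary distances for all later regions, so $P[v,y]$ can be broken --- at its first vertex on an elimination separator of $h$, at its exit from $R$, or at a vertex of $C_{ext}$ --- into pieces each recorded by one of boundary-to-boundary, hole-to-boundary, separator-to-hole, $S$-to-separator, separator-to-boundary, or $S$-to-boundary distances, exactly as in Lemmas~\ref{lem:x_s-to-B} and~\ref{lem:x^s-to-B}.

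The hard part will be this last case. Once $x$'s enclosing hole $h$ has been ``forgotten'' --- its artificial vertex reweighted to zero and $x$ dropped from every subsequent active set --- one must still certify that distances from the handful of vertices of $C_h$ out to $y$ are fully captured, and the difficulty compounds because $y$ may itself later be swallowed by its own disposable hole, forcing $P[v,y]$ through several nested layers of hole eliminations and region separations before reaching a stored distance at each layer. I expect no genuinely new idea is needed beyond the case bookkeeping already developed for Lemmas~\ref{lem:x_s-to-B} and~\ref{lem:x^s-to-B}, only a more elaborate version of it.
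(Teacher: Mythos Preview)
Your proposal follows the same strategy as the paper: induct on slice depth, break $P$ at carefully chosen intermediate vertices (on separators, on $C_{ext}$, or on the boundary $C_h$ of a disposed hole), and match each resulting piece to one of the stored distance types ($S$-to-separator, separator-to-boundary, separator-to-hole, hole-to-boundary, boundary-to-boundary, or an inductive appeal to Lemma~\ref{lem:x^s-to-B}). The organizational differences are minor: you first dispose of the case where $P$ touches $C_{ext}$ via two applications of Lemma~\ref{lem:x^s-to-B} and then track the ``first separating step,'' whereas the paper cases directly on whether $P$ contains a vertex of some separator before either endpoint's hole is eliminated; and you choose your pivot $v$ on $C_h$ (the boundary of the disposed hole) and then rerun the dichotomy on the suffix, whereas the paper picks $v$ on the earliest separator $P$ meets during the hole elimination and then decomposes the middle segment $P[v,w]$ by its first contact with $C_{ext}$ or $C_h$. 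Both routes unwind to the same bookkeeping you anticipate in your final paragraph.

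One point to tighten in your ``clean case'': you write that it suffices that $P$ \emph{crosses} $C$, but when $C$ is a hole-elimination separator for some third hole $h'''$, the Jordan cycle $C$ passes through the artificial vertex $v_{h'''}$, and $P$ (which lives in $G_s$, not the triangulated slice) can cross $C$ topologically inside $h'''$ without meeting any real vertex of $C$. The paper sidesteps this by phrasing its first case as ``$P$ contains some vertex of a separator'' rather than ``$P$ crosses a separator,'' and treating the complementary situation together with the hole-disposal case. You should do likewise: your Case~2a really needs $P$ to \emph{meet} a separator at a vertex, not merely to cross one, and the residual situation (trackers separated, yet $P$ meets no separator vertex) folds into the same hole-boundary bookkeeping you describe for Case~2b. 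Similarly, ``if $x$ and $y$ lie in the same hole recurse into that hole's slice'' only yields $d_{G_{s'}}(x,y)$; when $P$ exits the hole you need the extra decomposition via $C_h$ and hole-to-boundary distances, exactly as in your main Case~2b.
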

\begin{proof}	
Assume, wlog, that both $x$ and $y$ are enclosed in holes of $s$ (the other cases are similar and less general). 
If $P$ contains some vertex of a fundamental cycle separator used in processing $s$ before either hole is eliminated, then let $v$ be a vertex of $P$ that belongs to the earliest such separator. By choice of the earliest separator, both the $x$-to-$v$ and the $y$-to-$v$ distances (in $G_s$) are stored ($S$-to-separator distance). Thus, the length of $P$ can be recovered.

Otherwise, the hole of $x$ and the hole of $y$ are in the same region $R$ when one of them, say the hole $h$ of $x$, is eliminated. If $P$ intersects one of the fundamental cycle separators used during the elimination process of hole $h$, then let $v$ be the last vertex on the earliest such separator (see Figure~\ref{fig:correct}). By choice of earliest separator, the $x$-to-$v$ distance is stored ($S$-to-separator distance). By Lemma~\ref{lem:x^s-to-B}, the length of the maximal suffix $P[w,y]$ enclosed in $h$ is also stored. Let $u$ be the first vertex of $P[v,w]$ that belongs to either $C_{ext}$ or $C_h$ ($u$ exists because $y$ is enclosed by $C_h$ and $v$ is not). 
\begin{itemize}
\item
If $u$ belongs to $C_{ext}$ then the length of $P[v,u]$ is stored (separator-to-boundary distance). In this case, let $q$ be the last vertex of $P$ that belongs to $C_{ext}$. The length of $P[u,q]$ is represented by boundary-to-boundary distances for $C_{ext}$. Let $t$ be the first vertex of $P[q,y]$ that belongs to $C_h$. The length of $P[q,t]$ is represented as a hole-to-boundary distance (when $h$ is eliminated). Let $r$ be the last vertex of $P[t,y]$ that belongs to $C_h$. The length of $P[t,r]$ is represented by boundary-to-boundary distances for $C_h$, and the length of $P[r,y]$ is represented by Lemma~\ref{lem:x^s-to-B}. See Figure~\ref{fig:correct} for an illustration.
\item
If $u$ belongs to $C_h$ then the length of $P[v,u]$ is represented as a separator-to-hole distance. The representation of the suffix $P[u,y]$ is then similar to the previous case.
\end{itemize}

Finally, we need to treat the case where $P$ does not intersect any fundamental cycle used in eliminating the hole $h$. In this case $P$ can be decomposed into a $x$-to-$C_{ext}$ prefix, a $C_h$-to-$y$ suffix, and subpaths of $P$ between vertices of $C_h \cup C_{ext}$. The prefix and suffix are represented by Lemma~\ref{lem:x^s-to-B}. The other subpaths are represented as hole-to-boundary or boundary-to-boundary distances as in the two cases above.
\end{proof}

\noindent Finally, we now show that the entire encoding requires only $\tilde\Oh(\sqrt{k\cdot n})$ bits.

\paragraph{The encoding size.}

The space required for the boundary-to-boundary distances for all slices is $O(n/w)$ since the total boundary size is $\Oh(n/w)$, and by Lemma~\ref{lem:cycle}.

We next bound the total space required for $S$-to-separator distances for all slices.
Whenever a distinguished vertex stores its distances to a path $P$
explicitly, the total weight of its region decreases by a constant factor within $\tilde \Oh(1)$ recursive steps (either immediately, if this happens in the outer recursion, or otherwise by the time the hole-elimination process ends). So this can happen $\tilde \Oh(1)$ times per distinguished vertex.
Because $|P|=\Oh(w)$ (by the
height of $T_s$), this sums up
to a total of $\tilde\Oh(k\cdot w)$ bits.

The analysis of the remaining distances is done for each slice separately. We have already argued that the depth of the recursive process to handle a slice $s$ is $\Oh(\log^2 n) = \tilde \Oh(1)$. Similarly to the analysis in Section~\ref{singleholesection} of the single hole case, the total space required for storing separator-to-boundary distances using Lemma~\ref{lem:unit} or Lemma~\ref{lem:hole} at all calls at the same recursive level is $\Oh(n/w + k + k\cdot w)$. For exactly the same reasons, the total space required for storing separator-to-hole distances using Lemma~\ref{lem:unit} at all calls at the same recursive level (this only happens in the hole-elimination recursion) is $\Oh(n/w +  k + k\cdot w)$. 

Hole-to-boundary distances are stored using Lemma~\ref{lem:hole} for at most one hole in each region along the recursion. Each invocation of Lemma~\ref{lem:hole} for hole $C_h$ and boundary fragment $b_i$ requires $\tilde \Oh(|C_h| + |b_i|)$ bits. For a single level of the recursion, this sums up to $\tilde \Oh(k + n/w)$ because the total size of all boundaries is $\Oh(n/w)$ and there are $\Oh(k)$ vertices that contribute in more than one region (endpoints of $b_i$'s). The bound for $S$-to-boundary distances is $\tilde \Oh(k + n/w)$ for the same reasons.

To conclude, we showed that the total size of the entire encoding is bounded by $\tilde\Oh(n/w+k\cdot w)$, which is $\tilde\Oh(\sqrt{k\cdot n})$ by choosing $w=\sqrt{n/k}$.

\section{A Tight Lower Bound}
\label{sec:lowerbound}

Recall that Gavoille et al.~\cite{GPPR04} show how to construct,
given a Boolean
$\frac{k}{2}\times \frac{k}{2}$ matrix $B$, a planar grid $G(B)$ containing
$\Oh(k^3)$ vertices, such that $B$
can be recovered from the distances between $k$ distinguished vertices of
$G(B)$. This shows that, for
$k\leq n^{1/3}$, encoding all distances between $k$ vertices of a planar graph
requires
$\Theta(k^2)$ bits. For $k\geq n^{1/3}$, we consider $t$ Boolean $\frac{k}{2t}
\times \frac{k}{2t}$
matrices $B_1,B_2,\ldots, B_t$. For each of these matrices, we construct a planar
grid containing
$\Oh((\frac{k}{2t})^3)$ vertices. The disjoint union of all these grids is a
planar graph
on $\Oh(t(\frac{k}{2t})^3)=\Oh(k^3/t^2)$ vertices, such that all Boolean
matrices can be recovered from
the distances between the $k$ distinguished vertices. Hence, encoding all such
distances requires
$\Omega(t(\frac{k}{2t})^2)=\Omega(k^2/t)$ bits. Setting $t=\sqrt{k^3/n}$ we
obtain that
encoding all distances between the $k$ distinguished vertices of a planar graph on
$n$ vertices
requires $\Omega(\sqrt{k\cdot n})$ bits.

\section{Query Time}
\label{sec:query}

The goal of Section~\ref{sec:upper} was to guarantee that all distances between
distinguished vertices are captured, but we were not concerned with the complexity
of retrieving such a distance. In this section we explain how to augment the encoding
to allow efficient extraction of the stored distances.

We start with reformulating our encoding using the notion of \emph{dense distance graphs}.
Vertices of a dense distance graph are listed explicitly, but its edges are described implicitly
with unit Monge matrices. Each such matrix describes lengths of the edges between
every $u\in U$ and $v\in V$, for some subsets of nodes $U$ and $V$.
The matrix is represented using $\tilde\Oh(|U|+|V|)$ bits as described in Lemma~\ref{lem:unit}.
In particular, we may have $|U|=|V|=1$ and then the matrix simply stores the length
of a single edge explicitly. The size of a dense distance graph is the total number of vertices
plus the sum of $|U|+|V|$ over all matrices describing length of the edges.
By construction, our encoding described in Section~\ref{sec:upper} is based
on defining a dense distance graph of size $\tilde\Oh(\sqrt{k\cdot n})$. Every distinguished
node of the original graph is a vertex of the dense distance graph, and the distance between
two distinguished nodes of the original (unweighted) graph is the same as the distance
between their corresponding vertices in the (weighted) dene distance graph.
Fakcharoenphol and Rao designed an efficient algorithm for computing the shortest paths
in such a graph, nicknames the FR-Dijkstra:~\footnote{In the original paper and most of
the subsequent work, the dense distance graph is obtained from an $r$-division of a planar
graph. The vertices are the boundary nodes and distances between boundary nodes in the
same region are represented with multiple Monge matrices. However, it is easy to see that
their algorithm work for any dense distance graph as defined above.}

\begin{lemma}[\cite{FR06}]
\label{lem:FR}
Distance between any two vertices of a dense distance graph of size $s$ can be found in $\tilde\Oh(s)$
time.
\end{lemma}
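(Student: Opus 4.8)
The plan is to run Dijkstra's algorithm on the dense distance graph $H$ without ever materializing its (up to $\Theta(s^2)$) edges; this is exactly the FR--Dijkstra algorithm of~\cite{FR06}, and the one thing to verify is that its analysis uses nothing beyond the defining properties of a dense distance graph. Write the edge information of $H$ as Monge matrices $M_1,\ldots,M_t$, where $M_\ell$ records the lengths of the edges from a vertex set $U_\ell$ to a vertex set $V_\ell$ and is stored as in Lemma~\ref{lem:unit}; by assumption $|V(H)|+\sum_\ell(|U_\ell|+|V_\ell|)=\tilde\Oh(s)$, and each $M_\ell$ is (unit-)Monge. Maintain tentative distances $\delta[\cdot]$ and a global priority queue $Q$. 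The difficulty is that when a vertex $u$ is finalized we must relax all edges $(u,v)$ with $v\in\bigcup_{\ell:u\in U_\ell}V_\ell$, which naively costs $\Omega(s)$ per finalization and $\Omega(s^2)$ overall. Instead, for each block $M_\ell$ we keep an auxiliary structure supporting two operations: \emph{activate} a source $u\in U_\ell$ with offset $\delta[u]$, and \emph{pop} the not-yet-output column $v\in V_\ell$ minimizing $\mu_\ell(v)=\min\{\delta[u]+M_\ell[u,v] : u\text{ active}\}$. Dijkstra then keeps one representative entry per block in $Q$ (the last value that block popped); extracting the global minimum finalizes a column $v$, activates $v$ as a source in every block containing it, and re-pops those blocks to refresh their representatives. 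Since all edge lengths are nonnegative and vertices are finalized in nondecreasing order of distance, by the time block $M_\ell$ is about to output $v$ every source relevant to $d(v)$ is already active, so the popped value equals $d(v)$ and the usual Dijkstra invariant yields correctness.

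\noindent\textbf{The per-block structure.} The heart of the proof is to implement each auxiliary structure so that a sequence of at most $|U_\ell|$ activations interleaved with at most $|V_\ell|$ pops runs in $\tilde\Oh(|U_\ell|+|V_\ell|)$ time total. Here the Monge property is essential: for a Monge matrix, if $A\subseteq U_\ell$ is the current active set then the lower envelope $v\mapsto\min_{u\in A}\delta[u]+M_\ell[u,v]$ is a step function in which each row of $A$ wins on a contiguous block of columns, and inserting one new row into $A$ modifies the winner only on $\Oh(1)$ contiguous ranges of columns, which can be located by binary search. Store $V_\ell$ at the leaves of a balanced binary search tree, each node owning a sub-interval of columns and caching the restriction of the envelope to it together with its minimum; an activation updates $\tilde\Oh(1)$ internal nodes plus the $\Oh(\log|V_\ell|)$ nodes on the root-to-leaf paths bounding the affected ranges, and a pop reads the root minimum and deletes the winning column. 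A potential argument — charging each overwrite of a column's envelope entry, of which there are $\tilde\Oh(|U_\ell|+|V_\ell|)$ in total by Monge monotonicity — gives the claimed bound. This is precisely the recursive data structure of Fakcharoenphol and Rao; one checks that it uses only the Monge property of the individual blocks, not any global structure coming from a planar $r$-division, so it applies verbatim to our dense distance graphs (Monge suffices; our blocks are in fact unit-Monge, which only makes the bookkeeping easier).

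\noindent\textbf{Putting it together.} The global queue $Q$ undergoes $\Oh(t+|V(H)|)=\Oh(s)$ insertions and extract-mins at $\Oh(\log s)$ each, and the block structures contribute $\sum_\ell\tilde\Oh(|U_\ell|+|V_\ell|)=\tilde\Oh(s)$; hence the total running time is $\tilde\Oh(s)$, and reading off $\delta[\cdot]$ at the target vertex yields the sought distance. The main obstacle is the amortized analysis of the per-block structure — bounding by $\tilde\Oh(|U_\ell|+|V_\ell|)$ the total number of envelope updates triggered by activations that arrive in an order uncorrelated with the matrix's row order — which is the technical core of~\cite{FR06} and the one place where the Monge ``staircase'' is used in an essential way; the only genuinely new (and routine) point is checking that this core is insensitive to how the dense distance graph was produced, which is immediate from the interface above.
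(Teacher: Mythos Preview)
Your proposal is correct and matches the paper's approach: the paper does not prove this lemma at all but simply cites it as the FR--Dijkstra of~\cite{FR06}, noting in a footnote precisely the point you emphasize---that the algorithm and its analysis depend only on the per-block Monge property and therefore apply to any dense distance graph, not just those arising from an $r$-division. Your sketch of the per-block structure and the global Dijkstra loop is a faithful (if somewhat informal) summary of what~\cite{FR06} does, so there is nothing to add.
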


Applying Lemma~\ref{lem:FR} gives us an oracle of size $\tilde\Oh(\sqrt{k\cdot n})$ answering
queries in $\tilde\Oh(\sqrt{k\cdot n})$ time. For very large $k$, say $k=\Omega(n)$, the query
time is clearly not optimal, as there exists an oracle of size $\tilde\Oh(n)$ answering queries
in $\tilde\Oh(\sqrt{n})$~\cite{FR06} time. In the remaining part of this section we will
describe how to construct an oracle of size $\tilde\Oh(\sqrt{k\cdot n})$ answering queries
in $\tilde\Oh(n^{3/4})$ time.

To improve the query time, we apply the vanilla planar separator lemma.

\begin{lemma}
\label{lem:separator}
For any planar graph $G$ on $n$ nodes, there exists a partition of the nodes of $G$ into sets
$A$, $B$, and $S$, such that $|A|,|B|\leq \frac{2}{3}n$, $|S|=\Oh(\sqrt{n})$, and there are no
edges between the nodes of $A$ and $B$.
\end{lemma}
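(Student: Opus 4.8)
The plan is to run the classical Lipton--Tarjan argument, extracting the separator of the ``fat middle band'' via the fundamental--cycle separator fact for triangulated planar graphs recalled in the Preliminaries. A recurring ingredient is the elementary \emph{bucketing} observation: if $n$ elements are split into parts each of size at most $\tfrac{2}{3}n$, the parts can be distributed into two groups each of total size at most $\tfrac{2}{3}n$ (repeatedly add the largest remaining part to the currently smaller group; or, if some part already exceeds $\tfrac{n}{3}$, place it alone on one side and everything else on the other). Hence it suffices to produce $S$ with $|S|=\Oh(\sqrt{n})$ such that every connected component of $G-S$ has at most $\tfrac{2}{3}n$ vertices and then bucket the components into $A$ and $B$; since components have no edges between them, the resulting $A,B$ have no edges between them. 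In particular we may assume $G$ is connected: otherwise either some component already exceeds $\tfrac{2}{3}n$ vertices (work inside it, the other components being small) or all components are small (take $S=\emptyset$).

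Assume $G$ connected, fix a root, and run BFS to obtain layers $L_0,L_1,\dots,L_t$; each non-last $L_i$ is a vertex cut separating $\bigcup_{j<i}L_j$ from $\bigcup_{j>i}L_j$. Let $\ell_0$ be the level with $\sum_{j<\ell_0}|L_j|<n/2\le\sum_{j\le\ell_0}|L_j|$. Among the at most $\lceil\sqrt{n}\rceil$ levels $\ell_0-\lceil\sqrt{n}\rceil+1,\dots,\ell_0$, not all can have size exceeding $\sqrt{n}$ (that would be more than $n$ vertices), so choose $a\le\ell_0$ with $|L_a|\le\sqrt{n}$ and $\ell_0-a<\sqrt{n}$; symmetrically choose $b\ge\ell_0$ with $|L_b|\le\sqrt{n}$ and $b-\ell_0<\sqrt{n}$. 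Let $T=\bigcup_{j<a}L_j$, $B'=\bigcup_{j>b}L_j$, $M=\bigcup_{a<j<b}L_j$, so $|T|<n/2$, $|B'|\le n/2$, $|M|<n$, and $b-a<2\sqrt{n}$. If $|M|\le\tfrac{2}{3}n$, put $S=L_a\cup L_b$: every component of $G-S$ lies entirely in $T$, $B'$, or $M$, each of size at most $\tfrac{2}{3}n$, and bucketing finishes.

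Otherwise $|M|>\tfrac{2}{3}n$ and the middle band spans only $b-a<2\sqrt{n}$ levels. Form $G'$ from $G\big[\bigcup_{j\le b}L_j\big]$ by contracting the connected, downward-closed set $\bigcup_{j\le a}L_j$ (it contains the root) to a single vertex $r$ and discarding resulting parallel edges; BFS from $r$ in $G'$ yields a spanning tree $T'$ of depth $b-a=\Oh(\sqrt{n})$, so every fundamental cycle with respect to $T'$ has $\Oh(\sqrt{n})$ vertices. Give weight $1$ to each vertex of $M$ and $0$ to $r$ and to $L_b$ (no vertex exceeds half the total, which is $|M|$, once $n$ is large). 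Triangulate $G'$ to a maximal planar graph $\widehat{G'}$; then $T'$ is still a spanning tree of $\widehat{G'}$ of the same depth, and since $\widehat{G'}$ is triangulated its cotree has maximum degree $3$, so by the fundamental--cycle separator fact recalled in the Preliminaries there is a fundamental cycle $C$ with respect to $T'$ that is a balanced separator (for the weights transferred to faces), which we take with balance $\tfrac{2}{3}$ by the classical guarantee. Thus $|V(C)|=\Oh(\sqrt{n})$ and each open side of the Jordan cycle $C$ in $\widehat{G'}$ contains at most $\tfrac{2}{3}|M|\le\tfrac{2}{3}n$ vertices of $M$.

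Finally set $S=L_a\cup L_b\cup\big(V(C)\cap V(G)\big)$, of size $\Oh(\sqrt{n})$. A component of $G-S$ cannot meet two of $T,B',M$ (it would cross $L_a$ or $L_b$), so it lies in $T$ (size $<n/2$), in $B'$ (size $\le n/2$), or in $M$; in the last case any $G-S$ path between two of its $M$-vertices avoids $L_a,L_b$, hence stays in levels $a{+}1,\dots,b{-}1$, hence is a path of $\widehat{G'}-C$, and so cannot join the inside and the outside of $C$ --- so the component sits on one side of $C$ and has at most $\tfrac{2}{3}n$ vertices. Every component of $G-S$ thus has at most $\tfrac{2}{3}n$ vertices, and bucketing yields $A$ and $B$. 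The main thing to get right is the fat-band case: one must check that contracting the top and triangulating preserves the shallow spanning tree (it does, since $T'$ itself is untouched and only the ambient graph changes) and that a Jordan separator of the triangulated contracted graph genuinely separates the original middle layers --- precisely the ``path in $\widehat{G'}-C$'' argument above. The remaining steps are the routine layer-counting of the classical proof.
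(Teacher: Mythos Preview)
Your argument is correct; it is precisely the classical Lipton--Tarjan proof (BFS layering, isolate a shallow ``fat middle band'' between two thin levels, then in the remaining case contract the top, triangulate, and invoke a short fundamental-cycle separator). The paper, however, does not prove this lemma at all: it is stated as ``the vanilla planar separator lemma'' and used as a black box. So you have supplied a complete proof where the paper simply cites the known theorem.

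One minor remark on the interface with the paper's Preliminaries: the paper states the fundamental-cycle separator with balance constant $5/6$ (for its own later convenience), whereas you invoke the sharper classical $2/3$ guarantee. This is harmless---the standard edge-centroid argument in the degree-$3$ cotree does give $2/3$---but if you want to stay strictly within what the paper records, you could either carry $5/6$ through (which still yields $|A|,|B|\le \tfrac{5}{6}n$ and hence, after one more level of recursion, the stated $\tfrac{2}{3}$), or simply note that the paper itself acknowledges $2/3$ as the usual constant.
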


We recursively apply Lemma~\ref{lem:separator} to construct a hierarchical decomposition of the
whole graph. The recursion is described by a binary tree $\mathcal{K}$, where every node
$u\in\mathcal{K}$ corresponds to an induced subgraph $G(u)$ of the original graph. We let $n(u)$ and
$s(u)$ denote the number of nodes and distinguished nodes in $G(u)$, respectively. We terminate
the recursion as soon as $s(u) \leq \sqrt{n(u)}$. If $u\in\mathcal{K}$ is a leaf, we define its set of
distinguished nodes $D(u)$ to consists of all the distinguished nodes of $G(u)$. Otherwise, $D(u)$
consists of the following nodes:
\begin{enumerate}
\item the separator of $G(u)$,
\item for every child $v$ of $u$ that is a leaf, all the distinguished nodes of $G(v)$,
\item for every child $v$ of $u$ that is not a leaf, the separator of $G(v)$.
\end{enumerate}
Then, we construct a dense distance graph of size $\tilde\Oh(\sqrt{n(u)|D(u)|})$ capturing distances between
any two nodes from $D(u)$ in $G(u)$.

To calculate the distance $d_{G}(u,v)$ between two
distinguished nodes $u$ and $v$ in $G$, we locate the deepest nodes $u'$ and $v'$ of
$\mathcal{K}$, such that $u\in G(u')$ and $v\in G(v')$. Then, we consider the union of all dense distance graphs
constructed for the nodes of $\mathcal{K}$ on the paths from $u'$ and $v'$ to the root.
Note that the same node of $G$ might appear in more than one of these dense distance graphs, and we identify all
of its copies. By construction, the obtained dense distance graph captures the sought distance. Furthermore,
its size is bounded by
\[ \max_{u\in \mathcal K}\tilde\Oh(\sqrt{n(u)|D(u)|}))= \max_{u\in \mathcal{K}}\tilde\Oh(\sqrt{n(u)\sqrt{n(u)}})  = \tilde\Oh(n^{3/4}). \]
Therefore, by Lemma~\ref{lem:FR} we can answer a query in $\tilde\Oh(n^{3/4})$ time. It remains
to bound the size of the resulting oracle.

\begin{lemma}
\label{lem:nodesize}
The dense distance graph constructed for node $u\in\mathcal{K}$ is of size $\tilde\Oh(\sqrt{n(u)\cdot s(u)})$.
\end{lemma}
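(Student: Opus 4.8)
The plan is to bound the total size of the dense distance graph built for a node $u\in\mathcal{K}$ by summing the contributions of the three types of nodes in $D(u)$ and then invoking the encoding of Section~\ref{sec:upper}. Recall that a dense distance graph capturing distances between any two nodes of $D(u)$ in $G(u)$ has size $\tilde\Oh(\sqrt{n(u)\cdot|D(u)|})$ by applying the construction of Section~\ref{sec:upper} to the graph $G(u)$ with the set $D(u)$ playing the role of $S$. So it suffices to show $|D(u)|=\Oh(s(u))$ up to polylog factors, or more precisely that $\sqrt{n(u)\cdot|D(u)|}=\tilde\Oh(\sqrt{n(u)\cdot s(u)})$.

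First I would handle the three parts of $D(u)$ separately. The separator of $G(u)$ has size $\Oh(\sqrt{n(u)})$ by Lemma~\ref{lem:separator}; since $u$ is not a leaf (otherwise $D(u)$ is just the distinguished nodes and the bound is immediate) we have $s(u) > \sqrt{n(u)}$, so this term is $\Oh(s(u))$. For a child $v$ of $u$ that is a leaf, its contribution is all distinguished nodes of $G(v)$, i.e. $s(v)$ nodes; summing over the (two) children that are leaves gives at most $s(u)$ nodes total, since the children partition the non-separator vertices of $G(u)$ and hence $\sum_v s(v)\le s(u)$. For a child $v$ of $u$ that is not a leaf, its contribution is the separator of $G(v)$, of size $\Oh(\sqrt{n(v)})$; but $v$ not being a leaf means $s(v) > \sqrt{n(v)}$, so this is $\Oh(s(v))\le\Oh(s(u))$. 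Adding up, $|D(u)|=\Oh(s(u))$, and therefore $\tilde\Oh(\sqrt{n(u)\cdot|D(u)|})=\tilde\Oh(\sqrt{n(u)\cdot s(u)})$, as claimed.

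The one point that needs a little care — and which I expect to be the main (though minor) obstacle — is the case analysis hidden in ``$v$ is not a leaf $\Rightarrow s(v)>\sqrt{n(v)}$'': this is exactly the negation of the recursion-termination condition $s(v)\le\sqrt{n(v)}$, so it holds by construction of $\mathcal{K}$. One must also be slightly careful that the separator vertices themselves might be distinguished or might be double-counted between $G(u)$ and its children; but since we only need an upper bound and the $\Oh(\sqrt{n(u)})$ term already dominates, this causes no trouble. Finally, the bound $\sum_{v\text{ child of }u} s(v)\le s(u)$ follows because the sets $A,B,S$ in Lemma~\ref{lem:separator} partition $V(G(u))$, so the distinguished nodes of $G(u)$ are partitioned among the two recursive subgraphs together with the separator $S$. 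With $|D(u)|=\Oh(s(u))$ established, applying the Section~\ref{sec:upper} encoding to $G(u)$ with distinguished set $D(u)$ yields a dense distance graph of size $\tilde\Oh(\sqrt{n(u)\cdot s(u)})$.
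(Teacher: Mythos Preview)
Your proof is correct and follows essentially the same approach as the paper: reduce to showing $|D(u)|=\Oh(s(u))$, handle the leaf case trivially, and for non-leaves use $s(u)>\sqrt{n(u)}$ to bound each of the three contributions to $D(u)$ by $\Oh(s(u))$. The only cosmetic difference is that for a non-leaf child $v$ the paper bounds the separator directly via $\sqrt{n(v)}\le\sqrt{n(u)}=\Oh(s(u))$, whereas you route through $\sqrt{n(v)}<s(v)\le s(u)$; both work.
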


\begin{proof}
To prove the lemma it is enough to bound $|D(u)|$ by $\Oh(s(u))$. If $u$ is a leaf, this is clear.
Otherwise, $s(u) > \sqrt{n(u)}$ and $D(u)$ consists of the following nodes:
\begin{enumerate}
\item the separator of $G(u)$ of size $\Oh(\sqrt{n(u)}) = \Oh(s(u))$.
\item for every child $v$ of $u$ that is a leaf, all $s(v)\leq s(u)$ distinguished nodes of $G(v)$,
\item for every child $v$ of $u$ that is not a leaf, the separator of $G(v)$ of size $\Oh(\sqrt{n(v)})=\Oh(\sqrt{n(u)})=\Oh(s(u))$.
\end{enumerate}
Node $u$ has at most two children, so indeed $D(u)=\Oh(s(u))$.
\end{proof}

To upper bound the size of the oracle, we need to upper bound the sum
$\sum_{u\in \mathcal{K}} \sqrt{n(u)\cdot s(u)}$. To this end, we separately consider all nodes
$u\in\mathcal{K}$ such that $n(u)\in [(\frac{2}{3})^{\ell+1}n,(\frac{2}{3})^{\ell}n)$, for every $\ell=0,1,\ldots,\Oh(\log n)$.
Fix $\ell$ and call these nodes $u_{1},u_{2},\ldots,u_{t}$. Then, no $u_{i}$ is a descendant of another $u_{j}$,
so every node of the original graph appears in at most one $G(u_{i})$. Therefore, $\sum_{i} s(u_{i})\leq k$
and $\sum_{i} n(u_{i}) \leq n$. From the latter inequality and the lower bound on $n(u)$ we obtain
that $t \leq (\frac{3}{2})^{\ell+1}$. Now we want to upper bound the following sum:
\[ \sum_{i} \sqrt{n(u_{i}) \cdot s(u_{i})} \leq \sqrt{\left(\frac{2}{3}\right)^{\ell}n} \sum_{i} \sqrt{s(u_{i})} = \Oh(\sqrt{n/t} \cdot \sum_{i} \sqrt{s(u_{i})}). \]
From the concavity of $f(x)=\sqrt{x}$, the above sum is maximized when $s(u_{i})=k/t$, so we obtain:
\[ \sum_{i} \sqrt{n(u_{i}) \cdot s(u_{i})} = \Oh(\sqrt{n/t} \cdot t \cdot \sqrt{k/t}) = \Oh(\sqrt{k \cdot n}). \]
To obtain an upper bound on $\sum_{u\in \mathcal{K}} \sqrt{n(u)\cdot s(u)}$, we only
need to multiply the above bound by $\log n$ because for every $u\in\mathcal{K}$ there exists
$\ell$ such that $n(u)$ belongs to the appropriate interval, so the total size of the oracle is
$\tilde\Oh(\sqrt{k \cdot n})$.

\section{Labeling Schemes for Unit-Monge Matrices}
\label{sec:unit}

A distance labeling scheme is a way to compress graphs that allows for distributed decoding. The goal is to assign a label $\ell_v$ for each node $v$, so that by looking at the labels of two nodes $\ell_s,\ell_t$ (without access to the original graph) we can infer the distance between them $d(s,t)$.
The main question one asks about such schemes is \emph{how small can the labels be}? A famous open question is to close the gap between the $O(\sqrt{n})$ upper bound \cite{GPPR04,GawrychowskiU16} and the $\Omega(n^{1/3})$ \cite{GPPR04} lower bound for planar graphs.
The only known technique capable of proving a tight lower bound is via a lower bound for the metric compression problem: if you show that the metric cannot be compressed into $O(k \cdot n^{1/2-\eps})$ bits, then you show that no labels of size $O(n^{1/2-\eps})$ are possible.
Our work deems this approach impassable, since such compressions are indeed possible.
Optimistically, it is natural to ask if our upper bound for compression could lead to a better upper bound for labeling. 
Our encoding assigns $o(n^{1/2})$ bits per node, but can we distribute these bits to the nodes while allowing any pair of nodes to deduce the distance from their local information?
In this section, we discuss why this seems difficult.

The heart of our encoding is Lemma~\ref{lem:unit}, which is repeatedly used to capture pairwise distances between a large subset
of nodes of the graph using space proportional to the size of the subset. A key part in the proof of the lemma is
an efficient encoding of an $n \times n$ matrix into $\tilde{O}(n)$ bits, as long as it has the \emph{unit-Monge} property, that is:
\begin{align*}
M[i+1,j+1] +M[i,j]- M[i,j+1] - M[i+1,j] &\geq& 0& \qquad\text{ for any } i,j
\in [1,n-1], \\ 
|M[i,j] - M[i+1,j]| &\leq& 1 &\qquad\text{ for any } i\in [1,n-1], j\in [1,n],
\\ 
|M[i,j] - M[i,j+1]| &\leq& 1 &\qquad\text{ for any } i\in [1,n], j\in [1,n-1].
\end{align*}
and every $M[i,j]$ is an non-negative integer not exceeding $n$.
The corresponding labeling problem
would be to assign
a label to every row and every column of $M$, such that $M[i,j]$ can be
computed from the label
of the $i$-th row and the $j$-th column.
We will show that the $\tilde{O}(n)$ bits of the encoding \emph{cannot} be distributed into $\tilde{O}(1)$ bits per label.
In any such labeling scheme, some labels must be of length $\Omega(\sqrt{n})$.
For completeness, we will
also provide a matching upper bound of $\tilde\Oh(\sqrt{n})$.

\medskip

We start with recalling the following connection between unit-Monge matrices
and permutation matrices. $P$ is a permutation matrix if every row and every
column contains at most one 1 and 0s elsewhere. Then, it is straightforward
to verify that, for any permutation matrix $P$ the matrix $M$ defined as
$M[i,j]=\sum_{i' \geq i, j'\geq j} P[i',j']$ is a unit-Monge matrix. In fact,
essentially any unit-Monge matrix can be obtained through such transformation.
This is known, see e.g. Section 2 in~\cite{Tiskin}, but we provide a proof for completeness.

\begin{lemma}
\label{lem:unitperm}
For any unit-Monge matrix $M$, there exists a permutation matrix $P$, such that
\[
M[i,j] = H[i]+V[j]+\sum_{i' \geq i, j'\geq j} P[2i',2j'].
\]
where $H$ and $V$ are vectors of length $n$ with non-negative entries bounded
by $n$. 
\end{lemma}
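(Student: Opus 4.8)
The plan is to recover the permutation matrix $P$ from the second-order differences of $M$, and then peel off the row/column contributions into the vectors $H$ and $V$. Concretely, for indices $1\le i,j\le n-1$ define
\[
D[i,j] = M[i,j] + M[i+1,j+1] - M[i,j+1] - M[i+1,j].
\]
By the unit-Monge hypothesis $D[i,j]\ge 0$, and since each first-order difference of $M$ (in either coordinate) lies in $\{-1,0,1\}$, each $D[i,j]$ lies in $\{0,1,2\}$. The first step is to argue that in fact $D[i,j]\in\{0,1\}$ and that the support of $D$ forms (after rescaling indices by a factor of two, to match the $P[2i',2j']$ in the statement) a partial permutation matrix, which is then completed to a genuine permutation matrix $P$ on $\{1,\dots,n\}$; this is exactly the standard correspondence between unit-Monge matrices and permutations recalled just before the lemma (cf.\ Section~2 of~\cite{Tiskin}). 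The key telescoping identity to keep in mind is that for any $i\le i_0$ and $j\le j_0$,
\[
\sum_{i'=i}^{i_0-1}\sum_{j'=j}^{j_0-1} D[i',j'] = M[i,j] + M[i_0,j_0] - M[i,j_0] - M[i_0,j],
\]
so summing $D$ over a box recovers a box-sum of $M$ up to boundary terms.

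Second, I would define $P$ as a completion of the support of $D$ to a full permutation matrix by adding $1$'s in any rows and columns of $D$ that are ``deficient'', placing them far enough out (e.g.\ in the last rows/columns) so that they do not affect the partial sums $\sum_{i'\ge i,\,j'\ge j}P[2i',2j']$ for the ranges of $i,j$ we care about; a small amount of care with indices handles the factor-of-two rescaling. Let $N[i,j] := \sum_{i'\ge i,\, j'\ge j} P[2i',2j']$. By construction $N$ is itself unit-Monge and its second-order differences agree with those of $M$ on the relevant range, so $M - N$ has all second-order differences equal to zero, i.e.\ $(M-N)[i,j]$ is a sum of a function of $i$ and a function of $j$. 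Setting $H[i] := (M-N)[i,n]$ and $V[j] := (M-N)[n,j] - (M-N)[n,n]$ gives the decomposition $M[i,j] = H[i]+V[j]+N[i,j]$ claimed in the statement.

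Third, I would verify the quantitative constraints: $H$ and $V$ have non-negative entries bounded by $n$. Non-negativity follows because $M$ takes values in $\{0,\dots,n\}$ and $N[i,j]\le M[i,j]$ on the relevant range (choosing the completion and the base points appropriately so that subtracting $N$ never overshoots); the bound by $n$ follows from $M\le n$ together with $N\ge 0$ and the $\pm1$ Lipschitz property of $M-N$, which is again unit-Monge, so its row and column ``profiles'' vary by at most $1$ per step and stay within $[0,n]$.

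The main obstacle I anticipate is purely bookkeeping: reconciling the index rescaling in the statement (the $P[2i',2j']$ pattern, which suggests $P$ is really a permutation on a doubled index set) with the clean second-difference argument, and choosing the completion of the partial permutation so that simultaneously (i) the tail sums $N[i,j]$ are unaffected on the needed range, (ii) $N\le M$ pointwise there, and (iii) $H,V$ come out non-negative and $\le n$. None of these is deep, but getting a single consistent choice of base points and padding that makes all three hold at once is the delicate part; everything else is the standard unit-Monge/permutation dictionary.
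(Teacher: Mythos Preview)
There is a genuine gap in your argument. You assert that the second differences $D[i,j]$ satisfy $D[i,j]\in\{0,1\}$ and that the support of $D$ is a partial permutation matrix, but neither claim is true. Take the $2\times 2$ matrix $M=\begin{pmatrix}1&0\\0&1\end{pmatrix}$: it satisfies all three unit-Monge conditions, yet $D[1,1]=M[1,1]+M[2,2]-M[1,2]-M[2,1]=2$. More generally, the telescoping identity you wrote down, applied to a single row, gives $\sum_{j'} D[i,j']=(M[i,1]-M[i+1,1])-(M[i,n]-M[i+1,n])$, which is bounded by $2$, not by $1$; so a row of $D$ can contain either a single entry equal to $2$ or two entries equal to $1$. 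This is precisely why the statement carries the factor-of-two rescaling: $P$ lives on a $2n\times 2n$ grid, and each entry $D[i,j]\in\{0,1,2\}$ is spread over the corresponding $2\times 2$ block of $P$. Because row and column sums of $D$ are at most $2$, one can distribute the $1$'s within each block so that the resulting $P$ has at most one $1$ in every row and every column of the doubled grid. Your ``completion of a partial permutation'' plan never accounts for the value $2$ and hence cannot produce a permutation matrix whose dominance sums match those of $D$.

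A second, smaller issue: your idea of placing the extra $1$'s ``far enough out (e.g.\ in the last rows/columns)'' so that they do not affect the partial sums is backwards. The sums in the statement are over $i'\ge i$ and $j'\ge j$, so a $1$ placed at a \emph{high} index contributes to \emph{every} $N[i,j]$ with $i,j$ smaller; it is the entries at low indices that affect the fewest dominance sums. In the paper's approach this issue does not arise at all, because the block decomposition makes the identity $\sum_{i'\ge i,\,j'\ge j}P[2i',2j']=\sum_{i'\ge i,\,j'\ge j}D[i',j']$ hold exactly, with no padding needed; $H$ and $V$ are then read off directly as $H[i]=M[i,n]-M[n,n]/2$ and $V[j]=M[n,j]-M[n,n]/2$ from the telescoped box-sum formula.
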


\begin{proof}
We define an $n\times n$ matrix $P'$ as follows:
\[
P'[i,j] = M[i+1,j+1] + M[i,j] - M[i,j+1] - M[i+1,j].
\]
By Monge, clearly $P'[i,j] \geq 0$, and by unit $P'[i,j]\leq 2$. In fact, unit also implies that
the sum in every row and every column of $P'$ is at most 2. To see this for rows, consider
$P'[i,1]+P'[i,2]+\ldots+P'[i,n]$. After telescoping, this is $P'[i,1]-P[i+1,1]+P[i+1,n]-P[i,n]$,
so by unit at most 2 as claimed. Then, consider $\sum_{i' \geq i,j'\geq j} P'[i',j']$. After substituting
the definition of $P'$ and telescoping, this becomes $M[i,j]+M[n,n]-M[i,n]-M[n,j]$.
Hence, if we define $H[i]=M[i,n]-M[n,n]/2$ and $V[j]=M[n,j]-M[n,n]/2$ it holds that
$M[i,j] = H[i]+V[j] + \sum_{i' \geq i, j' \geq j} P'[i',j']$. Finally, we create an $2n \times 2n$
matrix $P$, where every $2\times 2$ block corresponds to a single $P'[i,j]$, that is,
the sum of values in the block is equal to $P'[i,j]$. It is always possible to define
$P$ so that it is a permutation matrix. To see this, consider a row of $P'$. The values
there sum up to at most 2, say $P'[i,j]=P'[i,j']=1$ for some $j<j'$. Then, $P'[i,j]$ should
correspond to a 1 in the first row of its block and $P'[i,j']$ to a 1 in the second row of
its block. If $j=j'$ then in the corresponding block we create two 1s, one per row.
Columns are chosen with a symmetric reasoning.
\end{proof}

Due to the above lemma, we can focus on assigning a label to every row
and column of $P$,
such that given the label of the $i$-th row and the $j$-th column we can
compute
$\sum_{i' \geq i, j'\geq j} P[i',j']$. We call this problem labeling $n\times
n$ unit-Monge matrices for dominance
sum queries.

\begin{lemma}
\label{lem:lowerboundMonge}
Labeling unit-Monge $n\times n$ matrices for dominance sum queries can be done
with $\Oh(\sqrt{n}\log n)$ bits.
\end{lemma}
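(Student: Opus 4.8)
The plan is to reduce the labeling problem for dominance sum queries on a $2n \times 2n$ permutation matrix $P$ to a small number of one-dimensional structures, each labelable with $\tilde{O}(\sqrt{n})$ bits. Write $\Sigma(i,j) = \sum_{i' \geq i,\, j' \geq j} P[i',j']$; by Lemma~\ref{lem:unitperm} it suffices to label rows and columns of $P$ so that $\Sigma(i,j)$ is recoverable, since the additive vectors $H$ and $V$ contribute only $\Oh(\log n)$ bits per label. The key observation is that $\Sigma(i,j)$ counts the nonzero entries of $P$ in the ``bottom-right'' quadrant anchored at $(i,j)$, i.e. it is a two-dimensional dominance count over the set of $2n$ points $\{(a,b) : P[a,b]=1\}$.

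First I would partition the rows $\{1,\ldots,2n\}$ into $\Oh(\sqrt{n})$ contiguous \emph{bands}, each containing $\Theta(\sqrt{n})$ of the nonzero entries of $P$ (this is possible since each row has at most one $1$). Symmetrically partition the columns into $\Oh(\sqrt{n})$ \emph{strips} of $\Theta(\sqrt{n})$ nonzero entries each. For the label of row $i$: record which band $i$ lies in, and, for every one of the $\Oh(\sqrt{n})$ column strips, the number of nonzero entries of $P$ that lie strictly below row $i$ within that strip --- these are $\Oh(\sqrt{n})$ numbers, each $\leq 2n$, so $\Oh(\sqrt{n}\log n)$ bits. For the label of column $j$: record which strip $j$ lies in, and symmetrically, for every one of the $\Oh(\sqrt{n})$ row bands, the number of nonzero entries of $P$ in that band lying weakly to the right of column $j$ --- again $\Oh(\sqrt{n}\log n)$ bits.

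To answer a query $(i,j)$: let $B$ be the band of $i$ and $T$ the strip of $j$. Split the quadrant below-and-right of $(i,j)$ into (a) bands strictly below $B$, whose total over the strip of $j$ is read directly from the column label of $j$ summed over those bands --- actually it is cleaner to read, from the row label of $i$, the count for \emph{all} strips at or right of $T$... Here the labels must be arranged so the sum telescopes: store in the row label, for each strip $T'$, the count of points below row $i$ in strips $T', T'+1, \ldots$, i.e. a suffix sum, so a single lookup gives the contribution of all full bands-below in all strips at or right of $T$; similarly the column label of $j$ stores suffix sums over bands giving the contribution of the points at or right of column $j$ in all bands at or above $B$. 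The only region double-counted or missed is the ``corner'' rectangle $B \times T$, which contains only $\Oh(\sqrt{n})$ nonzero entries; I would handle it by having the row label additionally list the $\Oh(\sqrt{n})$ nonzero entries inside band $B$ with their column coordinates (and symmetrically for column $j$ inside strip $T$, though one side suffices), costing $\Oh(\sqrt{n}\log n)$ more bits, and then compute the corner count by brute force from that list.

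The main obstacle is bookkeeping the inclusion–exclusion so that exactly the right region is counted using only information splittable between the two labels: the row label cannot know $j$'s exact position and vice versa, so the decomposition into (full bands below) $\times$ (full strips right) plus a single $\Oh(\sqrt{n})$-size corner handled via an explicit point list is the device that makes it work. Once that decomposition is fixed, every piece is a suffix sum or a short explicit list, each label has $\Oh(\sqrt{n})$ numbers bounded by $2n$, and the bound $\Oh(\sqrt{n}\log n)$ follows. Adding back the $H[i], V[j]$ terms and the factor-$2$ reindexing from Lemma~\ref{lem:unitperm} changes nothing asymptotically.
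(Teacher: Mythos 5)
Your construction is correct in outline but takes a genuinely different route from the paper. The paper decomposes the permutation into at most $\sqrt{n}$ increasing and at most $\sqrt{n}$ decreasing subsequences (a standard consequence of Erd\H{o}s--Szekeres/Dilworth), and observes that the dominance count restricted to a single monotone chain is recoverable from one $\Oh(\log n)$-bit pointer per label, so each label is just a concatenation of $\Oh(\sqrt{n})$ small pointers. You instead use a geometric blocking: cut the $2n$ ones of $P$ into $\Theta(\sqrt n)$ row bands (and column strips) of $\Theta(\sqrt n)$ ones each, store per-band aggregate counts in the column labels and per-strip counts in the row labels, and carry the local band's explicit point list to resolve the ``corner''. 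Both yield $\Oh(\sqrt n\log n)$; yours is more elementary in that it needs no combinatorial fact about monotone decompositions, while the paper's gives a cleaner decoder (a max or a difference per chain) and avoids any inclusion--exclusion.

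One caution on the decoding you sketch: the inclusion--exclusion is not actually resolved, and as stated it would double-count the full rectangle (bands strictly below $B$) $\times$ (strips strictly right of $T$) --- that set lies entirely inside \emph{both} suffix sums you read off, not just the corner block $B\times T$. Fortunately, you store more than enough to fix it: since the row label of $i$ already carries $B$ and the explicit list of ones in band $B$, the decoder can compute the number of ones in band $B$ with $a\ge i$ and $b\ge j$ directly from the list, and add $\sum_{B'>B} M_{B'}(j)$ read from the column label; bands strictly above $B$ contribute nothing since $a<i$ there. This drops the strip-suffix-sum $N_{T'}(i)$ machinery entirely, stays within your $\Oh(\sqrt n\log n)$ budget, and uses only a subset of what you proposed to store.
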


\begin{proof}
We can assume that there is exactly one 1 in every row and column of $P$.
Therefore, the input is fully described
by a permutation $\pi$. Any permutation on $n$ elements can be decomposed by up
to $\sqrt{n}$ increasing
subsequences $I_{1},I_{2},\ldots$ and up to $\sqrt{n}$ decreasing subsequences
$D_{1},D_{2},\ldots$.
The label of every row and every column consists of $\Oh(\log n)$ bits stored
for every such subsequence,
thus $\Oh(\sqrt{n}\log n)$ bits in total. We think of every subsequence as a
set of points
$(x_{1},y_{1}),(x_{2},y_{2}),\ldots$ and the $\Oh(\log n)$ bits corresponding
to this subsequence
in the label of the $i$-th row and the $j$-th column should be enough to
determine
the number of points $(x_{k},y_{k})$ such that $x_{k}\geq i$ and $y_{k}\geq j$.
We separately describe
what should be stored for an increasing subsequence and then for a decreasing
subsequence.

Consider an increasing subsequence consisting of points
$(x_{1},y_{1}),(x_{2},y_{2}),\ldots$, such that
$x_{k}<x_{k+1}$ and $y_{k}<y_{k+1}$ for every $k=1,2,\ldots$. Then, the label
of the $i$-th row
stores the smallest $k$ such that $x_{k}\geq i$, and similarly the label of the
$j$-th row stores
the smallest $k$ such that $y_{k}\geq j$. By taking the maximum of these two
numbers
we can determine the number of points $(x_{k},y_{k})$ such that $x_{k}\geq i$
and $y_{k}\geq j$.

Now consider a decreasing subsequence consisting of points
$(x_{1},y_{1}),(x_{2},y_{2}),\ldots$, such that
$x_{k}<x_{k+1}$ and $y_{k}>y_{k+1}$ for every $k=1,2,\ldots$. Then, the label
of the $i$-th row
stores the smallest $k$ such that $x_{k}\geq i$. The label of the $j$-th row
stores the largest
$k$ such that $y_{k}\geq j$. Denoting the number stored for the $i$-th row and
the $j$-th row
by $\ell$ and $r$, respectively, the number of points $(x_{k},y_{k})$ such that
$x_{k}\geq i$ and
$y_{k}\geq j$ can be calculated as $\max(0,r-\ell+1)$.
\end{proof}

\begin{lemma}
\label{lem:upperboundMonge}
Labeling unit-Monge $n\times n$ matrices for dominance sum queries requires
$\Theta(\sqrt{n})$ bits.
\end{lemma}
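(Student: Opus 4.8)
The plan is to reduce from the incompressibility of arbitrary Boolean matrices. I will embed an arbitrary $m\times m$ Boolean matrix $B$, with $m=\lfloor\sqrt n\rfloor$, into an $n\times n$ partial permutation matrix $P_B$ in such a way that all of $B$ can be recovered from the dominance‑sum labels of a fixed set of only $2m$ rows and columns. Since there are $2^{m^2}$ choices of $B$ and the map from $B$ to this $2m$‑tuple of labels will be injective, some label must have at least $m^2/(2m)=m/2=\Omega(\sqrt n)$ bits; together with Lemma~\ref{lem:lowerboundMonge} this yields the claimed $\Theta(\sqrt n)$ bound.

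First I would set up the embedding. Partition the first $m^2$ rows into $m$ contiguous bands $\mathcal R_1,\dots,\mathcal R_m$ of size $m$ each, with $\mathcal R_s$ beginning at $a_s=(s-1)m+1$, and likewise partition the first $m^2$ columns into bands $\mathcal C_1,\dots,\mathcal C_m$ beginning at $c_t=(t-1)m+1$; every remaining row and column of $P_B$ is empty. For each pair $(s,t)$ with $B[s,t]=1$, I place a single $1$ of $P_B$ inside the block $\mathcal R_s\times\mathcal C_t$, in row $a_s+|\{t'<t:B[s,t']=1\}|$ and column $c_t+|\{s'<s:B[s',t]=1\}|$. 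The bulk of the verification is that $P_B$ is a legal partial permutation matrix: both rank counters are strictly less than $m$, so every placed $1$ stays inside its intended $m\times m$ block; two $1$‑entries of $B$ in distinct rows (resp.\ columns) land in distinct row‑bands (resp.\ column‑bands) of $P_B$, hence in distinct rows (resp.\ columns); and two $1$‑entries in the same row (resp.\ column) of $B$ land in distinct rows (resp.\ columns) because the relevant rank counter is strictly increasing along the $1$s. Thus every row and column of $P_B$ contains at most one $1$, so $P_B$ is an admissible instance of the dominance‑sum labeling problem (concretely, the matrix $M[i,j]:=\sum_{i'\ge i,\,j'\ge j}P_B[i',j']$ then satisfies the unit‑Monge inequalities, exactly as in the remark preceding Lemma~\ref{lem:unitperm}).

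Next comes the decoding identity. Because the $1$ originating from a pair $(s',t')$ sits in the block $\mathcal R_{s'}\times\mathcal C_{t'}$, the dominance sum anchored at a band corner satisfies $\sum_{i'\ge a_s,\,j'\ge c_t}P_B[i',j']=\sum_{s'\ge s}\sum_{t'\ge t}B[s',t']$. A two‑dimensional differencing then recovers each entry: $B[s,t]$ equals the alternating sum of the four dominance values at $(a_s,c_t)$, $(a_{s+1},c_t)$, $(a_s,c_{t+1})$, $(a_{s+1},c_{t+1})$, where any term whose index exceeds $m$ is simply $0$ and needs no label. Hence all of $B$ is determined by the labels of the $m$ rows $a_1,\dots,a_m$ and the $m$ columns $c_1,\dots,c_m$ alone. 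If every label had at most $L$ bits there would be fewer than $2^{2m(L+1)}$ possible $2m$‑tuples of labels, and since the map $B\mapsto(\text{tuple})$ is injective this forces $2m(L+1)\ge m^2$, i.e.\ $L=\Omega(\sqrt n)$.

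I expect the only real obstacle to be the bookkeeping in the second step — checking that the rank‑based placement keeps each $1$ inside the intended $m\times m$ block and that distinct $1$‑entries of $B$ never collide in a row or a column — since this is precisely what makes $P_B$ both a valid instance and one whose band‑corner dominance sums have the clean ``sum over the bottom‑right rectangle of $B$'' form used in the decoding identity. Everything else is the routine labeling‑scheme incompressibility argument in the style of Gavoille et al.~\cite{GPPR04}, and matching the bound against the $\Oh(\sqrt n\log n)$ upper bound of Lemma~\ref{lem:lowerboundMonge} completes the $\Theta(\sqrt n)$ claim.
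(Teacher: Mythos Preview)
Your proof is correct and follows essentially the same approach as the paper: both embed an arbitrary $\lfloor\sqrt n\rfloor\times\lfloor\sqrt n\rfloor$ Boolean matrix $B$ into an $n\times n$ (partial) permutation matrix by placing at most one $1$ in each $\sqrt n\times\sqrt n$ block and then recover $B$ from the dominance sums at the block corners, concluding by the standard counting argument. The only cosmetic difference is that you give an explicit rank-based placement formula, whereas the paper describes the equivalent greedy incremental construction; your formula is exactly what the paper's process produces if one always picks the smallest available row and column.
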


\begin{proof}
We conceptually divide an $n\times n$ matrix $P$ into blocks of size
$\sqrt{n}\times \sqrt{n}$, thus
creating an $\sqrt{n}\times \sqrt{n}$ matrix $B$, where every entry $B[i,j]$
corresponds to a block
of $P$. For every block $B[i,j]$ we choose one bit $b[i,j]$. We will show that
then it is always possible
to construct the matrix $P$, such that all bits $b[i,j]$ can be retrieved from the
labels of rows of the
form $1+\alpha\cdot \sqrt{n}$ and columns of the form $1+\alpha\cdot \sqrt{n}$.
Then it follows
that we can encode $n$ bits of information in $2\sqrt{n}$ labels, hence one of
these labels must consist
of $\frac{1}{2}\sqrt{n}$ bits. It remains to construct $P$.

We construct $P$ incrementally.
We call a row or a column of $P$ active if there is no 1 there. We start with
an empty $P$ and keep
adding 1s there while making ensure that there is at most one 1 in every row
and column.
Given the labels of all rows $1+\alpha\cdot \sqrt{n}$ and all columns of the
form
$1+\alpha\cdot \sqrt{n}$ we can count 1s in every block of $P$. The goal is to
ensure that this
count is equal to $b[i,j]$. Assume that this is already the case for every
$b[i,j]$ such that
$i<i'$ or $i=i'$ and $j<j'$ and consider $b[i',j']$. If $b[i',j']=0$ we
continue. Otherwise, we have
to choose exactly one active row $r$ in the range $[1+i'\cdot
\sqrt{n},(i'+1)\cdot\sqrt{n}]$ and
exactly one active column $c$ in the range $[1+j'\cdot
\sqrt{n},(j'+1)\cdot\sqrt{n}]$, and
set $P[r,c]=1$, thus making both $r$ and $c$ inactive. This clearly guarantees
that there is
exactly one 1 in the corresponding block of $P$. The only problem is to
guarantee that there
is at least one active row and column in the appropriate ranges. However, we
have deactivated 
less than $i'$ rows in the range $[1+i'\cdot \sqrt{n},(i'+1)\cdot\sqrt{n}]$ so
far, and similarly less
than $j'$ columns in the range $[1+j'\cdot \sqrt{n},(j'+1)\cdot\sqrt{n}]$, so
indeed there is at
least one active row and column that we can use.
\end{proof}

\bibliographystyle{abbrv}

\end{document}